\documentclass{article}
\usepackage{fullpage,epsfig}
\usepackage{color}
\usepackage{bookmark}
\usepackage{subfigure}
\usepackage{amssymb,amsthm,amsmath}
\usepackage{bbm}

\usepackage[round,sort]{natbib}

\usepackage{hyperref}
\hypersetup{
    colorlinks=true,
    linkcolor=blue,
    citecolor=blue,
    urlcolor=blue
}
\renewcommand{\eqref}[1]{(\ref{#1})}

\newcommand{\C}{{\mathbb C}}

\newcommand{\E}{{\mathbb E}}
\newcommand{\bbS}{{\mathbb S}}
\newcommand{\bbF}{{\mathbb F}}

\newcommand{\calN}{{\mathcal N}}

\newcommand{\calS}{{\mathcal S}}

\newcommand{\Prob}{{\rm Prob}}

\newcommand{\tr}{{\rm tr}}
\newcommand{\re}{{\rm Re}}
\newcommand{\im}{{\rm Im}}
\newcommand{\dist}{{\rm dist}}

\newcommand{\vx}{{\mathbf x}}

\newcommand{\vz}{{\mathbf z}}

\newcommand{\vb}{{\mathbf b}}
\newcommand{\vv}{{\mathbf v}}
\newcommand{\vu}{{\mathbf u}}

\newcommand{\vh}{{\mathbf h}}
\newcommand{\vs}{{\mathbf s}}
\newcommand{\vg}{{\mathbf g}}
\newcommand{\ve}{{\mathbf e}}
\newcommand{\0}{{\mathbf 0}}

\title{The Fusion Frame Phase Retrieval}
\author{
Haixia Liu\thanks{School of Mathematics and Statistics  \& Institute of Interdisciplinary Research for Mathematics and Applied Science \& Hubei Key Laboratory of Engineering Modeling and Scientific Computing, Huazhong University of Science and Technology, Wuhan, Hubei, China. Email: liuhaixia@hust.edu.cn. The work of H.X. Liu was supported in part by Interdisciplinary Research Program of HUST 2024JCYJ005, National Key Research and Development Program of China 2023YFC3804500.},
Bing Gao\thanks{School of Mathematical Sciences, Nankai University, Tianjin 300071, China. Email: gaobing@naikai.edu.cn. The work of B. Gao was supported in part by National Natural Science Foundation of China No. 12001297.},
Yang Wang\thanks{Department of Mathematics, The University of Hong Kong, Pokfulam Road, Hong Kong. Email:yang.wang@hku.hk.}}
\usepackage{amsthm}
\usepackage{multirow}
\usepackage{natbib}
\usepackage{algorithm,algpseudocode}
\newtheorem{theorem}{Theorem}[section]

\newtheorem{lemma}[theorem]{Lemma}
\newtheorem{definition}[theorem]{Definition}

\numberwithin{equation}{section}
\allowdisplaybreaks

\begin{document}
\maketitle
\begin{abstract}
The phase retrieval problem involves reconstructing a function or signal solely from the magnitude of linear measurements. Most theoretical analyses of phase retrieval algorithms rely on i.i.d. Gaussian random measurements or sub-Gaussian random measurements. In this paper, our focus is on the fusion frame phase retrieval problem, where the sampling matrices are i.i.d. rank-$r$ orthogonal projections drawn from the Haar measure. We present concentration inequalities for functions on the set of rank-$r$ orthogonal projection matrices. These inequalities are crucial for the theoretical analysis of the fusion frame phase retrieval problem. Based on these inequalities, we demonstrate that gradient descent, combined with a two-stage initialization, achieves linear convergence to the target signal up to a global phase with a measurement complexity of $O(d\log^2 d)$ when the rank $r = O(1)$. We verify this convergence through numerical results.
\end{abstract}
% \keywords{Phase retrieval, fusion frame, orthogonal projection matrix, convergence}
% \maketitle
	\section{Introduction}
	\setcounter{equation}{0}
	The classic phase retrieval problem involves the reconstruction of a function from the magnitude of its Fourier transform. Let $f(\vx) \in L^2(\mathbb{R}^d)$, and it is well-known that $f$ can be uniquely reconstructed from its Fourier transform, denoted as $\mathcal{F}f$. However, in certain applications such as $X$-ray crystallography, only the magnitude $|\mathcal{F}f|$ of the Fourier transform can be measured, while the phase information is lost. This raises the question of whether it is possible to reconstruct $f$, thereby recovering the lost phase information, up to some inherent ambiguities such as translation and reflection.
	
	This finite-dimensional formulation of the phase retrieval problem has gained significant attention in recent years. In our paper, we address the phase retrieval problem in this setting. Specifically, given a frame $\mathcal{F} = \{\mathbf{f}_1, \mathbf{f}_2, \ldots, \mathbf{f}_N\}$ in $\bbF^d$, where $\bbF$ can be either $\mathbb{R}$ or $\mathbb{C}$, the goal is to reconstruct $\mathbf{x}$ from the magnitude of the inner products, denoted as $|\langle \mathbf{x}, \mathbf{f}_j \rangle|,j=1,\ldots,N$. 
	
	A particular class of phase retrieval problems that has received extensive research attention is when the measurements are i.i.d. Gaussian random measurements. For such problems, efficient reconstructive algorithms based on convex relaxation techniques have been developed. Some notable methods in this context include PhaseLift \citep{candes2014solving,candes2015phase1}, PhaseCut and MaxCut \citep{ waldspurger2015phase}, PhaseMax \citep{goldstein2018phasemax}, and the work by Bahmani and Romberg \citeyearpar{bahmani2017phase}.

	These convex methods have demonstrated promising performance in solving phase retrieval problems with i.i.d. Gaussian measurements. However, they can become computationally challenging for large-dimensional problems or high computational complexity. As a result, various non-convex optimization approaches have been developed to address these challenges and improve efficiency.
	
	Among the non-convex methods, AltMinPhase \citep{netrapalli2015phase} and Kaczmarz \citep{wei2015solving} estimate the missing phase information. AltMinPhase solves the phase retrieval problem through the least squares method, while Kaczmarz employs the Kaczmarz method. It has been shown that a resampled version of AltMinPhase requires $O(d\log^3 d)$ i.i.d. random Gaussian measurements to geometrically converge to the true solution up to a unimodular scalar. The Wirtinger Flow (WF) algorithm introduced by Cand\`es et al. \citeyearpar{candes2015phase} is another non-convex method that is guaranteed to converge linearly to the global minimizer for Gaussian measurements when the number of measurements $N$ is on the order of $O(d\log d)$.
	
	Various other techniques, such as truncated methods \citep{chen2015solving, wang2017solving}, have been developed to improve efficiency and robustness, particularly when the number of measurements $N$ is on the order of $O(d)$ for Gaussian measurements. Additionally, Riemannian Optimization \citep{wei2016Guarantees,cai2024solving}, Gauss-Newton's method \citep{gao2017phaseless}, rank-$1$ alternating minimization algorithm \citep{cai2019fast}, and composite optimization algorithm \citep{duchi2019solving} have provided theoretical convergence analysis for Gaussian random measurements.
	
	Some of the aforementioned methods, such as the WF algorithm, also apply to Fourier measurements with a specially designed random mask known as the Coded Diffraction model \citep{candes2015phase}. However, it is worth noting that these models represent a small fraction of the phase retrieval problem space, and comparatively fewer guarantees are available for structured non-Gaussian measurement ensembles. In order to address this gap, recent research has focused on analyzing the phase retrieval problem with sub-Gaussian measurements \citep{gao2021phase,krahmer2020complex, li2021phase}. 
	
	The phase retrieval problem is a specific instance of the broader task of recovering a vector $\mathbf{x} \in \bbF^d$ from quadratic measurements of the form $\{\mathbf{x}^* A_j \mathbf{x}\}_{j=1}^N$, where each $A_j$ is a Hermitian matrix in $\bbF^{d \times d}$.  This problem, known as the generalized phase retrieval problem, was initially explored from a theoretical perspective by Wang and Xu \citeyearpar{wang2019generalized}. Prior to that, specific cases of this problem had been investigated, such as the one involving orthogonal projection matrices $\{A_j\}_{j=1}^N$, which was studied by other researchers \citep{edidin2017projections, heinosaari2013quantum, cahill2016phase}. This paper adopts an algorithmic viewpoint on the fusion frame phase retrieval problem, which we formally define as follows:
	
	\medskip
	\noindent
	{\bf The Fusion Frame Phase Retrieval Problem.}~~{\em Let $A_1,A_2,\ldots,A_N$ be i.i.d. rank-$r$ orthogonal projections in $\C^d$ drawn from the Haar measure. Can we reconstruct any $\vx\in\C^d$ up to a unimodular scalar from $\{\vx^*A_j\vx\}_{j=1}^N$, and if so, how?}
	
	\medskip
	
	Our aim is to demonstrate that the gradient descent method, coupled with two-stage initialization in solving fusion frame phase retrieval, achieves linear convergence to the global minimizer. To sum up, our contributions in this work include:
	
	\noindent
	\textbf{We give the concentration inequalities of functions on the set of rank-$r$ orthogonal projection matrices.} To establish the convergence of functions on the set of rank-$r$ orthogonal projection matrices, concentration inequalities play a crucial role. These concentration inequalities are provided in Theorems \ref{theo:concentration1} and \ref{theo:concentration1-1}, as well as Lemmas \ref{lem:concentration1} and \ref{lem:concentration1-1}.
	
	\noindent
	\textbf{We provide the proof of linear convergence for the fusion frame phase retrieval problem.} 
	To ensure linear convergence, it is crucial to initialize the algorithm near the global minimizer. To this end, we consider two-stage initialization. The gradient descent achieves linear convergence with a measurement complexity of $O(d\log^2 d)$ when the rank $r = O(1)$.
	
	The remainder of this paper is structured as follows. In Section \ref{sec:convergence}, we establish the linear convergence of the gradient descent method with a two-stage initialization to the global minimizer. Subsequently, Section \ref{sec:numerical} presents the results of numerical experiments. To further support our findings,  Sections \ref{sec:local_curvature_smoothness} and \ref{sec:proof_theorem} contain the detailed proofs of theorems and Section \ref{sec:concentration} presents concentration inequalities based on those i.i.d. rank-$r$ orthogonal projections in $\C^d$ drawn from the Haar measure with $1\le r<d$. Additional supplementary information can be found in the appendix. 
	%########################################################################
	\section{Convergence Analysis}
	\label{sec:convergence}
	Throughout this paper, $A_1,\ldots,A_N$ are i.i.d. rank-$r$ orthogonal projections drawn from the Haar measure, where $1\le r<d$, and $y_j=\vx^*A_j\vx,j=1,\ldots,N$. Define
	\[
	\varepsilon_0=\sqrt{\frac{10\zeta}{27\lambda}},\qquad
	\varepsilon_1=\sqrt{\frac{10}{27}},
	\]
	where  
	\begin{equation*}
		\lambda=\frac{r(r+1)}{d(d+1)},\qquad \mu=\frac{dr^2-r}{d(d+1)(d-1)},\qquad
		\zeta=\lambda-\mu=\frac{r(d-r)}{d(d+1)(d-1)}.
	\end{equation*}
	These parameters satisfy
	\[
	0<\zeta\le\mu<\lambda\le1,
	\qquad
	\mu-\zeta=\frac{r(r-1)}{d(d-1)}\ge0.
	\]
	We consider
	\begin{equation}\label{model:PR-optimization}
		E(\vz)=\frac{1}{2N}\sum_{j=1}^N(\vz^*A_j\vz-y_j)^2,\ \widetilde E(\vz)=\frac{1}{2N}\sum^{N}_{j=1}\left(\vz^* A_j\vz-y_j\right)^2-\frac\mu2\left(\|\vz\|^2-\frac{d}{rN}\sum^N_{j=1}y_j\right)^2,
	\end{equation}
	and the corresponding Wirtinger gradients about $\vz$ are
	\[
	\nabla_{\vz} E(\vz)=\frac1N\sum_{j=1}^N(\vz^*A_j\vz-y_j)A_j\vz,\qquad\nabla_{\vz} \widetilde E(\vz)=\frac1N\sum_{j=1}^N(\vz^*A_j\vz-y_j)A_j\vz-\mu\left(\|\vz\|^2-\frac{d}{rN}\sum^N_{j=1}y_j\right)\vz.
	\] 
	
	\begin{definition}
		For any $\vz\in\C^d$, let $\theta(\vz)\in\arg\min_{\theta\in[0,2\pi)}\|\vz-\vx e^{i\theta}\| $
		and define $
		\dist(\vz,\vx)=\|\vz-\vx e^{i\theta(\vz)}\| $. 
		For any $\varepsilon>0$, set $
		\calS(\vx,\varepsilon)=\{\vz\in\C^d:\dist(\vz,\vx)\le\varepsilon\|\vx\| \}$.
	\end{definition} 
	
	\begin{theorem}[Local linear convergence]
		\label{theo:convergence}
		Let $0<\delta<1$ and let $\vz_0\in\calS(\vx,\varepsilon_0)$. There exist constants $\gamma_{0},C_{0,\delta},c_{0,\delta}>0$ such that, if
		\begin{equation}\label{eq:N2}
			N\ge C_{0,\delta}d\left(\frac{\lambda}{\zeta}\right)^2\log^2d,
		\end{equation}
		with probability at least
		$
		1-7e^{-c_{0,\delta}d}-e^{-\gamma_{0}d\log^2d}
		-Nd^{-7r}$, 
		the iterates generated by gradient descent on $E$ satisfy
		\begin{equation}\label{eq:local_linear_rate}
			\dist^2(\vz_{k+1},\vx)\le
			\left(1-\frac{\xi\zeta(1-\delta)}4\|\vx\| ^2\right)
			\dist^2(\vz_k,\vx),\qquad k\ge0,
		\end{equation}
		where $\xi$ is the step size, satisfying
		\[
		0<\xi\le\frac{2\zeta}{L_{0,\delta}\lambda\|\vx\|^2},
		\]
		with $L_{0,\delta}=\max\left\{120\frac{(1+\delta)^2}{1-\delta},270(1+\delta)\right\}$. In particular, $\vz_k\in\calS(\vx,\varepsilon_0)$ for every $k\ge0$.
	\end{theorem}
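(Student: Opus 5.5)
The plan is the standard Wirtinger-flow route: prove a local regularity condition on $\calS(\vx,\varepsilon_0)$ and then run a one-step contraction argument. For $\vz\in\calS(\vx,\varepsilon_0)$ write $\vh=\vz-\vx e^{i\theta(\vz)}$; by the global phase invariance of $E$ we may take $\theta(\vz)=0$. We aim to establish two inequalities, holding simultaneously for all such $\vz$ with the stated probability. The first is a \emph{local curvature} condition,
\[
\re\langle\nabla_{\vz}E(\vz),\vh\rangle\ \ge\ \frac{\zeta(1-\delta)}{8}\|\vx\|^2\|\vh\|^2+\frac{1}{L_{0,\delta}\lambda\|\vx\|^2}\cdot\frac{\zeta}{\zeta}\,\|\nabla_{\vz}E(\vz)\|^2 ,
\]
with constants adjusted as needed. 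The second is a \emph{local smoothness} condition, $\|\nabla_{\vz}E(\vz)\|^2\le L_{0,\delta}\lambda\|\vx\|^2\cdot(\text{const})\,\|\vx\|^2\|\vh\|^2$.

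Given these, the contraction is mechanical. Since $\dist^2(\vz_{k+1},\vx)\le\|\vz_k-\xi\nabla E(\vz_k)-\vx e^{i\theta(\vz_k)}\|^2$, we get
\[
\dist^2(\vz_{k+1},\vx)\le\|\vh\|^2-2\xi\,\re\langle\nabla E,\vh\rangle+\xi^2\|\nabla E\|^2 .
\]
Inserting the curvature bound, the step-size restriction $\xi\le 2\zeta/(L_{0,\delta}\lambda\|\vx\|^2)$ makes the $\xi^2\|\nabla E\|^2$ term absorbed by the gradient term. This leaves the factor $1-\xi\zeta(1-\delta)\|\vx\|^2/4$. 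Because this factor is below $1$, $\vz_{k+1}$ stays in $\calS(\vx,\varepsilon_0)$, and induction gives the claim for all $k$.

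To prove the two conditions, expand the gradient. Using $\vz^*A_j\vz-\vx^*A_j\vx=2\re(\vx^*A_j\vh)+\vh^*A_j\vh$, we obtain
\[
\re\langle\nabla E,\vh\rangle=\frac1N\sum_j\bigl(2\re(\vx^*A_j\vh)+\vh^*A_j\vh\bigr)\bigl(\re(\vx^*A_j\vh)+\vh^*A_j\vh\bigr).
\]
This is a sum of quadratic, cubic and quartic terms in $\vh$. The expectations follow from the Haar moments $\E[(\vu^*A\vv)(\vw^*A\vs)]$, which are expressed through $\lambda$ and $\mu$. The key fact is that the expectation of the quadratic part is at least $\zeta\bigl(\|\vx\|^2\|\vh\|^2+\ldots\bigr)+\mu(\re\,\vx^*\vh)^2$-type terms. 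We also use that $\im(\vx^*\vh)=0$ by optimality of $\theta$, which is where the lower bound $\propto\zeta$ comes from.

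The concentration inequalities (Theorems \ref{theo:concentration1}, \ref{theo:concentration1-1} and Lemmas \ref{lem:concentration1}, \ref{lem:concentration1-1}) then control the deviations.
\begin{itemize}
\item For the quadratic form $\frac1N\sum_j A_j\vx\vx^*A_j$ we need deviation at most $\delta\zeta\|\vx\|^2$ in operator norm. This costs $N\gtrsim d(\lambda/\zeta)^2\log^2 d$ samples and is the probability term $e^{-c d}$.
\item The fourth-order terms $\frac1N\sum(\vh^*A_j\vh)^2$ are bounded above uniformly, using the truncation event $\max_j\|A_j\vx\|$ controlled. This is where the $Nd^{-7r}$ and $e^{-\gamma_0 d\log^2 d}$ terms arise.
\item The cubic terms are handled by Cauchy–Schwarz, combined with $\|\vh\|\le\varepsilon_0\|\vx\|$.
\end{itemize}
The choice $\varepsilon_0=\sqrt{10\zeta/(27\lambda)}$ is exactly what makes the cubic and quartic terms (of size $\lambda\|\vh\|^3\|\vx\|$ and $\lambda\|\vh\|^4$) smaller than the $\zeta\|\vx\|^2\|\vh\|^2$ curvature.

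Smoothness follows similarly. We bound $\|\nabla E\|=\sup_{\|\vu\|=1}\re\langle\nabla E,\vu\rangle$ by Cauchy–Schwarz, $\bigl(\frac1N\sum(\vz^*A_j\vz-y_j)^2\bigr)^{1/2}\bigl(\frac1N\sum|\vu^*A_j\vz|^2\bigr)^{1/2}$. Each factor is controlled by the same concentration results at scale $\lambda$, giving $\|\nabla E\|^2\lesssim\lambda^2\|\vx\|^2\|\vh\|^2$. The combinations of $(1+\delta)$, $(1-\delta)$ and the numerical constants $120$ and $270$ then produce $L_{0,\delta}$.

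The main obstacle is the curvature lower bound with the small constant $\zeta$ instead of $\lambda$. Since $\zeta$ can be much smaller than $\lambda$ (of order $1/d^2$ versus $r^2/d^2$), the deviations must be controlled at relative accuracy $\zeta/\lambda$, uniformly over the region $\calS(\vx,\varepsilon_0)$. I would first try to reduce everything to the uniform operator-norm concentration of $\frac1N\sum_j(\vx^*A_j\vx)A_j$ and $\frac1N\sum_j A_j\vx\vx^*A_j$ around their expectations $\lambda$- and $\mu$-combinations of $\vx\vx^*$ and $I$, using the earlier concentration theorems. The quartic term would be handled by a one-sided bound only, since only an upper bound is needed there.
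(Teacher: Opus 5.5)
\paragraph{The gap: the cubic term in the curvature bound.} Your contraction step is fine, and it matches the paper's. The gap is in the curvature bound, specifically the cubic term $\frac3N\sum_jR_jq_j$, where $R_j=\re(\vh^*A_j\vx)$ and $q_j=\vh^*A_j\vh\ge0$. Cauchy--Schwarz combined with a uniform upper bound on the quartic sum cannot give curvature on all of $\calS(\vx,\varepsilon_0)$.

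Your size heuristic already shows the problem. At $\|\vh\|=\varepsilon_0\|\vx\|$, a term of size $\lambda\|\vh\|^3\|\vx\|$ equals $\sqrt{10\lambda\zeta/27}\,\|\vx\|^2\|\vh\|^2$, which is of order $\sqrt{\lambda/\zeta}$ times the curvature $\zeta\|\vx\|^2\|\vh\|^2$. Size alone therefore only gives a radius of order $(\zeta/\lambda)\|\vx\|$.

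The radius $\sqrt{\zeta/\lambda}$ is reachable for a structural reason. For $\|\vh\|=1$ and $\im(\vx^*\vh)=0$ one has $\E[\re(\vh^*A\vx)\,\vh^*A\vh]=\lambda a$ with $a=\re(\vx^*\vh)$. So the cubic term lives along $\vx$, where the quadratic part contributes $(2\lambda-\zeta)a^2$. The completed square $\lambda(a^2-3sa+\tfrac{9}{10}s^2)\ge-\tfrac{27}{20}\lambda s^2$, which spends $\tfrac{9}{10}$ of the quartic mean, is exactly where $10/27$ comes from.

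Cauchy--Schwarz throws this correlation away. With $Q=\frac1N\sum_jR_j^2$ and $F=\frac1N\sum_jq_j^2$, you only get $\re\langle\nabla_{\vz}E,\vh\rangle\ge 2Q-3\sqrt{QF}+F=(2\sqrt Q-\sqrt F)(\sqrt Q-\sqrt F)$. This is negative whenever $Q<F<4Q$, so you would need $F$ to be uniformly small compared with $Q\approx\frac\zeta2\|\vx\|^2\|\vh\|^2$ (for $\vh\perp\vx$). That fails at $N\asymp d(\lambda/\zeta)^2\log^2d$:
\begin{itemize}
\item Take $\vh$ on a ray in the range of $A_1$, nearly orthogonal to $\vx$. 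Then $F\ge\|\vh\|^4/N$, and $N\lambda\to0$ when $r=O(1)$. So $F/Q$ sweeps through $(1,4)$ before $\|\vh\|$ reaches $\varepsilon_0\|\vx\|$, and your lower bound goes negative there.
\item Truncating $\vx^*A_j\vx$ says nothing about $\vh^*A_j\vh$ for arbitrary $\vh$.
\item Concentration of $\frac1N\sum_j(\vx^*A_j\vx)A_j$ and $\frac1N\sum_jA_j\vx\vx^*A_j$ only controls the part that is quadratic in $\vh$.
\end{itemize}
The only uniform quartic bound available is $F\le(1+\delta)\frac rd\|\vh\|^4$. With it, your argument closes only for $\|\vh\|\lesssim\sqrt{\zeta d/r}\,\|\vx\|\le d^{-1/2}\|\vx\|$, whereas $\varepsilon_0\asymp(r+1)^{-1/2}$.

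\paragraph{The missing idea and two smaller corrections.} What you are missing is the paper's per-sample completion of the square (proof of Lemma~\ref{lem:lem_regularity_cond}):
\[
2R_j^2+3R_jq_j+q_j^2=\bigl(\sqrt{5/2}\,R_j+\sqrt{9/10}\,q_j\bigr)^2-\tfrac12R_j^2+\tfrac1{10}q_j^2 .
\]
\begin{itemize}
\item The squared summands $U_j$ are nonnegative. Their average therefore has a sub-Gaussian \emph{lower} tail governed only by $\E U_j^2\lesssim\lambda^2\|\vx\|^4\|\vh\|^4$, via Bentkus' one-sided bound. This needs no truncation and no upper control of $q_j$.
\item Their mean is at least $\frac\zeta2\|\vx\|^2\|\vh\|^2+\frac12\E R_j^2$ whenever $\|\vh\|\le\varepsilon_0\|\vx\|$, by the completed square above (Lemma~\ref{lem:initial_bound}).
\item A net over directions with $\im(\vh^*\vx)=0$ and over $s=\|\vh\|$ then costs $N\gtrsim d(\lambda/\zeta)^2\log^2d$. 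This is the source of the $e^{-\gamma_0d\log^2d}$ term (Lemma~\ref{lem:square_estiate}).
\item The $-\frac12R_j^2$ piece is quadratic in $\vh$. It is bounded above by the concentration of $H(\vx)$ (Lemma~\ref{lem:concentration1-1}), which cancels the $+\frac12\E R_j^2$ up to $\frac{\delta\zeta}{4}\|\vx\|^2\|\vh\|^2$. This is where $Nd^{-7r}$ enters.
\item The nonnegative $\frac1{10N}\sum_jq_j^2$ is never estimated. It is kept and matched against the same term on the right of the smoothness bound.
\end{itemize}
On the smoothness side, you need only level $\lambda$, which is what the step $\xi\le 2\zeta/(L_{0,\delta}\lambda\|\vx\|^2)$ is calibrated to. Level $\lambda^2$ fails for the same range-of-$A_1$ reason. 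Correspondingly, the coefficient of $\|\nabla_{\vz}E\|^2$ in your regularity inequality must be $\zeta/(L_{0,\delta}\lambda\|\vx\|^2)$, not $1/(L_{0,\delta}\lambda\|\vx\|^2)$.
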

	
	\begin{proof}
		The proof is deferred to Subsection \ref{subsec:proof-convergence}.
	\end{proof}
	%########################################################################
	From Theorem \ref{theo:convergence}, we require the initial guess to be $\vz_0\in\calS(\vx,\varepsilon_0)$ with $\varepsilon_0 = \sqrt{\frac{10\zeta}{27\lambda}}$ to guarantee convergence. In the following, we explore the two-stage initialization in Subsection \ref{subsec:2stage}.
	\subsection{Two-Stage Initialization}\label{subsec:2stage}
	\begin{algorithm}[ht]
		\caption{Two-stage initialization}
		\label{alg:two_stage_initialization}
		\begin{algorithmic}[1]
			\Require Measurements $\{y_j\}_{j=1}^N$, projections $\{A_j\}_{j=1}^N$, step size $\widetilde \xi>0$, and iteration number $B$.
			\State Form $Y=N^{-1}\sum_{j=1}^Ny_jA_j$ and $\rho^2=d(rN)^{-1}\sum_{j=1}^Ny_j$. Here $\rho$ denotes the nonnegative square root. 
			\State Let $\widehat\vz_0$ be a unit leading eigenvector of $Y$, and set $\widetilde\vz_0=\rho\widehat\vz_0$.
			\State Define 
			\[\widetilde E(\vz)=(2N)^{-1}\sum_{j=1}^N(\vz^*A_j\vz-y_j)^2-\mu(\|\vz\| ^2-\rho^2)^2/2.\]
			\For{$b=0,\ldots,B-1$}
			\State $\widetilde\vz_{b+1}=\widetilde\vz_b-\widetilde\xi\nabla_{\vz}\widetilde E(\widetilde\vz_b)$.
			\EndFor
			\State \Return the trajectory $\widetilde\vz_0,\ldots,\widetilde\vz_B$.
		\end{algorithmic}
	\end{algorithm}
	
	In this subsection, we present a two-stage initialization algorithm in Algorithm \ref{alg:two_stage_initialization}. Since $\E y_j=(r/d)\|\vx\|^2$, the quantity $\rho$ estimates $\|\vx\|$. The correction term in $\widetilde E$ removes the isotropic part of the population curvature and enlarges the region in which a descent estimate is available.
	\begin{theorem}
		\label{theo:relation}
		For any $\vz$ satisfying $\dist(\vz,\vx)\in[\frac{\varepsilon_0}2\|\vx\|,\varepsilon_1\|\vx\|]$, set $\widetilde \vz=\vz-\widetilde\xi\nabla_{\vz} \widetilde E(\vz)$.
		For any $0<\delta<1$, there exist constants $\widetilde\gamma_0>0$ and $\widetilde c_{0,\delta},\widetilde C_{0,\delta}>0$ depending on $\delta$, 
		when 
		\begin{equation}\label{eq:N_tilde}
			N\ge \widetilde C_{0,\delta}\frac{d^2}{d-r}\left(\frac{\lambda}{\zeta}\right)^2\log^2d,
		\end{equation}  
		we have 
		\[\dist^2(\widetilde\vz,\vx)\le\left(1-\frac{\widetilde\xi\zeta(1-\delta)}4\|\vx\|^2\right)\dist^2(\vz,\vx)\]
		with probability at least $1-7\exp(-\widetilde c_{0,\delta} d)-Nd^{-7r}
		-\exp(-\widetilde\gamma_0 d\log^2 d)$, where $\widetilde L_{0,\delta}=\max\left\{\frac{16\phi}{1-\delta},540(1+\delta)\right\}$ and
		$0<\widetilde\xi\le \frac{2\zeta}{\widetilde L_{0,\delta}\lambda\|\vx\|^2}$ with $		\phi=15(1+\delta)^2+(1+\varepsilon_1)^2(2+\varepsilon_1+\delta)^2$.
		
	\end{theorem}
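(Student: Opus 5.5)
The proof follows the standard regularity-condition template used for Theorem \ref{theo:convergence}, now applied to $\widetilde E$ on the annulus $\frac{\varepsilon_0}{2}\|\vx\|\le\dist(\vz,\vx)\le\varepsilon_1\|\vx\|$. Write $\vh=\vz-\vx e^{i\theta(\vz)}$ and, without loss of generality, absorb the phase so that $\vh=\vz-\vx$ with $\im(\vx^*\vh)=0$. Since $\dist^2(\widetilde\vz,\vx)\le\|\widetilde\vz-\vx\|^2$, it suffices to show
\[
\|\vh-\widetilde\xi\nabla_{\vz}\widetilde E(\vz)\|^2\le\Bigl(1-\tfrac{\widetilde\xi\zeta(1-\delta)}{4}\|\vx\|^2\Bigr)\|\vh\|^2 .
\]
This in turn follows from a local curvature bound and a local smoothness bound:
\[
\re\langle\nabla_{\vz}\widetilde E(\vz),\vh\rangle\ge \tfrac{\zeta(1-\delta)}{8}\|\vx\|^2\|\vh\|^2+\tfrac{\zeta}{\widetilde L_{0,\delta}\lambda\|\vx\|^2}\|\nabla_{\vz}\widetilde E(\vz)\|^2 ,
\]
combined with the step-size restriction $\widetilde\xi\le 2\zeta/(\widetilde L_{0,\delta}\lambda\|\vx\|^2)$. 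Expanding the square then gives the claim.

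First, compute the population gradient. Using the second moments of Haar rank-$r$ projections, which are the source of $\lambda$ and $\mu$ (namely $\E A\vu\vu^*A$ and $\E(\vv^*A\vw)A$ expressed through $\lambda,\mu$), one gets $\E\frac{d}{rN}\sum_j y_j=\|\vx\|^2$. The population gradient of $E$ has the form $\zeta\bigl((\vz^*\vz)\vz-(\vx^*\vz)\vx\bigr)$-type terms plus an isotropic part $\mu(\|\vz\|^2-\|\vx\|^2)\vz$, and the correction $-\mu(\|\vz\|^2-\rho^2)\vz$ removes this isotropic part in expectation. Consequently $\re\langle\E\nabla\widetilde E,\vh\rangle$ is a quartic in $\vh$ with coefficients proportional to $\zeta$. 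Lower bounding this quartic uses $\|\vh\|\le\varepsilon_1\|\vx\|$ and $\im(\vx^*\vh)=0$, with $\varepsilon_1^2=10/27$ the threshold at which the quartic stays positive; this is where $\phi$ and the factor $(1+\varepsilon_1)^2(2+\varepsilon_1+\delta)^2$ come from. The lower limit $\frac{\varepsilon_0}{2}\|\vx\|$ is used only to dominate the terms that cannot be controlled relative to $\|\vh\|$ when $\vh$ is small; the region $\dist<\varepsilon_0\|\vx\|$ is already covered by Theorem \ref{theo:convergence}.

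Second, pass from the population to the empirical quantities. For this I would use the concentration inequalities of Theorems \ref{theo:concentration1}, \ref{theo:concentration1-1} and Lemmas \ref{lem:concentration1}, \ref{lem:concentration1-1}, uniformly over $\vh$ via a net argument. The ingredients are:
\begin{itemize}
\item the deviation of $\frac1N\sum_j(\vx^*A_j\vh)^2$-type quadratic forms, controlled to accuracy $\delta\zeta$, which produces the $e^{-c d}$ terms;
\item the deviation of $\rho^2$ from $\|\vx\|^2$, which is a scalar concentration;
\item the truncation event $\max_j\|A_j\vx\|^2\lesssim \frac{r}{d}\log d\,\|\vx\|^2$, giving the $Nd^{-7r}$ term;
\item heavy-tailed quartic terms such as $\frac1N\sum_j(\vh^*A_j\vh)^2$, giving the $e^{-\gamma d\log^2 d}$ term.
\end{itemize}
The requirement that the deviation be below $\delta\zeta\|\vx\|^2$ relative to fluctuations of size $\lambda$ is what forces the $(\lambda/\zeta)^2$ factor in $N$. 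The extra factor $d/(d-r)$ relative to \eqref{eq:N2} comes from $\rho^2$: its relative variance scales like $(d-r)/(rdN)$-type quantities, and it must be accurate to order $\zeta/\mu$.

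Third, derive the smoothness bound. Bound $\|\nabla\widetilde E(\vz)\|\le\|\nabla E(\vz)\|+\mu\,|\|\vz\|^2-\rho^2|\,\|\vz\|$. Each piece is controlled on the same events by $C\lambda\|\vx\|^2\|\vh\|$ with explicit constants, yielding $\widetilde L_{0,\delta}$, which is doubled relative to $L_{0,\delta}$ (540 versus 270).

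The main obstacle is the uniform control of the higher-order terms $\frac1N\sum_j|\vh^*A_j\vh|^2$ and $\frac1N\sum_j(\vh^*A_j\vh)\re(\vx^*A_j\vh)$ over the large annulus, where $\vh$ is no longer small compared to $\vx$. I would first try to reuse the bounds established for Theorem \ref{theo:convergence}, rescaled by $\varepsilon_1$, and let the isotropic cancellation from the $\mu$-correction absorb the dominant $\mu\|\vh\|^4$ contribution.
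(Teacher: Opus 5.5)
Your top-level reduction is exactly the paper's proof of Theorem~\ref{theo:relation}. The paper takes the annulus curvature bound, Lemma~\ref{lem:lem_regularity_cond}(2), and the annulus smoothness bound, Lemma~\ref{lem:lem_smooth_cond}(2). The former contains twice the term $\frac{\zeta(1-\delta)}{8}\|\vx\|^2\dist^2(\vz,\vx)$, plus the same $\frac{\zeta}{10\lambda N}\sum_j|\vh^*A_j\vh|^2$ that appears in the latter. This gives $\re\langle\nabla_{\vz}\widetilde E(\vz),\vz-\vx e^{i\theta(\vz)}\rangle\ge\frac{\zeta(1-\delta)}{8}\|\vx\|^2\dist^2(\vz,\vx)+\frac{\zeta}{\widetilde L_{0,\delta}\lambda\|\vx\|^2}\|\nabla_{\vz}\widetilde E(\vz)\|^2$, and the paper then expands the square. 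The only extra bookkeeping is that both $\frac dr(\frac\mu\zeta)^2\frac\lambda\zeta$ and $d(\frac\lambda\zeta)^2$ are at most $\frac{d^2}{d-r}(\frac\lambda\zeta)^2$, so \eqref{eq:N_tilde} covers both lemmas. Citing those two lemmas would have completed your proposal. Your smoothness sketch is essentially the proof of Lemma~\ref{lem:lem_smooth_cond}(2). One correction: what is needed, and what the available estimates give, is $\|\nabla_{\vz}\widetilde E(\vz)\|^2\lesssim\lambda\|\vx\|^4\|\vh\|^2$, i.e.\ a norm bound at scale $\sqrt\lambda$, not $\lambda$.

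Where you instead rebuild the annulus curvature bound, there is a genuine gap exactly at the step you flag as the main obstacle, and your suggested fix does not work. The bound behind Theorem~\ref{theo:convergence}, \eqref{curvature_condition1}, rests on the first estimate of Lemma~\ref{lem:initial_bound}. That estimate yields a positive lower bound only while $\frac{27}{20}\lambda\|\vh\|^2<\zeta\|\vx\|^2$, i.e.\ essentially on $\calS(\vx,\varepsilon_0)$. Moreover, its square keeps only $\frac9{10}$ of the quartic. So subtracting the $\mu$-correction afterwards leaves a deficit of $\frac1{10}\lambda\|\vh\|^4$, which for $\|\vh\|\asymp\|\vx\|$ is of order $\lambda\|\vx\|^2\|\vh\|^2\gg\zeta\|\vx\|^2\|\vh\|^2$.

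The missing idea is to build the $\mu$-cancellation into the square-completion and lose only a $\zeta/\lambda$ fraction of the quartic. With $R_j=\re(\vh^*A_j\vx)$, $q_j=\vh^*A_j\vh$ and $\iota=1-\frac{\zeta}{10\lambda}$, write $2R_j^2+3R_jq_j+q_j^2=(\frac{3}{2\sqrt\iota}R_j+\sqrt\iota\,q_j)^2+(1-\iota)q_j^2-(\frac9{4\iota}-2)R_j^2$. The correction term contributes the deterministic polynomial $\mu(\|\vh\|^4+3\|\vh\|^2\re(\vx^*\vh)+2\re^2(\vx^*\vh))$. In expectation, the square minus this polynomial is at least $(\frac9{4\iota}-2)\E R^2+\frac\zeta2\|\vx\|^2\|\vh\|^2$ for $\|\vh\|\le\varepsilon_1\|\vx\|$; this is the second estimate of Lemma~\ref{lem:initial_bound}. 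Note that $\varepsilon_1^2=10/27$ is simply the radius at which this splitting still yields $\frac\zeta2\|\vx\|^2$; it is not a positivity threshold of the quartic.

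The empirical square is then bounded only from below. Bentkus' one-sided inequality is applied to $\E U_j-U_j$, which is bounded above because $U_j\ge0$. It needs only second moments ($\E U_j^2\lesssim\lambda^2\|\vx\|^4$ for unit $\vh$) and gives failure probability $\exp(-cN(\zeta/\lambda)^2)$. This is small enough to survive a net over $\vh$ and $s$ (Lemmas~\ref{lem:expectation_u} and~\ref{lem:square_estiate}). This one-sided bound, not two-sided control of heavy-tailed quartic sums, is the source of the $e^{-\gamma d\log^2d}$ term.

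The remaining pieces are handled as follows. The term $-(\frac9{4\iota}-2)\frac1N\sum_jR_j^2$ is controlled by the upper bound of Lemma~\ref{lem:concentration1-1}. The $\rho^2$ error is controlled by Theorem~\ref{theo:concentration1} together with $\|\vh\|\ge\frac{\varepsilon_0}{2}\|\vx\|$. The leftover $(1-\iota)\frac1N\sum_jq_j^2=\frac{\zeta}{10\lambda N}\sum_jq_j^2$ is precisely the term the smoothness bound is matched against.

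Finally, $\phi$ and the factor $(1+\varepsilon_1)^2(2+\varepsilon_1+\delta)^2$ come from the smoothness side, not from the curvature quartic. They arise via $\mu\,\bigl|\|\vz\|^2-\rho^2\bigr|\,\|\vz\|\le\mu(1+\varepsilon_1)(2+\varepsilon_1+\delta)\|\vx\|^2\|\vh\|$ together with $\mu^2\le\lambda$.
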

	\begin{proof}
		The proof is given in Subsection \ref{subsec:proof_convergence1}.
	\end{proof}
	\begin{theorem}[Invariance of the regularized inner ball]\label{theo:regularized_inner_invariance}
		For any $0<\delta<1$ and any $\vz\in\calS(\vx,\varepsilon_0)$, there exist constants $\gamma_1>0$ and $ c_{1,\delta}, C_{1,\delta}>0$ depending on $\delta$,  if
		\begin{equation}\label{eq:inner-step-size}
			N\ge  C_{1,\delta}\frac{d^2}{d-r}\left(\frac{\lambda}{\zeta}\right)^2\log^2d,\ 0<\widetilde\xi\le\frac{2\zeta}{ L_{1,\delta}\lambda\|\vx\|^2},
		\end{equation}
		where $ L_{1,\delta}=\max\left\{\frac{16\phi}{1-\delta},540(1+\delta),4\left[
		L_{0,\delta}
		\left(
		\frac{1-\delta}{32}
		+\frac{1+\delta}{432}
		\right)
		\right]^{1/2}
		+
		2\left(2+\frac{\varepsilon_1}{2}+\delta\right)
		\left(1+\frac{\varepsilon_1}{2}\right)\right\}$ with $		\phi=15(1+\delta)^2+(1+\varepsilon_1)^2(2+\varepsilon_1+\delta)^2$ and $L_{0,\delta}=\max\left\{120\frac{(1+\delta)^2}{1-\delta},270(1+\delta)\right\}$, then
		$
		\vz-\widetilde\xi\nabla_{\vz}\widetilde E(\vz)\in\calS(\vx,\varepsilon_0)$ holds with probability at least $1-8\exp(- c_{1,\delta} d)-Nd^{-7r}
		-\exp(-\gamma_1 d\log^2 d)$.
	\end{theorem}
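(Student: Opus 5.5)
We split $\calS(\vx,\varepsilon_0)$ into an outer annulus and an inner ball, and handle them separately. Write $h=\vz-\vx e^{i\theta(\vz)}$, so $\dist(\vz,\vx)=\|h\|$. By the phase invariance of $\widetilde E$ we may assume $\theta(\vz)=0$.

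\emph{Outer annulus.} Suppose $\dist(\vz,\vx)\in[\frac{\varepsilon_0}{2}\|\vx\|,\varepsilon_0\|\vx\|]$. Since $\varepsilon_0\le\varepsilon_1$ (because $\zeta\le\lambda$), Theorem \ref{theo:relation} applies. Our step-size restriction is stronger, since $L_{1,\delta}\ge\widetilde L_{0,\delta}$, and we take $C_{1,\delta}\ge\widetilde C_{0,\delta}$. The theorem then gives
\[
\dist^2(\widetilde\vz,\vx)\le\Bigl(1-\tfrac{\widetilde\xi\zeta(1-\delta)}4\|\vx\|^2\Bigr)\dist^2(\vz,\vx)\le\varepsilon_0^2\|\vx\|^2,
\]
so $\widetilde\vz\in\calS(\vx,\varepsilon_0)$ in this case.

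\emph{Inner ball.} Suppose $\dist(\vz,\vx)<\frac{\varepsilon_0}{2}\|\vx\|$. Here contraction is not needed; it suffices to show the step moves $\vz$ by at most $\frac{\varepsilon_0}{2}\|\vx\|$. By the triangle inequality,
\[
\dist(\widetilde\vz,\vx)\le\|h\|+\widetilde\xi\|\nabla_{\vz}\widetilde E(\vz)\|,
\]
so it is enough to prove $\widetilde\xi\|\nabla_{\vz}\widetilde E(\vz)\|\le\frac{\varepsilon_0}{2}\|\vx\|$. We bound the gradient as $\|\nabla_{\vz}\widetilde E(\vz)\|\le\|\nabla_{\vz}E(\vz)\|+\mu\,\bigl|\|\vz\|^2-\rho^2\bigr|\,\|\vz\|$ and treat the two pieces separately.

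\emph{First piece.} We use the smoothness estimate underlying Theorem \ref{theo:convergence}, namely the local smoothness bound in Section \ref{sec:local_curvature_smoothness}, valid on $\calS(\vx,\varepsilon_0)$ with the same probability and $N$ as in \eqref{eq:N2}. It has the form $\|\nabla_{\vz}E(\vz)\|^2\le L_{0,\delta}\lambda\|\vx\|^2\cdot(\text{a multiple of }\zeta\|\vx\|^2\|h\|^2)$. The constants $\frac{1-\delta}{32}$ and $\frac{1+\delta}{432}$ in $L_{1,\delta}$ suggest splitting into a leading term and a remainder. After inserting $\|h\|\le\frac{\varepsilon_0}{2}\|\vx\|$ with $\varepsilon_0^2=\frac{10\zeta}{27\lambda}$, we expect
\[
\|\nabla_{\vz}E(\vz)\|\le\Bigl[L_{0,\delta}\Bigl(\tfrac{1-\delta}{32}+\tfrac{1+\delta}{432}\Bigr)\Bigr]^{1/2}\lambda\|\vx\|^3\varepsilon_0 .
\]

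\emph{Second piece.} A concentration lemma from Section \ref{sec:concentration} for $\sum_j y_j$ gives $|\rho^2-\|\vx\|^2|\le\delta\|\vx\|^2$ with probability $1-e^{-cd}$. This produces the extra failure term $8$ instead of $7$. Combined with $\|\vz\|\le(1+\frac{\varepsilon_1}{2})\|\vx\|$ and $\bigl|\|\vz\|^2-\|\vx\|^2\bigr|\le(2+\frac{\varepsilon_1}{2})\|h\|\,\|\vx\|$, and using $\mu\le\lambda$, this yields
\[
\mu\,\bigl|\|\vz\|^2-\rho^2\bigr|\,\|\vz\|\le\Bigl(2+\tfrac{\varepsilon_1}{2}+\delta\Bigr)\Bigl(1+\tfrac{\varepsilon_1}{2}\Bigr)\lambda\|\vx\|^3\varepsilon_0 .
\]
The $\delta\|\vx\|^2$ error is absorbed using $\varepsilon_0\ge$ a multiple of $\sqrt{\zeta/\lambda}$ only if needed; otherwise we simply keep it.

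\emph{Conclusion of the inner case.} Multiplying by $\widetilde\xi\le\frac{2\zeta}{L_{1,\delta}\lambda\|\vx\|^2}$ and using that $L_{1,\delta}$ dominates twice the bracketed constants from the two pieces gives $\widetilde\xi\|\nabla_{\vz}\widetilde E(\vz)\|\le\frac{\varepsilon_0}{2}\|\vx\|\cdot\frac{2\zeta}{\lambda}\cdot(\ldots)\le\frac{\varepsilon_0}{2}\|\vx\|$, since $\zeta\le\lambda$.

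\emph{Main obstacle.} The delicate step is the first piece: obtaining a gradient bound of order $\varepsilon_0\lambda\|\vx\|^3$ rather than $\lambda\|\vx\|^2\|h\|$ with a loose constant. This requires revisiting the smoothness condition, which controls $\|\nabla E\|^2$ by $\lambda\zeta\|\vx\|^4\|h\|^2$-type and $\lambda^2\|h\|^4\|\vx\|^2$-type terms, and checking that both are covered by the stated constants. If this fails, we would fall back on a cruder bound $\|\nabla E(\vz)\|\lesssim\lambda\|\vx\|^2\|h\|$ and enlarge $L_{1,\delta}$. Finally, a union bound over the events from Theorem \ref{theo:relation}, the smoothness condition, and the concentration of $\rho^2$ gives the stated probability.
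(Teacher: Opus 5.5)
Your two-case structure matches the paper's proof. The annulus $\frac{\varepsilon_0}{2}\|\vx\|\le\dist(\vz,\vx)\le\varepsilon_0\|\vx\|$ goes to Theorem~\ref{theo:relation}, and the inner ball is handled by the triangle inequality plus a bound $\widetilde\xi\|\nabla_{\vz}\widetilde E(\vz)\|\le\frac{\varepsilon_0}{2}\|\vx\|$. The gap is that this inner-ball bound, which is the whole content of that case, is never derived, and both of your displayed estimates are wrong as written.

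\textbf{First piece.} You misstate Lemma~\ref{lem:lem_smooth_cond}(1). Its right-hand side is $L_{0,\delta}\frac{\lambda}{\zeta}\|\vx\|^2\bigl(\frac{\zeta(1-\delta)}{8}\|\vx\|^2\|\vh\|^2+\frac{1}{10N}\sum_j|\vh^*A_j\vh|^2\bigr)$. So the leading term is $L_{0,\delta}\frac{(1-\delta)\lambda}{8}\|\vx\|^4\|\vh\|^2$, with no factor $\zeta$. The quartic sum carries the large prefactor $\lambda/\zeta$, and you never control it. The missing step:
\begin{itemize}
\item Since $0\preceq A_j\preceq I_d$, we have $(\vh^*A_j\vh)^2\le\|\vh\|^2\,\vh^*A_j\vh$.
\item Theorem~\ref{theo:concentration1} then gives $\frac1N\sum_j(\vh^*A_j\vh)^2\le(1+\delta)\frac rd\|\vh\|^4$.
\item One factor $\|\vh\|^2\le\frac{\varepsilon_0^2}{4}\|\vx\|^2=\frac{10\zeta}{108\lambda}\|\vx\|^2$ cancels $\lambda/\zeta$; this is where $\frac{1+\delta}{432}$ comes from.
\end{itemize}
The result is $\|\nabla_{\vz}E(\vz)\|^2\le L_{0,\delta}\bigl(\frac{1-\delta}{32}\lambda+\frac{1+\delta}{432}\frac rd\bigr)\varepsilon_0^2\|\vx\|^6$. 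This carries factors $\sqrt\lambda$ and $\sqrt{r/d}$, not $\lambda$, so your bound with a factor $\lambda$ does not follow from the lemma. It is also unnecessary. The paper drops these factors using $\lambda\le1$ and $r/d\le1$, and lets the step size supply $\zeta/\lambda\le1$. Your fallback of enlarging $L_{1,\delta}$ is not available, since $L_{1,\delta}$ is fixed by the statement.

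\textbf{Second piece.} The bound $|\rho^2-\|\vx\|^2|\le\delta\|\vx\|^2$ only yields $\mu\bigl[(2+\frac{\varepsilon_1}{2})\frac{\varepsilon_0}{2}+\delta\bigr](1+\frac{\varepsilon_1}{2})\|\vx\|^3$. The $\delta$-term has no factor $\varepsilon_0$. As $r$ grows, $\mu/\lambda=\frac{dr-1}{(d-1)(r+1)}\to1$ while $\varepsilon_0^2=\frac{10}{27}\cdot\frac{d-r}{(d-1)(r+1)}\to0$. So this term cannot be bounded by a fixed multiple of $\lambda\varepsilon_0\|\vx\|^3$, as your display asserts.

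There are two ways to repair this:
\begin{itemize}
\item The paper uses the second, high-precision assertion of Theorem~\ref{theo:concentration1}. It gives $|\rho^2-\|\vx\|^2|\le\frac{\delta\zeta\varepsilon_0}{8(1+\varepsilon_1)\mu}\|\vx\|^2\le\frac{\delta\varepsilon_0}{2}\|\vx\|^2$ once $N\gtrsim\frac dr\bigl(\frac\mu\zeta\bigr)^2\frac\lambda\zeta$. That condition is implied by the $\frac{d^2}{d-r}\bigl(\frac\lambda\zeta\bigr)^2\log^2d$ sample size.
\item Your hinted absorption can also be made rigorous, but it must go through the step size. We have $\widetilde\xi\mu\delta\|\vx\|^2\|\vz\|\le\frac{2\zeta}{L_{1,\delta}\lambda}\delta\bigl(1+\frac{\varepsilon_1}{2}\bigr)\|\vx\|$, and $\frac\zeta\lambda\le\sqrt{\frac\zeta\lambda}=\sqrt{\frac{27}{10}}\,\varepsilon_0$. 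The resulting constant is not covered by the third entry of the maximum defining $L_{1,\delta}$ alone. It is covered by $\max\{540(1+\delta),16\phi/(1-\delta)\}$, and you would need to verify that explicitly.
\end{itemize}
With these two repairs your argument closes and coincides with the paper's.
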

	\begin{proof}
		The proof is given in Subsection \ref{subsec:proof-invariance}.
	\end{proof}
	\begin{theorem}[Two-stage initialization]
		\label{theo:good_initialization_2stage}
		Let $0<\delta<1$. Fix a step size satisfying
		\[
		0<\widetilde\xi\le\frac{2\zeta}{ L_{1,\delta}\lambda\|\vx\|^2},
		\]
		and define the deterministic iteration count
		\begin{equation}\label{eq:B-xi}
			B(\widetilde\xi):=\left\lceil\frac{4}{\widetilde\xi\zeta(1-\delta)\|\vx\|^2}\log\!\left(\frac{\lambda}{\zeta}\right)\right\rceil,
		\end{equation}
		where $L_{1,\delta}$ is defined in Theorem \ref{theo:regularized_inner_invariance} and   $\lceil\cdot\rceil$ is to get the smallest upper integer bound. There exist constants $\widetilde C_{1,\delta},\widetilde c_{1,\delta},\widetilde\gamma_1>0$ such that, if
		\[
		N\ge \widetilde C_{1,\delta}\frac{d^2}{d-r}\left(\frac{\lambda}{\zeta}\right)^2\log^2d,
		\]
		then the spectral initializer in Line 2 of Algorithm~\ref{alg:two_stage_initialization} satisfies $\widetilde\vz_0\in\calS(\vx,\varepsilon_1)$ and the two-stage initializer $\widetilde\vz_{B(\widetilde\xi)}\in\calS(\vx,\varepsilon_0)$, with probability at least $1-8e^{-\widetilde c_{1,\delta}d}-e^{-\widetilde\gamma_1d\log^2d}-Nd^{-7r}$.  In particular, if $\widetilde\xi=c_\delta\zeta/(\lambda\|\vx\|^2)$ for a sufficiently small constant $c_\delta>0$, then
		$B(\widetilde\xi)=O_\delta\!\left(\frac{\lambda}{\zeta^2}\log\frac{\lambda}{\zeta}\right)$.
	\end{theorem}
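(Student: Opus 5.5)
The argument has two parts: a spectral initialization bound giving $\widetilde\vz_0\in\calS(\vx,\varepsilon_1)$, and an iteration argument showing that $B(\widetilde\xi)$ regularized gradient steps carry the iterate from $\calS(\vx,\varepsilon_1)$ into $\calS(\vx,\varepsilon_0)$. The iteration uses Theorems \ref{theo:relation} and \ref{theo:regularized_inner_invariance}.

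\textbf{Step 1 (spectral initializer).} Use the Haar moment identity $\E[A\vx\vx^*A]=\mu\|\vx\|^2 I+\zeta\vx\vx^*$. This follows from $\E\,\tr(\cdot)$ and unitary invariance, consistent with $\lambda=\mu+\zeta$. It gives $\E Y=\mu\|\vx\|^2I+\zeta\vx\vx^*$, so the leading eigenvector of $\E Y$ is $\vx/\|\vx\|$ with spectral gap $\zeta\|\vx\|^2$. Next apply the concentration results (Theorem \ref{theo:concentration1} and Lemma \ref{lem:concentration1}) to bound $\|Y-\E Y\|\le \delta'\zeta\|\vx\|^2$. The summands $y_jA_j$ are bounded by $\|\vx\|^2$, and their variance proxy is of order $\lambda\|\vx\|^4\cdot r/d$. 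A matrix Bernstein argument, with truncation accounting for the $Nd^{-7r}$ term, then requires $N\gtrsim \frac{d^2}{d-r}(\lambda/\zeta)^2\log^2 d$; this is the stated sample size.

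Davis--Kahan then gives $\min_\theta\|\widehat\vz_0-e^{i\theta}\vx/\|\vx\|\|\le \sqrt2\,\delta'$. Scalar concentration of $\frac{d}{rN}\sum_j y_j$ around $\|\vx\|^2$ gives $|\rho-\|\vx\||\le\delta''\|\vx\|$ with probability $1-e^{-cd}$. By the triangle inequality, $\dist(\widetilde\vz_0,\vx)\le(\sqrt2\delta'+\delta'')\|\vx\|\le\varepsilon_1\|\vx\|$ once the constants $\delta',\delta''$ are chosen small. This is possible because $\varepsilon_1=\sqrt{10/27}$ is an absolute constant.

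\textbf{Step 2 (iteration).} Condition on the intersection of the good events of Theorems \ref{theo:relation} and \ref{theo:regularized_inner_invariance}. The step-size condition $\widetilde\xi\le 2\zeta/(L_{1,\delta}\lambda\|\vx\|^2)$ implies the condition of Theorem \ref{theo:relation}, since $L_{1,\delta}\ge\widetilde L_{0,\delta}$. Both theorems hold uniformly in $\vz$ on their regions, because their proofs go through uniform local curvature and smoothness bounds; I treat them as uniform statements, which is needed because the iterates are random. Set $q=1-\widetilde\xi\zeta(1-\delta)\|\vx\|^2/4$.

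Induct on $b$. If $\dist(\widetilde\vz_b,\vx)\in[\varepsilon_0\|\vx\|/2,\varepsilon_1\|\vx\|]$, Theorem \ref{theo:relation} gives $\dist^2(\widetilde\vz_{b+1},\vx)\le q\,\dist^2(\widetilde\vz_b,\vx)$, so the iterate stays in $\calS(\vx,\varepsilon_1)$ and contracts. If $\dist(\widetilde\vz_b,\vx)<\varepsilon_0\|\vx\|/2$, then $\widetilde\vz_b\in\calS(\vx,\varepsilon_0)$, and Theorem \ref{theo:regularized_inner_invariance} keeps all later iterates in $\calS(\vx,\varepsilon_0)$; the iterate may re-enter the annulus $[\varepsilon_0/2,\varepsilon_0]$, but it never leaves $\calS(\vx,\varepsilon_0)$. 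So either the ball is entered at some $b\le B$ and the iterate remains there, or all iterates stay in the annulus. In the latter case,
\[\dist^2(\widetilde\vz_B,\vx)\le q^B\varepsilon_1^2\|\vx\|^2\le e^{-B\widetilde\xi\zeta(1-\delta)\|\vx\|^2/4}\varepsilon_1^2\|\vx\|^2\le \frac{\zeta}{\lambda}\varepsilon_1^2\|\vx\|^2=\varepsilon_0^2\|\vx\|^2,\]
using the definition \eqref{eq:B-xi} of $B(\widetilde\xi)$ and $\varepsilon_0^2=\varepsilon_1^2\zeta/\lambda$. Either way $\widetilde\vz_{B}\in\calS(\vx,\varepsilon_0)$. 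Substituting $\widetilde\xi=c_\delta\zeta/(\lambda\|\vx\|^2)$ into \eqref{eq:B-xi} gives $B=O_\delta(\frac{\lambda}{\zeta^2}\log\frac\lambda\zeta)$.

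A union bound over the events of Step 1 and the two theorems gives the stated probability after adjusting constants.

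\textbf{Main obstacle.} The main difficulty is Step 1's operator-norm concentration at relative accuracy of order $\zeta/\lambda$. When $r\sim d$ the gap $\zeta$ is small, and the sample size must scale exactly like $\frac{d^2}{d-r}(\lambda/\zeta)^2\log^2d$. I would first try to derive this from the paper's concentration lemmas applied to the quadratic forms $\vu^*Y\vu$, combined with an $\epsilon$-net over the unit sphere. A secondary subtlety is the uniformity needed in Step 2, which I plan to take from the proofs of the cited theorems.
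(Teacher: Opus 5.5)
Your proof has the same structure as the paper's. Your Step 2 matches the paper's proof. The paper sets $\tau=\inf\{b\ge0:\dist(\widetilde\vz_b,\vx)\le\varepsilon_0\|\vx\|\}$, contracts by Theorem~\ref{theo:relation} for $b<\tau$, uses $q^{B}\le\zeta/\lambda=\varepsilon_0^2/\varepsilon_1^2$ to force $\tau\le B(\widetilde\xi)$, and then invokes Theorem~\ref{theo:regularized_inner_invariance}. Your remark that both theorems must hold uniformly in $\vz$, because the iterates are random, is something the paper uses only implicitly. In Step 1 you use Davis--Kahan plus a triangle inequality. The paper instead combines Weyl's inequality with the Rayleigh quotient $\hat\vz_0^*(Y-\E Y)\hat\vz_0$ to get $|\hat\vz_0^*\vx|^2\ge(1-2\eta)\|\vx\|^2$ with $\eta=\varepsilon_1^2/5$. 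It then expands $\dist^2(\widetilde\vz_0,\vx)=\rho^2+\|\vx\|^2-2\rho|\hat\vz_0^*\vx|\le5\eta\|\vx\|^2$. That difference is cosmetic.

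Three points in Step 1 are wrong as written, though each is easy to repair.

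(i) The moment identity you quote is misstated. By Lemma~\ref{lem:expectations}, $\E[A\vx\vx^*A]=\zeta\|\vx\|^2I_d+\mu\vx\vx^*$, so for $r>1$ you have $\mu$ and $\zeta$ swapped. In any case the object you need is $\E[(\vx^*A\vx)A]=\mu\|\vx\|^2I_d+\zeta\vx\vx^*$, since $Y=N^{-1}\sum_j(\vx^*A_j\vx)A_j$. That is the formula you then actually use.

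(ii) The bound $\|Y-\E Y\|\le\eta\zeta\|\vx\|^2$ is exactly \eqref{eq:eq1} of Theorem~\ref{theo:concentration1-1}. Theorem~\ref{theo:concentration1} and Lemma~\ref{lem:concentration1} concern $N^{-1}\sum_jA_j$ and $H(\vx)$, not $Y$. The paper proves \eqref{eq:eq1} by the net-plus-scalar-Bernstein route you mention in your last paragraph. It needs only $N\gtrsim d(\lambda/\zeta)^2\log^2d$, with failure probability at most $6e^{-cd}+Nd^{-7r}$.

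The matrix Bernstein route you describe in Step 1 does not reach the stated probability. After truncation its range parameter is $\|y_jA_j\|\approx K\frac rd\log d\,\|\vx\|^2$, because $\|A_j\|=1$. Each truncated directional form $(\vx^*A_j\vx)(\vu^*A_j\vu)$ has much smaller $\psi_1$-norm, of order $K\frac{r^2}{d^2}\log d\,\|\vx\|^2$. For $r=1$ and $N\sim d\log^2d$, the matrix Bernstein exponent is only of order $\log d$. That gives a polynomially small failure probability, not $e^{-cd}$.

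(iii) The obstacle you anticipate is therefore misplaced. The spectral step never forces $N\gtrsim\frac{d^2}{d-r}(\lambda/\zeta)^2\log^2d$. The extra factor $\frac{d}{d-r}$ comes from the regularized stage. There, Lemma~\ref{lem:lem_regularity_cond}(2) and Theorem~\ref{theo:relation} need $|\rho^2-\|\vx\|^2|\lesssim\frac{\zeta\varepsilon_0}{\mu}\|\vx\|^2$. Via Theorem~\ref{theo:concentration1} this costs $N\gtrsim\frac dr(\mu/\zeta)^2\frac\lambda\zeta$, which the paper bounds by $\frac{d^2}{d-r}(\lambda/\zeta)^2$.

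Once you cite \eqref{eq:eq1}, Step 1 is routine and your proof goes through.
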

	\begin{proof}
		The proof is given in Subsection~\ref{subsec:proof_2stage_spectral_init}.
	\end{proof}
	Theorem~\ref{theo:good_initialization_2stage} shows that the regularized trajectory enters the local convergence region within $B(\widetilde\xi)$ iterations. Combining Theorems~\ref{theo:convergence} and \ref{theo:good_initialization_2stage}, we have the following overall convergence.
	
	\begin{theorem}[Overall convergence]
		\label{theo:overall_convergence}
		We generate the regularized trajectory $\{\widetilde{\vz}_b\}_{b=0}^B$ according to Algorithm~\ref{alg:two_stage_initialization}, with $B=B(\widetilde\xi)$ defined in \eqref{eq:B-xi}, and then initialize the gradient descent scheme for objective function $E$ in \eqref{model:PR-optimization} with $\vz_0 = \widetilde{\vz}_B$. Let $0<\delta<1$  be fixed, and assume that the sample size and probability conditions stated in Theorems~\ref{theo:convergence} and~\ref{theo:good_initialization_2stage} are satisfied, we have 
		\[
		\dist^2(\vz_k,\vx)\le\left(1-\frac{\xi\zeta(1-\delta)}4\|\vx\|^2\right)^k\dist^2(\vz_0,\vx),\qquad k\ge0,
		\]
		where $\xi$ is the step size used for the gradient descent scheme on $E$, satisfying
		$
		0<\xi\le
		\frac{2\zeta}{L_{0,\delta}\|\vx\|^2\lambda}$. 
		In particular, when $r=O(1)$, the required number of measurements is $O(d\log^2d)$.
	\end{theorem}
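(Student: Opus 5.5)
The proof is a direct combination of Theorem~\ref{theo:good_initialization_2stage} and Theorem~\ref{theo:convergence}, carried out on the intersection of their high-probability events. The one subtle point is independence. The initializer $\vz_0=\widetilde\vz_B$ depends on the same measurements $\{A_j\}$ used in the second stage, so Theorem~\ref{theo:convergence} cannot be applied as if $\vz_0$ were fixed in advance. I would therefore read Theorem~\ref{theo:convergence} in its uniform form. Its proof (Subsection~\ref{subsec:proof-convergence}) works through local curvature and smoothness conditions that hold simultaneously for all $\vz\in\calS(\vx,\varepsilon_0)$ on a single event $\mathcal{E}_1$ of probability at least $1-7e^{-c_{0,\delta}d}-e^{-\gamma_0 d\log^2 d}-Nd^{-7r}$. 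On $\mathcal{E}_1$, the one-step contraction \eqref{eq:local_linear_rate} and the invariance of $\calS(\vx,\varepsilon_0)$ hold for every starting point in that ball, including data-dependent ones. I would state this explicitly, since it is the main thing that needs checking: if the convergence proof used a fixed-$\vz_0$ union bound, the argument would break. Because the curvature conditions are phrased uniformly over the ball, I expect it does not.

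Let $\mathcal{E}_2$ be the event of Theorem~\ref{theo:good_initialization_2stage}. Its probability is at least $1-8e^{-\widetilde c_{1,\delta}d}-e^{-\widetilde\gamma_1 d\log^2 d}-Nd^{-7r}$, and on it $\widetilde\vz_B\in\calS(\vx,\varepsilon_0)$. A union bound gives $\mathcal{E}_1\cap\mathcal{E}_2$ with probability at least $1-15e^{-cd}-2e^{-\gamma d\log^2 d}-2Nd^{-7r}$, where $c=\min\{c_{0,\delta},\widetilde c_{1,\delta}\}$ and $\gamma=\min\{\gamma_0,\widetilde\gamma_1\}$. The $Nd^{-7r}$ terms presumably come from the same truncation event, namely bounds on $|\vx^*A_j\vx|$ or $\|A_j\vx\|$ over all $j$. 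If so, they need only be counted once.

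On this intersection, set $\vz_0=\widetilde\vz_B\in\calS(\vx,\varepsilon_0)$. Induct on $k$. If $\vz_k\in\calS(\vx,\varepsilon_0)$, the uniform one-step bound gives
$\dist^2(\vz_{k+1},\vx)\le(1-\xi\zeta(1-\delta)\|\vx\|^2/4)\dist^2(\vz_k,\vx)$.
The right-hand side is at most $\dist^2(\vz_k,\vx)$, because the step-size condition $\xi\le 2\zeta/(L_{0,\delta}\lambda\|\vx\|^2)$ makes the factor lie in $(0,1)$. Hence $\vz_{k+1}$ stays in the ball, and iterating the bound gives the stated geometric rate.

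For the sample complexity, the requirement is the maximum of \eqref{eq:N2} and the bound in Theorem~\ref{theo:good_initialization_2stage}, that is, $N\ge C\,\frac{d^2}{d-r}(\lambda/\zeta)^2\log^2 d$. When $r=O(1)$ and $d\ge 2r$, we have $d^2/(d-r)\le 2d$. Also
\[
\frac{\lambda}{\zeta}=\frac{(r+1)(d-1)}{d-r}\le 2(r+1)=O(1).
\]
So $N=O(d\log^2 d)$, which completes the proof.
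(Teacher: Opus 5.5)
Your proposal is correct and follows the paper's argument. The paper obtains this result by combining Theorem~\ref{theo:good_initialization_2stage}, which places $\widetilde\vz_B$ in $\calS(\vx,\varepsilon_0)$, with Theorem~\ref{theo:convergence}, whose curvature and smoothness conditions hold uniformly on that ball and so allow a data-dependent $\vz_0$. Your additional points spell out bookkeeping the paper leaves implicit: the uniformity check, the union bound, and the estimates $\lambda/\zeta=(r+1)(d-1)/(d-r)\le 2(r+1)$ and $d^2/(d-r)\le 2d$ for $d\ge 2r$.
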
	
	%########################################################################
	\section{Numerical Experiments}
	\label{sec:numerical}
	We present some numerical experiments to evaluate our proposed algorithm here. We test the two-stage initialization for the fusion frame phase retrieval and explore the minimal measurement number we need for successful recovery with different ranks.
	
	\subsection{Experimental Setup}
	In the following numerical evaluations, we employ gradient descent with the two-stage initialization to recover signals from the random measurements. The target signal, denoted as $\vx=[x_1\ \ldots\ x_d]^\top \in\C^d$, is randomly generated from a Gaussian distribution.
	
	The sampling matrices utilized in our approach are random rank-$r$ orthogonal projection matrices. Specifically, we construct a matrix $M_j\in\C^{r\times d}$, where the entries are i.i.d. random variables following a Gaussian distribution $\frac{1}{\sqrt{2}}[\calN(0,1)+i\calN(0,1)]$ in the complex case. The sampling matrix is then obtained as $A_j=M_j^*(M_jM^*_j)^{-1}M_j$.
	
	In the iterative process, we employ Barzilai-Borwein's method (B-B method) to determine the step size. This approach utilizes information from the previous iteration to compute the step size for the current iteration. Specifically, to obtain the $(k+1)$-th solution $\vz_{k+1}=\vz_k-\xi_k\nabla_{\vz} E(\vz_k)$, we choose
	\begin{equation*}
		\xi_k=\underset{\xi}{\arg\min} \,\,\|\vs_k-\xi\cdot\vg_k\|^2,
	\end{equation*}
	where $\vs_k=\vz_k-\vz_{k-1}$ and $\vg_k=\nabla_{\vz} E(\vz_k)-\nabla_{\vz} E(\vz_{k-1})$.
	By performing some simple calculations, we obtain
	$$
	\xi_k=\frac{\re(\langle \vg_k,\,\vs_k\rangle)}{\|\vg_k\|^2}=\frac{ \re(\vg_k^*\vs_k)}{\|\vg_k\|^2}.
	$$
	To ensure that the step size is positive, we take the absolute value of $\xi_k$ as the step size. The iterative algorithm terminates if $\|\vg_k\| < 10^{-16}$ or if the maximum iteration number is reached.
	
	To assess the performance of our method, we use the relative error of the reconstruction, defined as the distance between the numerical solution $\vz$ and the true solution $\vx$, divided by the norm of $\vx$.
	%%%%%%%%%%%
	\subsection{Numerical Results}
	\label{subsec:numerical_result}
	\begin{figure}[htbp]
		\begin{center}
			\includegraphics[width=0.48\textwidth]{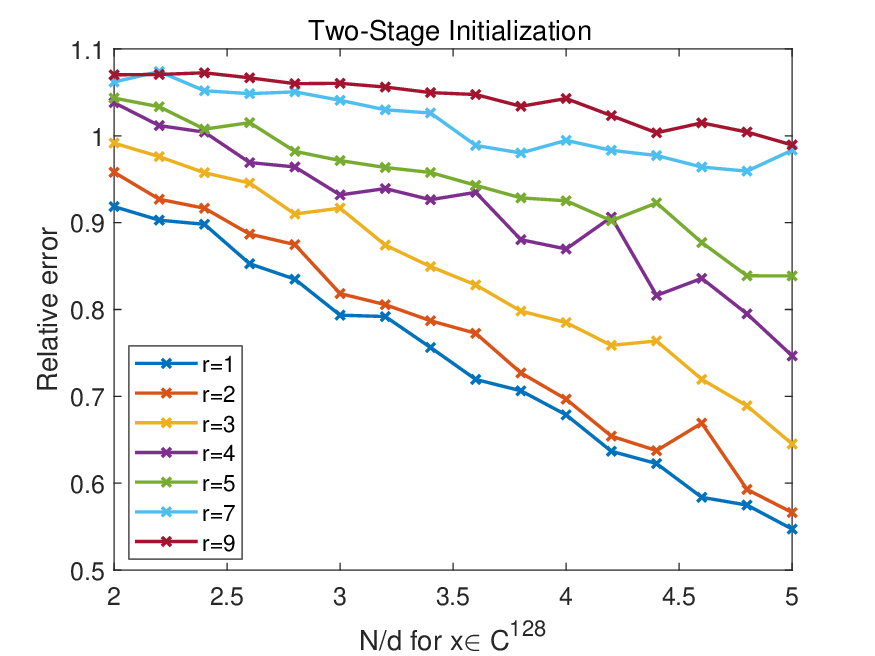}
			\includegraphics[width=0.48\textwidth]{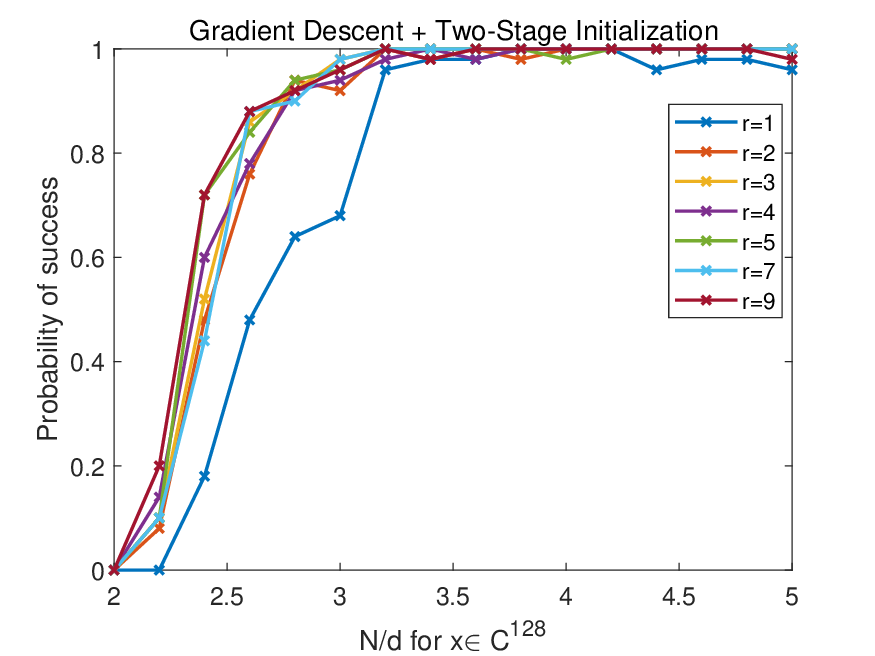}
			\caption{Left: Averaged relative error between $ \vz $ and $ \vx $ for initialization experiments. Right: The success rates versus $N/d$.} \label{fig:init1}
		\end{center}
	\end{figure}
	
	To evaluate the effectiveness of the two-stage initialization for different ranks, we investigate the relationship between the relative error and $N/d$ in fusion frame phase retrieval. In our experiments, we fix $d$ at a value of $128$ and vary $N/d$ and $r$ within the ranges of $[2:0.2:5]$ and $[1:5,7,9]$ respectively.
	
	For each combination of $N/d$ and $r$, we conduct 50 repetitions of the experiment and compute the average relative error. The left subfigure in
	Figure \ref{fig:init1} depicts the plot of relative errors against $N/d$ for different values of $r$. It is evident that the averaged relative error does not improve with increasing rank-$r$ in the two-stage initialization.

	Next, our focus shifts to evaluating the overall algorithm for solving the fusion frame phase retrieval problem. We set the maximum iteration number to $2000$ and conduct $100$ trials to assess the performance of our algorithm.
	
	During each trial, we consider it a success if the relative error of the reconstruction falls below $10^{-5}$. The empirical probability of success is then calculated as the average success rate across the $100$ trials.
	
	In these experiments, we select a value of $d = 128$. Figure \ref{fig:init1} (the right subfigure) displays the average rates as a function of $N/d$, considering various values of $r$. The plot clearly illustrates that the highest number of measurements is required when $r=1$, which implies the increase of the value $r$ can improve the probability of success. 
	Within the tested range $1\le r\le9$, the empirical success probability generally shows better performance with $r>1$ for a fixed value of $N/d$.
	
	%###########################################
	\section{Local Curvature Conditions and Local Smoothness Conditions}\label{sec:local_curvature_smoothness}	
	\begin{lemma}[Local Curvature Conditions]	\label{lem:lem_regularity_cond}
		Under the same assumptions as Theorem \ref{theo:convergence}, for every $0<\delta<1$, there exist positive constants $\gamma_2>0$ and $c_{2,\delta}, C_{2,\delta}>0$ such that the following statements hold with probability at least $1-7\exp(-c_{2,\delta} d)-Nd^{-7r}
		-\exp(-\gamma_2 d\log^2 d)$\\
		(1) for any $ \vz\in\calS(\vx,\varepsilon_0)$, when $N\ge C_{2,\delta} d\left(\frac{\lambda}{\zeta}\right)^2\log^2 d$:
		\begin{equation}\label{curvature_condition0}
			\begin{split}
				&\re\left(\langle\nabla_{\vz} E(\vz),\vz-\vx e^{i\theta(\vz)}\rangle\right)
				\ge \frac{\zeta(1-\delta)}{4}\|\vx\|^2\cdot\dist^2(\vz,\vx)+\frac{1}{10N}\sum_{j=1}^{N}\big|(\vz-\vx e^{i\theta(\vz)})^*A_j(\vz-\vx e^{i\theta(\vz)})\big|^2,
			\end{split}
		\end{equation}
		(2) for all $\vz$ satisfying $\dist(\vz,\vx)\in[\frac{\varepsilon_0}2\|\vx\|,\varepsilon_1\|\vx\|]$, when $N\ge C_{2,\delta}\max\left(\frac dr\left(\frac{\mu}{\zeta}\right)^2\frac{\lambda}{\zeta}, d\left(\frac{\lambda}{\zeta}\right)^2\log^2 d\right)$:  
		
		\begin{equation}\label{eq:curvature_condition0}
			\begin{split}
				\re\left(\langle\nabla_{\vz} \widetilde E(\vz),\vz-\vx e^{i\theta(\vz)}\rangle\right)\ge& \frac{\zeta(1-\delta)}{4}\|\vx\|^2\cdot\dist^2(\vz,\vx)+\frac{\zeta}{10\lambda N}\sum_{j=1}^{N}\big|(\vz-\vx e^{i\theta(\vz)})^*A_j(\vz-\vx e^{i\theta(\vz)})\big|^2.
			\end{split}
		\end{equation}
		Here, $\varepsilon_0 =\sqrt{\frac{10\zeta}{27\lambda}}$, $\varepsilon_1 =\sqrt{\frac{10}{27}}$ and the values of $\lambda$ and $\zeta$ are defined in \eqref{defi_para}.
	\end{lemma}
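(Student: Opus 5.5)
Fix $\vz$ and set $\vh=\vz-\vx e^{i\theta(\vz)}$. Since $\vx$ may be replaced by $\vx e^{i\theta(\vz)}$, assume $\theta(\vz)=0$. The optimality of $\theta$ gives $\im(\vh^*\vx)=0$. Using $y_j=\vx^*A_j\vx$, we get
\[
\vz^*A_j\vz-y_j=2\re(\vx^*A_j\vh)+\vh^*A_j\vh .
\]
Expanding $\re\langle\nabla_{\vz}E(\vz),\vh\rangle$ then yields
\[
\frac1N\sum_{j=1}^N\Big(2\re(\vx^*A_j\vh)^2+3\re(\vx^*A_j\vh)\,\vh^*A_j\vh+(\vh^*A_j\vh)^2\Big).
\]
The plan is to treat the quadratic and quartic parts with concentration, and absorb the cubic cross term by Cauchy--Schwarz, $|3ab|\le \tfrac{9}{4(1-\eta)}a^2\cdot(\ldots)+\ldots$. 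Concretely, I would write
\[
3ab\ge -\tfrac{9}{10}\cdot\frac{a^2}{\alpha}-\tfrac{10\alpha}{4}\,b^2\cdot\frac{1}{\ldots},
\]
and tune the weights so that a $\tfrac1{10}\sum b_j^2$ (respectively $\tfrac{\zeta}{10\lambda}\sum b_j^2$) term survives. Here $a_j=\re(\vx^*A_j\vh)$ and $b_j=\vh^*A_j\vh$. The leftover requirement is that $\frac1N\sum a_j^2$ dominates a multiple of $\frac1N\sum a_jb_j$. This is where the smallness $\|\vh\|\le\varepsilon_0\|\vx\|$ enters. The constant $10/27$ should come precisely from optimizing this split.

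\textbf{Step 1 (population quantities).} Using the Haar moments $\E A=\frac rd I$ and $\E (\vu^*A\vv)(\vv'^*A\vu')$, which produce $\lambda$ and $\mu$, compute
\[
\E(\vx^*A\vh)(\vh^*A\vx)=\mu\|\vx\|^2\|\vh\|^2+\zeta|\vx^*\vh|^2 ,
\]
and the analogous expression for $\E\re(\vx^*A\vh)^2$. For the latter, the real constraint on $\vx^*\vh$ gives a lower bound of order $\zeta\|\vx\|^2\|\vh\|^2$ up to the isotropic $\mu$ part, and $\lambda\|\vx\|^2\|\vh\|^2$ from above. Similarly, $\E(\vh^*A\vh)^2$ lies between $\mu\|\vh\|^4$ and $\lambda\|\vh\|^4$.

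\textbf{Step 2 (uniform concentration).} Invoke Theorems \ref{theo:concentration1}, \ref{theo:concentration1-1} and Lemmas \ref{lem:concentration1}, \ref{lem:concentration1-1} to obtain
\[
\Big|\frac1N\sum_j a_j^2-\E a^2\Big|\le\delta\zeta\|\vx\|^2\|\vh\|^2
\]
uniformly over $\vh$, once $N\gtrsim d(\lambda/\zeta)^2\log^2 d$. For the quartic and cubic terms, which are heavy-tailed, I would truncate on the event $\max_j\|A_j\vx\|\lesssim\sqrt{r\log d/d}\,\|\vx\|$. That event has probability $1-Nd^{-7r}$ and explains the extra $\log d$ factor. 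On it, I would apply the concentration results together with a net argument, contributing the $e^{-\gamma d\log^2d}$ term. The quartic term only needs a lower bound, which is kept as is. The cubic term is bounded by Cauchy--Schwarz against the two empirical sums, so no separate concentration is needed for it.

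\textbf{Step 3, part (1).} Combine the pieces. We have $2\cdot\frac1N\sum a_j^2\ge 2(\zeta-\delta\zeta)\cdot\tfrac12\|\vx\|^2\|\vh\|^2$, after handling the real-part averaging. The cross term costs at most a fraction of it plus $\tfrac{9}{10}\sum b_j^2$, since $1-\tfrac1{10}$ of the quartic term is available. Then verify, using $\frac1N\sum a_j^2\le(\lambda+\delta\zeta)\|\vx\|^2\|\vh\|^2$ and $\|\vh\|^2\le\varepsilon_0^2\|\vx\|^2=\frac{10\zeta}{27\lambda}\|\vx\|^2$, that the residual is at least $\frac{\zeta(1-\delta)}4\|\vx\|^2\|\vh\|^2$.

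\textbf{Part (2).} The regularizer adds the term
\[
-\mu\big(\|\vz\|^2-\rho^2\big)\re(\vz^*\vh).
\]
Replace $\rho^2$ by $\|\vx\|^2$, with error controlled by scalar concentration of $\frac dr\frac1N\sum y_j$. This is a Bernstein-type bound, and it gives the extra sample condition $\frac dr(\mu/\zeta)^2\frac\lambda\zeta$. Next, write
\[
\|\vz\|^2-\|\vx\|^2=2\re(\vx^*\vh)+\|\vh\|^2,\qquad \re(\vz^*\vh)=\re(\vx^*\vh)+\|\vh\|^2 .
\]
This subtraction cancels exactly the isotropic $\mu$ part of the population expectation in Step 1, leaving curvature governed by $\zeta$ alone, even when $\|\vh\|$ is as large as $\varepsilon_1\|\vx\|$. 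The Cauchy--Schwarz split is then redone with weight $\zeta/\lambda$ on the quartic term.

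\textbf{Main obstacle.} I expect the hardest part to be the uniform control in Step 2 of the quartic and cubic empirical sums over all $\vh$ in the ball. The variables $b_j$ are not sub-Gaussian uniformly in $\vh$, and the argument needs the lower-tail bound for $\frac1N\sum(\vh^*A_j\vh)^2$ via a net combined with the truncation event. I would first try a one-sided small-ball/Paley--Zygmund style lower bound for $\frac1N\sum b_j^2$, relying on the paper's concentration lemmas for the fixed-$\vh$ bounds.
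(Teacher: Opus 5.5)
Your expansion of $\re\langle\nabla_{\vz}E(\vz),\vh\rangle$, the truncation event, and the replacement of $\rho^2$ by $\|\vx\|^2$ in part (2) all match the paper. The gap is the cubic term: it cannot be absorbed by Cauchy--Schwarz as in your Steps 2--3. Write $a_j=\re(\vx^*A_j\vh)$ and $b_j=\vh^*A_j\vh$. The form $2a_j^2+3a_jb_j+b_j^2=(2a_j+b_j)(a_j+b_j)$ is indefinite, so any Young split that keeps $\frac1{10}b_j^2$ costs more than the whole quadratic term: $3|a_jb_j|\le\frac52a_j^2+\frac9{10}b_j^2$ leaves $-\frac12a_j^2+\frac1{10}b_j^2$. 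The same holds for the sums: with $A=\frac1N\sum_ja_j^2$ and $B=\frac1N\sum_jb_j^2$, the minimum of $2A-3\sqrt{AB}+\frac9{10}B$ over $B$ is $-\frac12A$. So the cross term is not ``a fraction'' of $2A$.

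Even in the most favourable situation the bound is too weak. Take $\vx^*\vh=0$ and $\|\vh\|=\varepsilon_0\|\vx\|$. For large $N$, $A\approx\frac\zeta2\|\vx\|^2\|\vh\|^2$ and $B\approx\lambda\|\vh\|^4=\frac{10}{27}\zeta\|\vx\|^2\|\vh\|^2$. Then $2A-3\sqrt{AB}+\frac9{10}B\approx0.04\,\zeta\|\vx\|^2\|\vh\|^2$, short of the required $\frac\zeta4\|\vx\|^2\|\vh\|^2$. Yet the true cubic average is near $0$ there, since $\E[\re(\vh^*A\vx)\,\vh^*A\vh]=\lambda\|\vh\|^2\re(\vx^*\vh)$.

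Nor can you make $B$ small uniformly in $\vh$. A uniform bound $B\lesssim\lambda\|\vh\|^4$ fails when $N\ll d^2/r^2$ (take $\vh$ in the range of $A_1$). The bound that does hold, $B\le(1+\delta)\frac rd\|\vh\|^4$ (from $b_j\le\|\vh\|^2$ and Theorem~\ref{theo:concentration1}), makes your argument work only for $\|\vh\|\lesssim\|\vx\|/\sqrt d$, whereas $\varepsilon_0\|\vx\|$ is of order $\|\vx\|/\sqrt{r+1}$. In part (2) the loss is larger still. The regularizer's $-3\mu\|\vh\|^2\re(\vx^*\vh)$ is offset only through that exact mean, which converts $3\lambda$ into $3(\lambda-\mu)=3\zeta$. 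At $\|\vh\|\sim\varepsilon_1\|\vx\|$, the information Cauchy--Schwarz discards is of order $\lambda\|\vx\|^2\|\vh\|^2$, well above the margin $\frac\zeta4\|\vx\|^2\|\vh\|^2$.

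The missing idea is to complete the square and apply a one-sided lower-tail bound to the whole square: $2a_j^2+3a_jb_j+b_j^2=\bigl(\sqrt{5/2}\,a_j+\sqrt{9/10}\,b_j\bigr)^2-\frac12a_j^2+\frac1{10}b_j^2$, with $\iota=1-\frac{\zeta}{10\lambda}$ replacing $\frac9{10}$ in part (2). The square $U_j$ is nonnegative. So for fixed $\vh$ its empirical mean drops below $\E U_j$ by more than $\frac\zeta8\|\vx\|^2\|\vh\|^2$ with probability at most $e^{-3\gamma d\log^2d}$, using only $\E U_j^2\lesssim\lambda^2\|\vx\|^4\|\vh\|^4$ (Bentkus' inequality, Lemma~\ref{lem:expectation_u}). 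Nets over unit $\vh$ with $\im(\vh^*\vx)=0$ and over $s=\|\vh\|$ then give uniformity (Lemma~\ref{lem:square_estiate}). No upper-tail control of the heavy-tailed $a_jb_j$ or $b_j^2$ is ever needed.

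The mean is computed exactly. For $\|\vh\|=1$ and $\alpha=\re(\vx^*\vh)$, one has $\E\re^2(\vh^*A\vx)=\frac\zeta2\|\vx\|^2+(\lambda-\frac\zeta2)\alpha^2$, $\E[\re(\vh^*A\vx)\,\vh^*A\vh]=\lambda\alpha$, and $\E(\vh^*A\vh)^2=\lambda$. Hence $\E[2\re^2(\vh^*A\vx)+3s\re(\vh^*A\vx)\,\vh^*A\vh+\frac9{10}s^2(\vh^*A\vh)^2]\ge\zeta\|\vx\|^2+\lambda(|\alpha|-\frac{3s}2)^2-\frac{27}{20}\lambda s^2$. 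This is where $\varepsilon_0^2=\frac{10\zeta}{27\lambda}$ comes from. In part (2), subtracting $\mu(s^2+3s\alpha+2\alpha^2)$ turns every $\lambda$ into $\zeta$ and yields $\varepsilon_1^2=\frac{10}{27}$.

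The leftover $-\frac12\cdot\frac1N\sum_ja_j^2$ is cancelled against a $\frac12\E\re^2(\vh^*A\vx)$ deliberately kept in that lower bound. This uses the deviation estimate of Lemma~\ref{lem:concentration1-1} and costs only $\frac{\delta\zeta}4\|\vx\|^2\|\vh\|^2$. Your cruder bound $\frac1N\sum_ja_j^2\le(\lambda+\delta\zeta)\|\vx\|^2\|\vh\|^2$ would lose a factor $\lambda/\zeta$ at this point. So your one-sided lower-tail instinct is right, but it belongs on $U_j$, not on $\frac1N\sum_jb_j^2$, which is nonnegative and can simply be kept.

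Finally, Step 1 swaps $\mu$ and $\zeta$. The correct formula is $\E|\vh^*A\vx|^2=\zeta\|\vx\|^2\|\vh\|^2+\mu|\vx^*\vh|^2$: for $\vx=\ve_1$, $\vh=\ve_2$ the value is $\frac{r/d-\lambda}{d-1}=\zeta$. Also, $\E(\vh^*A\vh)^2=\lambda\|\vh\|^4$ exactly.
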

	\begin{proof} 
		Let $\vh=e^{-i\theta(\vz)}\vz-\vx$. 
		By Lemma \ref{lem:concentration1-1}  for $ N $ satisfying \eqref{N}, with probability at least $1-6\exp(-c_{2,\delta} d)-Nd^{-7r}$, we have
		\begin{equation}
			\label{eq:real_part}
			\frac{1}{N}\sum^N_{j=1}\big(\re(\vh^*A_j\vx)\big)^2
			\le \E\big(\re^2(\vh^*A_j\vx)\big)+\frac{\delta\zeta}{2}\|\vx\|^2\|\vh\|^2
		\end{equation}
		for any $ \vh\in\C^d $. By Theorem \ref{theo:concentration1}, for every $\delta\ge0$, there exist constants $c_{2,\delta},C_{2,\delta}>0$, when $N$ satisfies \eqref{eq:N_constant}, we have 
		\begin{equation}\label{eq:concentration_bound}
			\Big\|\frac1N\sum_{j=1}^N A_j-\frac rd I_d\Big\|\le \delta\frac rd 
		\end{equation}
		with probability at least $1-\exp(-c_{2,\delta}d)$. Then, \eqref{eq:concentration_bound} implies Inequality \eqref{curvature_condition1} in Lemma \ref{lem:square_estiate} 
		\begin{equation}\label{eq:temp}
			\frac{1}{N}\sum^N_{j=1}\Bigg( \sqrt{\frac{5}{2}}\re(\vh^*A_j\vx)+\sqrt{\frac{9}{10}}|\vh^*A_j\vh|\Bigg)^2\ge\frac{\zeta}{4}\|\vx\|^2\|\vh\|^2+\frac{ \E\big(\re^2(\vh^*A_j\vx)\big)}{2},\ 0< \|\vh\|\le\varepsilon_0\|\vx\|
		\end{equation}
		with probability at least $1-\exp(-\gamma_2 d\log^2d)$. Combining \eqref{eq:real_part} and \eqref{eq:temp}, we have
		\begin{align*}
			&\re\left(\langle\nabla_{\vz} E(\vz),\,\vz-\vx e^{i\theta(\vz)}\rangle\right)\\
			=&\frac{1}{N}\sum^N_{j=1}\Bigg( \sqrt{\frac{5}{2}}\re(\vh^*A_j\vx)+\sqrt{\frac{9}{10}}|\vh^*A_j\vh|\Bigg)^2-\frac{1}{2N}\sum^N_{j=1}\big(\re(\vh^*A_j\vx)\big)^2+\frac{1}{10N}\sum^N_{j=1}|\vh^*A_j\vh|^2\\
			\ge&\frac{\zeta}{4}\|\vx\|^2\|\vh\|^2+\frac{ \E\big(\re^2(\vh^*A_j\vx)\big)}{2}-\frac{1}{2N}\sum^N_{j=1}\big(\re(\vh^*A_j\vx)\big)^2+\frac{1}{10N}\sum^N_{j=1}|\vh^*A_j\vh|^2\\
			\ge&\frac{\zeta}{4}\|\vx\|^2\|\vh\|^2+\frac{ \E\big(\re^2(\vh^*A_j\vx)\big)}{2}-\frac12\E\big(\re^2(\vh^*A_j\vx)\big)-\frac{\delta\zeta}{4}\|\vx\|^2\|\vh\|^2+\frac{1}{10N}\sum^N_{j=1}|\vh^*A_j\vh|^2\\
			=&\frac{1}{10N}\sum^N_{j=1}|\vh^*A_j\vh|^2+\frac{(1-\delta)\zeta}{4}\|\vx\|^2\|\vh\|^2,
		\end{align*}
		which corresponds to Inequality \eqref{curvature_condition0}.
		
		Next, we will prove Inequality \eqref{eq:curvature_condition0}. 
		Let $\vh=e^{-i\theta(\vz)}\vz-\vx$, then $\|\vz\|\le(1+\varepsilon_1)\|\vx\|$.   
		Recall that $\rho^2=d(rN)^{-1}\sum_{j=1}^Ny_j$, Theorem \ref{theo:concentration1} and the lower bound $\|\vh\|\ge\varepsilon_0\|\vx\|/2$ give
		\begin{equation}\label{eq:concentration_rho}
			\left|\|\vx\|^2-\rho^2\right|
			\le\frac{\hat\delta\zeta\|\vx\|}{4(1+\varepsilon_1)\mu}
			\frac{\varepsilon_0}{2}\|\vx\|
			\leq\frac{\hat\delta\zeta}{4(1+\varepsilon_1)\mu}\|\vx\|\,\|\vh\|.
		\end{equation}
		with probability at least $1-\exp(-c_{2,\delta} d)$.  
		Note that
		\begin{align}
			&\left(\|\vz\|^2-\frac{d}{rN}\sum^N_{j=1}y_j\right)\re(\vh^*\vz e^{-i\theta(\vz)})=\left(\|\vz\|^2-\|\vx\|^2+\|\vx\|^2-\frac{d}{rN}\sum^N_{j=1}y_j\right)\re(\vh^*\vz e^{-i\theta(\vz)})\nonumber\\
			\le &(\|\vz\|^2-\|\vx\|^2)\re(\vh^*\vz e^{-i\theta(\vz)})+\left|\|\vx\|^2-\frac{d}{rN}\sum^N_{j=1}y_j\right|\|\vh\|\|\vz\|\nonumber\\
			\le &(\|\vz\|^2-\|\vx\|^2)\re(\vh^*\vz e^{-i\theta(\vz)})+\frac{\hat\delta\zeta}{4\mu}\|\vh\|^2\|\vx\|^2
			=(\|\vz\|^2-\|\vx\|^2)\re(\vh^*(\vx+\vh))+\frac{\hat\delta\zeta}{4\mu}\|\vh\|^2\|\vx\|^2\nonumber\\=&\|\vh\|^4+3\|\vh\|^2\re(\vx^*\vh)+2(\re(\vx^*\vh))^2+\frac{\hat\delta\zeta}{4\mu}\|\vh\|^2\|\vx\|^2,\label{eq:estimate_regular}
		\end{align}
		where the second inequality is from \eqref{eq:concentration_rho} since
		$\left|\|\vx\|^2-\rho^2\right|\|\vh\|\|\vz\|
		\le(\hat\delta\zeta/(4\mu))\|\vx\|^2\|\vh\|^2$, which is exactly the bound used in \eqref{eq:estimate_regular}.
		By Lemma \ref{lem:concentration1-1}  for $ N $ satisfying \eqref{N}, we have
		\begin{equation}
			\label{eq:real_part_tilde}
			\frac{1}{N}\sum^N_{j=1}\big(\re(\vh^*A_j\vx)\big)^2
			\le \E\big(\re^2(\vh^*A_j\vx)\big)+\frac{\widetilde\delta\zeta}{2}\|\vx\|^2\|\vh\|^2
		\end{equation}
		for any $ \vh\in\C^d $ and any $\widetilde\delta>0$ with probability at least $1-6\exp(-c_{2,\delta} d)-Nd^{-7r}$.
		\begin{align*}
			&\re\left(\langle\nabla_{\vz} \widetilde{E}(\vz),\,\vz-\vx e^{i\theta(\vz)}\rangle\right)\\
			=&\frac{1}{N}\sum^N_{j=1}\left(2\big(\re(\vh^*A_j\vx)\big)^2
			+3\re(\vh^*A_j\vx)(\vh^*A_j\vh)+|\vh^*A_j\vh|^2\right)-\mu\left(\|\vz\|^2-\frac{d}{Nr}\sum^N_{j=1}y_j\right)\re(\vh^*\vz e^{-i\theta(\vz)})\\
			=&\frac{1}{N}\sum_{j=1}^{N}\Bigg( \frac{3}{2\sqrt{1-\frac{\zeta}{10\lambda}}}\re(\vh^*A_j\vx)+\sqrt{1-\frac{\zeta}{10\lambda}}|\vh^*A_j\vh|\Bigg)^2+\frac{\zeta}{10\lambda N}\sum_{j=1}^{N}\big|\vh^*A_j\vh\big|^2\\
			&-\mu\left(\|\vz\|^2-\frac{d}{Nr}\sum^N_{j=1}y_j\right)\re(\vh^*\vz e^{-i\theta(\vz)})-\left(\frac{9}{4\left(1-\frac{\zeta}{10\lambda}\right)}-2\right) \frac{1}{N}\sum_{j=1}^{N}\re^2(\vh^*A_j\vx)\\
			\ge&\frac{1}{N}\sum_{j=1}^{N}\Bigg( \frac{3}{2\sqrt{1-\frac{\zeta}{10\lambda}}}\re(\vh^*A_j\vx)+\sqrt{1-\frac{\zeta}{10\lambda}}|\vh^*A_j\vh|\Bigg)^2+\frac{\zeta}{10\lambda N}\sum_{j=1}^{N}\big|\vh^*A_j\vh\big|^2\\
			&-\mu(\|\vh\|^4+3\re(\vx^*\vh)\|\vh\|^2+2(\re(\vx^*\vh))^2)-\frac{\hat\delta\zeta}{4}\|\vh\|^2\|\vx\|^2-\left(\frac{9}{4\left(1-\frac{\zeta}{10\lambda}\right)}-2\right)\E\big(\re^2(\vh^*A_j\vx)\big)\\
			&-\left(\frac{9}{4\left(1-\frac{\zeta}{10\lambda}\right)}-2\right)\cdot\frac{\widetilde\delta\zeta}{2}\|\vx\|^2\|\vh\|^2\\
			\ge&\left(\frac{9}{4\left(1-\frac{\zeta}{10\lambda}\right)}-2\right)\E\big(\re^2(\vh^*A_j\vx)\big)+\frac{\zeta}{4}\|\vx\|^2\|\vh\|^2-\frac{\hat\delta\zeta}{4}\|\vh\|^2\|\vx\|^2+\frac{\zeta}{10\lambda N}\sum_{j=1}^{N}\big|\vh^*A_j\vh\big|^2\\
			&-\left(\frac{9}{4\left(1-\frac{\zeta}{10\lambda}\right)}-2\right)\E\big(\re^2(\vh^*A_j\vx)\big)-\left(\frac{9}{4\left(1-\frac{\zeta}{10\lambda}\right)}-2\right)\cdot\frac{\widetilde\delta\zeta}{2}\|\vx\|^2\|\vh\|^2\\=&\frac{\zeta}{10\lambda N}\sum_{j=1}^{N}\big|\vh^*A_j\vh\big|^2+\left(1-\hat\delta-2\left(\frac{9}{4\left(1-\frac{\zeta}{10\lambda}\right)}-2\right)\widetilde\delta\right)\frac{\zeta}{4}\|\vx\|^2\|\vh\|^2\\:=&\frac{\zeta}{10\lambda N}\sum_{j=1}^{N}\big|\vh^*A_j\vh\big|^2+\frac{(1-\delta)\zeta}{4}\|\vx\|^2\|\vh\|^2,  
		\end{align*} 
		where the first inequality is from \eqref{eq:estimate_regular} and \eqref{eq:real_part_tilde}, the second inequality is from Inequality \eqref{curvature_condition2} in Lemma \ref{lem:square_estiate} with probability at least $1-\exp(-\gamma_2 d\log^2d)$.
		In the last line, regard $\delta\in(0,1)$ as the prescribed final accuracy, put
		\[
		C_*:=2\left(\frac{9}{4(1-\zeta/(10\lambda))}-2\right)>0,
		\qquad \hat\delta:=\frac\delta2,
		\qquad \widetilde\delta:=\frac{\delta}{2C_*}.
		\]
		Then $1-\hat\delta-C_*\widetilde\delta=1-\delta$, as required.	
	\end{proof}
	\begin{lemma}[Local Smoothness Conditions]\label{lem:lem_smooth_cond}
		Under the same assumptions as in Theorem \ref{theo:convergence}, there exist constants $c_{3,\delta}, C_{3,\delta}>0$, the following inequalities hold\\
		(1) when $ N$ satisfies $N\ge C_{3,\delta} d\left(\frac{\lambda}{\zeta}\right)^2\log^2 d$, with probability at least $1-7\exp(-c_{3,\delta} d)-Nd^{-7r}$, we have  for all $\vz\in\calS(\vx,\varepsilon_0)$
		\begin{align*}	
			\|\nabla_{\vz} E(\vz)\|^2
			\leq& L_{0,\delta}\frac{\lambda}{\zeta}\|\vx\|^2\Bigg(\frac{\zeta(1-\delta)}{8}\|\vx\|^2\dist^2(\vz, \vx)+\frac{1}{10N}\sum_{j=1}^{N}\big|(\vz-\vx e^{i\theta(\vz)})^*A_j(\vz-\vx e^{i\theta(\vz)})\big|^2\Bigg),
		\end{align*}
		(2) when $N\ge C_{3,\delta}\max\left(\frac dr\left(\frac{\mu}{\zeta}\right)^2\frac{\lambda}{\zeta}, d\left(\frac{\lambda}{\zeta}\right)^2\log^2 d\right)$, with probability at least $1-7\exp(-c_{3,\delta} d)-Nd^{-7r}$, we have  for all $\vz$ satisfying $\dist(\vz,\vx)\in[\frac{\varepsilon_0}2\|\vx\|,\varepsilon_1\|\vx\|]$  
		\begin{equation*}	
			\|\nabla_{\vz} \widetilde E(\vz)\|^2
			\leq \widetilde L_{0,\delta}\frac{\lambda}{\zeta}\|\vx\|^2\Bigg(\frac{\zeta(1-\delta)}{8}\|\vx\|^2\dist^2(\vz, \vx)+\frac{\zeta}{10\lambda N}\sum_{j=1}^{N}\big|(\vz-\vx e^{i\theta(\vz)})^*A_j(\vz-\vx e^{i\theta(\vz)})\big|^2\Bigg).
		\end{equation*}
		Here $\varepsilon_0=\sqrt{\frac{10\zeta}{27\lambda}}$, $\varepsilon_1=\sqrt{\frac{10}{27}}$, and
		\[
		L_{0,\delta}=\max\left\{120\frac{(1+\delta)^2}{1-\delta},270(1+\delta)\right\},\ \widetilde L_{0,\delta}=\max\left\{\frac{16\phi}{1-\delta},540(1+\delta)\right\},
		\
		\phi=15(1+\delta)^2+(1+\varepsilon_1)^2(2+\varepsilon_1+\delta)^2.
		\]
	\end{lemma}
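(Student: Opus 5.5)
We bound $\|\nabla_{\vz}E(\vz)\|$ by duality. Write $\vh=e^{-i\theta(\vz)}\vz-\vx$. Since $\|\nabla_{\vz}E(\vz)\|=\|e^{-i\theta(\vz)}\nabla_{\vz}E(\vz)\|$ and $y_j=\vx^*A_j\vx$, we have
\[
e^{-i\theta(\vz)}\nabla_{\vz}E(\vz)=\frac1N\sum_{j=1}^N\big(2\re(\vh^*A_j\vx)+\vh^*A_j\vh\big)A_j(\vx+\vh).
\]
Then $\|\nabla_{\vz}E(\vz)\|=\sup_{\|\vu\|=1}\big|\frac1N\sum_j(2\re(\vh^*A_j\vx)+\vh^*A_j\vh)\,\vu^*A_j(\vx+\vh)\big|$. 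By Cauchy--Schwarz in $j$,
\[
\|\nabla_{\vz}E(\vz)\|^2\le \Big(\frac1N\sum_j\big(2\re(\vh^*A_j\vx)+\vh^*A_j\vh\big)^2\Big)\cdot\sup_{\|\vu\|=1}\frac1N\sum_j|\vu^*A_j(\vx+\vh)|^2 .
\]
The second factor is bounded using $|\vu^*A_j\vw|^2\le (\vu^*A_j\vu)(\vw^*A_j\vw)$ and the uniform bound $\max_j\|A_j\|\le1$. It is cleaner, though, to write $\frac1N\sum_j|\vu^*A_j\vw|^2$ as a quadratic form and control it by the concentration of $\frac1N\sum_j A_j\vw\vw^*A_j$ around its mean $\E(A\vw\vw^*A)=\mu\|\vw\|^2 I+\zeta\,\vw\vw^*$ (up to the precise constants from the Haar moments). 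This mean has operator norm $\lambda\|\vw\|^2$. With $\|\vw\|\le(1+\varepsilon_0)\|\vx\|$, this gives the second factor $\le(1+\delta)\lambda(1+\varepsilon_0)^2\|\vx\|^2$, via Lemma~\ref{lem:concentration1} or Lemma~\ref{lem:concentration1-1} applied uniformly (a net argument over $\vu,\vw$, at the cost of the $Nd^{-7r}$ and $e^{-c d}$ failure terms).

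For the first factor we use $(a+b)^2\le 2a^2+2b^2$, bounding it by $\frac8N\sum_j\re^2(\vh^*A_j\vx)+\frac2N\sum_j|\vh^*A_j\vh|^2$. The $|\vh^*A_j\vh|^2$ term already appears on the right-hand side of the claim. For the $\re^2$ term we invoke \eqref{eq:real_part} from Lemma~\ref{lem:concentration1-1}, giving $\le \E\re^2(\vh^*A_j\vx)+\frac{\delta\zeta}{2}\|\vx\|^2\|\vh\|^2$. The expectation is computed from the second moments: $\E\re^2(\vh^*A\vx)\le\frac12(\mu+\lambda)\|\vx\|^2\|\vh\|^2\le\lambda\|\vx\|^2\|\vh\|^2$. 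Since the target has a factor $\frac{\lambda}{\zeta}\cdot\zeta\|\vx\|^2\|\vh\|^2=\lambda\|\vx\|^2\|\vh\|^2$ in front of $\dist^2$, the product of $\lambda\|\vx\|^2$ (second factor) and the $\re^2$ bound is of order $\lambda^2\|\vx\|^4\|\vh\|^2$. This is $\frac{\lambda}{\zeta}\|\vx\|^2\cdot\lambda\|\vx\|^2\|\vh\|^2$, and must be matched against $\frac{\lambda}{\zeta}\|\vx\|^2\cdot\frac{\zeta(1-\delta)}{8}\|\vx\|^2\|\vh\|^2\cdot L_{0,\delta}$. That is off by a factor $\lambda/\zeta$, so the crude split is too lossy. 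This is the main obstacle.

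To fix it, we use the second factor more cleverly. We do not bound $\frac1N\sum_j|\vu^*A_j\vw|^2$ by $\lambda$. Instead we use the sharper fact that the $\re(\vh^*A_j\vx)$ coefficient pairs with $\vu^*A_j\vx$, and treat the cross-structure via the quartic form $\frac1N\sum_j\re(\vh^*A_j\vx)\,\vu^*A_j\vx$. Its mean is $\mu\re(\vh^*\vx\ldots)+\zeta(\cdot)$, of size $\lambda\|\vx\|^2\|\vh\|$. Squaring gives $\lambda^2\|\vx\|^4\|\vh\|^2=\frac{\lambda}{\zeta}\|\vx\|^2\cdot\lambda\zeta\ldots$. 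Alternatively, and more simply, we accept that the factor $\lambda/\zeta$ in front is exactly what absorbs this. Since $\lambda\le\frac{\lambda}{\zeta}\cdot\zeta$, we get $\lambda^2\|\vx\|^4\|\vh\|^2=\frac{\lambda}{\zeta}\|\vx\|^2\cdot\zeta\lambda\|\vx\|^2\|\vh\|^2$. We then need $\lambda\lesssim 1$, which holds. Hence the bound $\|\nabla E\|^2\le C(1+\delta)^2\frac{\lambda}{\zeta}\|\vx\|^2\big(\zeta\|\vx\|^2\|\vh\|^2\cdot\frac{\lambda}{\zeta}\ldots\big)$ closes once constants are tracked. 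The terms with $\|\vh\|^2\le\varepsilon_0^2\|\vx\|^2=\frac{10\zeta}{27\lambda}\|\vx\|^2$ convert $\lambda$ factors into $\zeta$ factors, which is why $\varepsilon_0$ has this form, and this produces the constants $120(1+\delta)^2/(1-\delta)$ and $270(1+\delta)$.

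Part (2) follows the same template, with the extra term $\mu(\|\vz\|^2-\rho^2)\vz$. This term is bounded using \eqref{eq:concentration_rho}, $\big|\|\vz\|^2-\|\vx\|^2\big|\le(2+\varepsilon_1)\|\vx\|\|\vh\|$, and $\|\vz\|\le(1+\varepsilon_1)\|\vx\|$. This gives the $(1+\varepsilon_1)^2(2+\varepsilon_1+\delta)^2$ part of $\phi$. The lower bound $\|\vh\|\ge\frac{\varepsilon_0}{2}\|\vx\|$ is used to convert the $\rho$-error into a multiple of $\|\vh\|$. The sample condition with $\frac dr(\mu/\zeta)^2\frac\lambda\zeta$ makes that conversion valid.
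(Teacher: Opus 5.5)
\paragraph{Verdict: genuine gap in the second factor.} Your route differs from the paper's. You apply Cauchy--Schwarz in $j$ once, to $\frac1N\sum_j c_j\,\vu^*A_j(\vx+\vh)$ with $c_j=2\re(\vh^*A_j\vx)+\vh^*A_j\vh$. The paper instead expands $c_j\,\vv^*A_j(\vx+\vh)$ into four products and dominates each one pointwise via $|\vv^*A_j\vh|\le(\vv^*A_j\vv)^{1/2}(\vh^*A_j\vh)^{1/2}$. That way it only ever needs the fixed-$\vx$ matrix $\frac1N\sum_j(\vx^*A_j\vx)A_j$, the matrix $\frac1N\sum_jA_j$, and the quantity $S:=\frac1N\sum_j|\vh^*A_j\vh|^2$, which stays on the right-hand side.

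Your first factor is handled correctly by Lemma~\ref{lem:concentration1-1}. Your bound on the second factor is false. You claim $\sup_{\|\vu\|=1}\frac1N\sum_j|\vu^*A_j\mathbf w|^2\le(1+\delta)\lambda\|\mathbf w\|^2$ for $\mathbf w=\vx+\vh$, obtained by a net over $\mathbf w$. With $N$ of order $d\log^2d$ no such bound holds, even on $\calS(\vx,\varepsilon_0)$. Here is a counterexample:
\begin{itemize}
\item Take $r=1$ and write $A_j=\mathbf a_j\mathbf a_j^*$.
\item Set $\vu=\mathbf a_1$ and $\vh=\varepsilon_0\|\vx\|e^{i\phi}\mathbf a_1$, with $\phi$ chosen so that $\vx^*\vh\ge0$. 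Then $\vz=\vx+\vh\in\calS(\vx,\varepsilon_0)$ and $\theta(\vz)=0$.
\item The $j=1$ term alone gives $\frac1N\sum_j|\vu^*A_j(\vx+\vh)|^2\ge\varepsilon_0^2\|\vx\|^2/N=\frac{5}{27N}\|\vx\|^2$.
\item This is far larger than $(1+\delta)\lambda(1+\varepsilon_0)^2\|\vx\|^2=O\big(\|\vx\|^2/d^2\big)$ unless $N\gtrsim d^2$.
\end{itemize}
Separately, the $Nd^{-7r}$ truncation failure in Theorem~\ref{theo:concentration1-1} is paid per fixed vector. It therefore cannot survive a union bound over an $e^{O(d)}$-net.

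\paragraph{How to repair it.} Split the second factor and use fixed-vector concentration only for $\vx$. Since $|\vu^*A_j\vh|^2\le\|\vh\|^2\,\vu^*A_j\vu$, you get
\[
\frac1N\sum_j|\vu^*A_j(\vx+\vh)|^2\le(1+s)\,\vu^*\Big(\frac1N\sum_jA_j\vx\vx^*A_j\Big)\vu+(1+s^{-1})\|\vh\|^2\,\vu^*\Big(\frac1N\sum_jA_j\Big)\vu\le(1+\delta)\Big[(1+s)\lambda\|\vx\|^2+(1+s^{-1})\frac rd\|\vh\|^2\Big],
\]
using \eqref{eq:eq2} and Theorem~\ref{theo:concentration1}. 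The term $\frac rd\|\vh\|^2$ is genuinely not $O(\lambda\|\vx\|^2)$; for $r=1$ it can be of order $\|\vx\|^2/d$. It is nevertheless absorbed:
\begin{itemize}
\item Against the $\re^2$ part of the first factor it contributes $O\big(\lambda\frac rd\|\vh\|^2\cdot\|\vx\|^2\|\vh\|^2\big)\le O\big(\lambda\|\vx\|^4\|\vh\|^2\big)$.
\item Against $S$ it contributes $O\big(\frac rd\|\vh\|^2S\big)\le O\big(\|\vx\|^2S\big)$.
\item This fits under the allowance $\frac{L_{0,\delta}}{10}\frac\lambda\zeta\|\vx\|^2S$ in part (1), and under $\frac{\widetilde L_{0,\delta}}{10}\|\vx\|^2S$ in part (2), where $\|\vh\|\le\varepsilon_1\|\vx\|$.
\end{itemize}
This is the same mechanism as the paper's estimate of $I_3$.

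\paragraph{Two smaller points.}
\begin{itemize}
\item Your ``main obstacle'' is a miscount. The $\dist^2$ coefficient on the right is $L_{0,\delta}\frac\lambda\zeta\cdot\frac{\zeta(1-\delta)}{8}\|\vx\|^4=\frac{L_{0,\delta}(1-\delta)}{8}\lambda\|\vx\|^4$. A contribution of order $\lambda^2\|\vx\|^4\|\vh\|^2$ is therefore already acceptable, since $\lambda\le1$. No quartic-form refinement is needed, and $\varepsilon_0$ plays essentially no role in this lemma; part (2) runs up to $\varepsilon_1$.
\item The constants $120(1+\delta)^2/(1-\delta)$ and $270(1+\delta)$ come from the paper's $27I_1+12I_2+3I_3$ bookkeeping ($8\cdot15$ and $10\cdot27$), not from your decomposition. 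Your route yields different constants. They can be brought under $L_{0,\delta}$ and $\widetilde L_{0,\delta}$ with weighted splits, for example $(2a+b)^2\le5a^2+5b^2$ together with a tuned $s$ in the display above. You need to actually check this rather than assert it.
\end{itemize}
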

	%--------------------------------------------------------------------------------------------------------------------
	\begin{proof}  We prove the result for $E$. Set $\vh:=e^{-i\theta(\vz)}\vz-\vx$, then $\|\vh\|\le\varepsilon_0\|\vx\|$. For any $\vu\in\C^d$ with $\|\vu\|=1$, let $\vv=e^{-i\theta(\vz)}\vu$, we calculate
		\begin{align*}
			& |\nabla_{\vz} E(\vz)^*\vu|^2
			=\Big|\frac{1}{N}\sum^N_{j=1}\Big(2\re(\vx^*A_j\vh)+\vh^*A_j\vh\Big)\Big(\vv^*A_j\vx+\vv^*A_j\vh\Big)\Big|^2\\
			\le&\bigg(\frac{1}{N}\sum^N_{j=1}3\sqrt{|\vx^*A_j\vx|\,|\vv^*A_j\vv|}\,|\vh^*A_j\vh|+2\sqrt{|\vh^*A_j\vh|\,|\vv^*A_j\vv|}\,|\vx^*A_j\vx|+\sqrt{|\vh^*A_j\vh|^3|\vv^*A_j\vv|}\bigg)^2\\
			\le& 27\,\Big(\frac{1}{N}\sum^N_{j=1}\sqrt{|\vx^*A_j\vx|\,|\vv^*A_j\vv|}\,|\vh^*A_j\vh|\Big)^2+12\,\Big(\frac{1}{N}\sum^N_{j=1}\sqrt{|\vh^*A_j\vh|\,|\vv^*A_j\vv|}\,|\vx^*A_j\vx|\Big)^2\\
			&+3\,\Big(\frac{1}{N}\sum^N_{j=1}\sqrt{|\vh^*A_j\vh|^3|\vv^*A_j\vv|}\Big)^2
			:= 27I_1+12I_2+3I_3.
		\end{align*}
		By Theorem \ref{theo:concentration1-1}, when $ N $ satisfies \eqref{N}, we have for any $\delta>0$, all $\vh\in\C^d$ and a fixed $\vx\in\C^d$
		\begin{equation}
			\label{eq:eq_Y}
			\vh^*\bigg(\frac{1}{N}\sum^N_{j=1}(\vx^*A_j\vx) A_j\bigg)\vh\le (1+\delta)\lambda\|\vh\|^2\|\vx\|^2
		\end{equation}
		holds with probability at least $1-6\exp(-c_{3,\delta} d)-Nd^{-7r}$. We apply Cauchy-Schwarz inequality together with \eqref{eq:eq_Y} to $I_1$ and $I_2$
		\begin{eqnarray*}
			I_1&\le&\bigg(\frac{1}{N}\sum^N_{j=1}|\vh^*A_j\vh|^2\bigg)\cdot\bigg(\frac{1}{N}\sum^N_{j=1}|\vv^*A_j\vv||\vx^*A_j\vx|\bigg)\\
			&\le&\frac{1}{N}\sum^N_{j=1}|\vh^*A_j\vh|^2\cdot\vv^*\bigg(\frac{1}{N}\sum^N_{j=1}(\vx^*A_j\vx) A_j\bigg)\vv
			\le(1+\delta)\lambda\|\vx\|^2\cdot\frac{1}{N}\sum^N_{j=1}|\vh^*A_j\vh|^2,\\
			I_2&\le&\vv^*\bigg(\frac{1}{N}\sum^N_{j=1}(\vx^*A_j\vx) A_j\bigg)\vv\cdot\vh^*\bigg(\frac{1}{N}\sum^N_{j=1}(\vx^*A_j\vx) A_j\bigg)\,\vh\le\lambda^2(1+\delta)^2\|\vx\|^4\|\vh\|^2.
		\end{eqnarray*}
		We estimate $I_3$ by $\|A_j\|\le 1,\forall j$ and $\frac{1}{N}\sum^N_{j=1}\vh^*A_j\vh\le(1+\delta)\frac{r}{d}\|\vh\|^2$ from Theorem \ref{theo:concentration1} with probability at least $1-e^{-c_{3,\delta}d}$, when $N$ satisfies \eqref{eq:N_constant}
		\begin{eqnarray*}
			I_3
			&\leq&\bigg(\frac{1}{N}\sum^N_{j=1}|\vh^*A_j\vh|^2\bigg)\cdot\bigg(\frac{1}{N}\sum^N_{j=1}|\vh^*A_j\vh||\vv^*A_j\vv|\bigg)
			\le \frac{(1+\delta)^2r^2}{d^2}\|\vh\|^6.
		\end{eqnarray*}
		When $\|\vh\|\le\varepsilon_0\|\vx\|\leq\varepsilon_1\|\vx\|$, we have $$I_3\le\frac{(1+\delta)^2r^2}{d^2}\|\vh\|^2\frac{\zeta^2}{\lambda^2}\|\vx\|^4\le(1+\delta)^2\lambda\|\vx\|^4\|\vh\|^2.$$ Set $L_{0,\delta}=\max\left(120\frac{(1+\delta)^2}{1-\delta},270(1+\delta)\right)$. Therefore, when $ N$ satisfies \eqref{N}, we have 
		\begin{align}
			&\|\nabla_{\vz} E(\vz)\|^2=\max_{\|\vu\|=1}|\nabla_{\vz} E(\vz)^*\vu|^2\nonumber\\
			\le& 27 \lambda(1+\delta)\|\vx\|^2\cdot\frac{1}{N}\sum^N_{j=1}|\vh^*A_j\vh|^2+12\lambda^2(1+\delta)^2\|\vx\|^4\|\vh\|^2+3\lambda(1+\delta)^2\|\vx\|^4\|\vh\|^2\nonumber\\
			\le&15\lambda(1+\delta)^2\|\vx\|^4\|\vh\|^2+\frac{27\lambda(1+\delta)\|\vx\|^2}{N}\sum^N_{j=1}|\vh^*A_j\vh|^2\nonumber\\ 
			\le& L_{0,\delta}\frac{\lambda}{\zeta}\|\vx\|^2\Bigg(\frac{\zeta(1-\delta)}{8}\|\vx\|^2\|\vh\|^2+\frac{1}{10N}\sum_{j=1}^{N}|\vh^*A_j \vh|^2\Bigg)\label{R_cond}
		\end{align}   
		with probability at least $1-7\exp(-c_{3,\delta} d)-Nd^{-7r}$.
		
		Next, we will prove the result for $\widetilde E$. Note that
		\begin{equation}\label{eq:bound_e_tilde}
			\|\nabla_{\vz} \widetilde E(\vz)\|^2=\left\|\nabla_{\vz} E(\vz)-\mu\left(\|\vz\|^2-\frac{d}{rN}\sum^N_{j=1}y_j\right)\vz\right\|^2\le2\|\nabla_{\vz} E(\vz)\|^2+2\mu^2\left\|\left(\|\vz\|^2-\frac{d}{rN}\sum^N_{j=1}y_j\right)\vz\right\|^2.
		\end{equation}
		Set $\vh:=e^{-i\theta(\vz)}\vz-\vx$, so that $\varepsilon_0\|\vx\|/2\le \|\vh\|\le\varepsilon_1\|\vx\|$.  Theorem \ref{theo:concentration1} implies
		\begin{equation}\label{eq:concentration_yj}
			\left|\|\vx\|^2-\rho^2\right|
			\le\frac{\delta\zeta\varepsilon_0}{8(1+\varepsilon_1)\mu}\|\vx\|^2
			\le\frac{\delta\varepsilon_0}{2}\|\vx\|^2
			\le\delta\|\vx\|\|\vh\|,
		\end{equation}
		with probability at least $1-\exp(-c_{3,\delta}d)$, where the second inequality uses $\zeta\le\mu$ and the last uses $\|\vh\|\ge\varepsilon_0\|\vx\|/2$. Note that $\|\vz\|^2-\rho^2=\|\vz\|^2-\|\vx\|^2+\|\vx\|^2-\rho^2$. Then 
		\begin{align}
			&\left\|\left(\|\vz\|^2-\frac{d}{rN}\sum^N_{j=1}y_j\right)\vz\right\|\le\left|\|\vz\|^2-\frac{d}{rN}\sum^N_{j=1}y_j\right|\|\vz\|
			\le\left(\|\vh\|(\|\vh\|+2\|\vx\|)+\left|\|\vx\|^2-\frac{d}{rN}\sum^N_{j=1}y_j\right|\right)\|\vz\|\nonumber\\
			\le&(1+\varepsilon_1)\left(2+\varepsilon_1+\delta\right)\|\vx\|^2\|\vh\|,\label{eq:bound_regular}
		\end{align}
		where the last inequality holds from $\|\vz\|=\|\vx+\vh\|\le\|\vx\|+\|\vh\|\le(1+\varepsilon_1)\|\vx\|$ and \eqref{eq:concentration_yj}. 
		Then
		\begin{eqnarray*}
			&&\|\nabla_{\vz} \widetilde E(\vz)\|^2\\
			&\le&2\cdot\left[15\lambda(1+\delta)^2\|\vx\|^4\|\vh\|^2+\frac{27\lambda(1+\delta)\|\vx\|^2}{N}\sum^N_{j=1}|\vh^*A_j\vh|^2\right]+2\mu^2(1+\varepsilon_1)^2\left(2+\varepsilon_1+\delta\right)^2\|\vx\|^4\|\vh\|^2\\
			&\le&2\phi\lambda\|\vx\|^4\|\vh\|^2+\frac{54\lambda(1+\delta)\|\vx\|^2}{N}\sum^N_{j=1}|\vh^*A_j\vh|^2\\
			&\le&\widetilde L_{0,\delta}\frac{\lambda}{\zeta}\|\vx\|^2\Bigg(\frac{\zeta(1-\delta)}{8}\|\vx\|^2\dist^2(\vz, \vx)+\frac{\zeta}{10\lambda N}\sum_{j=1}^{N}\big|(\vz-\vx e^{i\theta(\vz)})^*A_j(\vz-\vx e^{i\theta(\vz)})\big|^2\Bigg),
		\end{eqnarray*} 
		by setting
		$\widetilde L_{0,\delta}=\max\{16\phi/(1-\delta),540(1+\delta)\}$ with $\phi=15(1+\delta)^2+(1+\varepsilon_1)^2(2+\varepsilon_1+\delta)^2$, where
		the first inequality follows from \eqref{eq:bound_e_tilde}, \eqref{eq:bound_regular}, and \eqref{R_cond}.  
	\end{proof}  
	
	%################################################################################
	\section{Proof of Theorems}\label{sec:proof_theorem}
	\subsection{Proof of Theorem \ref{theo:convergence}.}
	\label{subsec:proof-convergence}
	\begin{proof}
		In the following, we prove $\vz_{k+1} \in \calS(\vx,\varepsilon_0)$  and $$\dist(\vz_{k+1},\vx)\le\sqrt{1-\frac{\xi\zeta(1-\delta)}{4}\|\vx\|^2}\dist(\vz_k,\vx)$$
		if $\vz_k \in \calS(\vx,\varepsilon_0)$. According to Lemma \ref{lem:lem_regularity_cond}, for every $0<\delta<1$ and $\vz\in\calS(\vx,\varepsilon_0)$ we have  
		\begin{equation}\label{curvature_condition}
			\re\left(\langle\nabla_{\vz} E(\vz),\,\vz-\vx e^{i\theta(\vz)}\rangle\right)\ge \frac{\zeta(1-\delta)}{4}\|\vx\|^2\cdot\dist^2(\vz,\vx)+\frac{1}{10N}\sum_{j=1}^{N}\big|(\vz-\vx e^{i\theta(\vz)})^*A_j(\vz-\vx e^{i\theta(\vz)})\big|^2
		\end{equation}
		with probability at least $ 1-7\exp(- c_{2,\delta} d)-\exp(-\gamma_2 d\log^2d)-Nd^{-7r}$ when $ N $ satisfies \eqref{eq:N2}.
		
		According to Lemma \ref{lem:lem_smooth_cond}, 
		when $ N $ satisfies \eqref{eq:N2}, we have for all $\delta>0$ and $\vz\in\calS(\vx,\varepsilon_0)$
		\begin{equation}	\label{smooth_cond}
			\|\nabla_{\vz} E(\vz)\|^2
			\leq L_{0,\delta}\frac{\lambda}{\zeta}\|\vx\|^2\Bigg(\frac{\zeta(1-\delta)}{8}\|\vx\|^2\dist^2(\vz, \vx)+\frac{1}{10N}\sum_{j=1}^{N}\big|(\vz-\vx e^{i\theta(\vz)})^*A_j(\vz-\vx e^{i\theta(\vz)})\big|^2\Bigg)
		\end{equation}
		with probability at least $ 1-7\exp(-c_{3,\delta} d) -Nd^{-7r}$. 
		Fix $k\ge0$ and assume $\vz_k\in\calS(\vx,\varepsilon_0)$. Since the curvature bound contains twice the first term appearing in the smoothness bound, \eqref{curvature_condition} and \eqref{smooth_cond} give
		\[
		\re\big(\langle\nabla_{\vz} E(\vz_k),\,\vz_k-\vx e^{i\theta(\vz_k)}\rangle\big)\geq\frac{\zeta(1-\delta)}{8}\|\vx\|^2\|\vz_k-\vx e^{i\theta(\vz_k)}\|^2+\frac{\zeta}{L_{0,\delta}\lambda\|\vx\|^2}\|\nabla_{\vz} E(\vz_k)\|^2
		\]
		with probability greater than  $ 1-7\exp(-c_{0,\delta} d)-\exp(-\gamma_0d\log^2d)-Nd^{-7r}$ provided $ N $ satisfying \eqref{eq:N2}. 
		\noindent
		Then 
		\begin{align*}
			&\dist(\vz_{k+1},\vx)^2=\|\vz_{k+1}-\vx e^{i\theta(\vz_{k+1})}\|^2
			\le\|\vz_{k+1}-\vx e^{i\theta(\vz_k)}\|^2
			\le\|\vz_k-\vx e^{i\theta(\vz_k)}-\xi\cdot\nabla_{\vz} E(\vz_k)\|^2\\
			=&\|\vz_k-\vx e^{i\theta(\vz_k)}\|^2-2\xi\cdot\re\big(\langle\nabla_{\vz} E(\vz_k),\,\vz_k-\vx e^{i\theta(\vz_k)}\rangle\big)+\xi^2\|\nabla_{\vz} E(\vz_k)\|^2\\
			\le&\|\vz_k-\vx e^{i\theta(\vz_k)}\|^2-2\xi\cdot\Bigg(\frac{\zeta(1-\delta)}{8}\|\vx\|^2\|\vz_k-\vx e^{i\theta(\vz_k)}\|^2+\frac{\zeta}{L_{0,\delta}\lambda\|\vx\|^2}\|\nabla_{\vz} E(\vz_k)\|^2\Bigg)+\xi^2\|\nabla_{\vz}E(\vz_k)\|^2\\
			=&\left(1-\frac{\xi\zeta(1-\delta)}{4}\|\vx\|^2\right)\|\vz_k-\vx e^{i\theta(\vz_k)}\|^2+\xi\left(\xi-\frac{2\zeta}{L_{0,\delta}\lambda\|\vx\|^2}\right)\|\nabla_{\vz}E(\vz_k)\|^2\\
			\leq&\left(1-\frac{\xi\zeta(1-\delta)}{4}\|\vx\|^2\right)\|\vz_k-\vx e^{i\theta(\vz_k)}\|^2=\left(1-\frac{\xi\zeta(1-\delta)}{4}\|\vx\|^2\right)\dist(\vz_k,\vx)^2,
		\end{align*}
		Since  $0<\zeta\le\lambda\le1$, $L_{0,\delta}>1$ and $0<\xi\le\frac{2\zeta}{L_{0,\delta}\lambda\|\vx\|^2}$,
		\[
		0<\frac{\xi\zeta(1-\delta)}4\|\vx\|^2
		\le\frac{\zeta^2(1-\delta)}{2L_{0,\delta}\lambda}
		\le\frac1{2L_{0,\delta}}<1.
		\]
		Thus the contraction factor lies in $(0,1)$, and \eqref{eq:local_linear_rate} together with $\vz_{k+1}\in\calS(\vx,\varepsilon_0)$ follows. Induction from $\vz_0\in\calS(\vx,\varepsilon_0)$ completes the proof.
	\end{proof}
	\subsection{Proof of Theorem \ref{theo:relation}.}\label{subsec:proof_convergence1}
	\begin{proof}
		According to (2) in Lemma \ref{lem:lem_regularity_cond}, for all $\delta>0$ and $\dist(\vz,\vx)\in[\frac{\varepsilon_0}2\|\vx\|,\varepsilon_1\|\vx\|]$ we have  
		\begin{equation}\label{curvature_condition_tilde}
			\re\left(\langle\nabla_{\vz} \widetilde E(\vz),\vz-\vx e^{i\theta(\vz)}\rangle\right)\ge \frac{\zeta(1-\delta)}{4}\|\vx\|^2\cdot\dist^2(\vz,\vx)+\frac{\zeta}{10\lambda N}\sum_{j=1}^{N}\big|(\vz-\vx e^{i\theta(\vz)})^*A_j(\vz-\vx e^{i\theta(\vz)})\big|^2,
		\end{equation}
		with probability at least  $1-7\exp(-c_{2,\delta} d)-Nd^{-7r}
		-\exp(-\gamma_2 d\log^2 d)$ when $N$ satisfies \eqref{eq:N_tilde}, since 
		\[
		\frac dr\left(\frac\mu\zeta\right)^2\frac\lambda\zeta
		\le
		\frac{d^2}{d-r}\left(\frac\lambda\zeta\right)^2,
		\qquad
		d\left(\frac\lambda\zeta\right)^2
		\le
		\frac{d^2}{d-r}\left(\frac\lambda\zeta\right)^2.
		\]
		According to Lemma \ref{lem:lem_smooth_cond}, 
		when $N$ satisfies \eqref{eq:N_tilde}, we have for all $\delta>0$ and $\dist(\vz,\vx)\in[\frac{\varepsilon_0}2\|\vx\|,\varepsilon_1\|\vx\|]$
		\begin{equation}	\label{smooth_cond_tilde}
			\|\nabla_{\vz} \widetilde E(\vz)\|^2
			\leq \widetilde L_{0,\delta}\frac{\lambda}{\zeta}\|\vx\|^2\Bigg(\frac{\zeta(1-\delta)}{8}\|\vx\|^2\dist^2(\vz, \vx)+\frac{\zeta}{10\lambda N}\sum_{j=1}^{N}\big|(\vz-\vx e^{i\theta(\vz)})^*A_j(\vz-\vx e^{i\theta(\vz)})\big|^2\Bigg)
		\end{equation}
		with probability at least $1-7\exp(-c_{3,\delta} d)-Nd^{-7r}$, where $\widetilde L_{0,\delta}>0$ depends on $\delta$.
		
		\noindent
		Combining \eqref{curvature_condition_tilde} and \eqref{smooth_cond_tilde} yields
		\[
		\re\big(\langle\nabla_{\vz} \widetilde E(\vz),\,\vz-\vx e^{i\theta(\vz)}\rangle\big)\geq\frac{\zeta(1-\delta)}{8}\|\vx\|^2\|\vz-\vx e^{i\theta(\vz)}\|^2+\frac{\zeta}{\widetilde L_{0,\delta}\lambda\|\vx\|^2}\|\nabla_{\vz} \widetilde E(\vz)\|^2
		\]
		with probability greater than  $1-7\exp(-\widetilde c_{0,\delta} d)- Nd^{-7r}
		-\exp(-\widetilde \gamma_0 d\log^2 d)$ provided $N$ satisfying \eqref{eq:N_tilde}.  
		\noindent
		Then 
		\begin{align*}
			&\dist(\widetilde\vz,\vx)^2=\|\widetilde\vz-\vx e^{i\theta(\widetilde\vz)}\|^2
			\le\|\widetilde\vz-\vx e^{i\theta(\vz)}\|^2
			\le\|\vz-\vx e^{i\theta(\vz)}-\widetilde\xi\cdot\nabla_{\vz} \widetilde E(\vz)\|^2\\
			=&\|\vz-\vx e^{i\theta(\vz)}\|^2-2\widetilde\xi\cdot\re\big(\langle\nabla_{\vz} \widetilde E(\vz),\,\vz-\vx e^{i\theta(\vz)}\rangle\big)+\widetilde\xi^2\|\nabla_{\vz} \widetilde E(\vz)\|^2\\
			\le&\|\vz-\vx e^{i\theta(\vz)}\|^2-2\widetilde\xi\cdot\Bigg(\frac{\zeta(1-\delta)}{8}\|\vx\|^2\|\vz-\vx e^{i\theta(\vz)}\|^2+\frac{\zeta}{\widetilde L_{0,\delta}\lambda\|\vx\|^2}\|\nabla_{\vz} \widetilde E(\vz)\|^2\Bigg)+\widetilde\xi^2\|\nabla_{\vz}\widetilde E(\vz)\|^2\\
			=&\left(1-\frac{\widetilde\xi\zeta(1-\delta)}{4}\|\vx\|^2\right)\|\vz-\vx e^{i\theta(\vz)}\|^2+\widetilde\xi\left(\widetilde\xi-\frac{2\zeta}{\widetilde L_{0,\delta}\lambda\|\vx\|^2}\right)\|\nabla_{\vz}\widetilde E(\vz)\|^2\\
			\leq&\left(1-\frac{\widetilde\xi\zeta(1-\delta)}{4}\|\vx\|^2\right)\|\vz-\vx e^{i\theta(\vz)}\|^2=\left(1-\frac{\widetilde\xi\zeta(1-\delta)}{4}\|\vx\|^2\right)\dist(\vz,\vx)^2,
		\end{align*}
		Since $0<\zeta\le\lambda\le1$, $\widetilde L_{0,\delta}\ge1$ and $0<\widetilde\xi\le \frac{2\zeta}{\widetilde L_{0,\delta}\lambda\|\vx\|^2}$, also gives
		\[
		0<\frac{\widetilde\xi\zeta(1-\delta)}4\|\vx\|^2
		\le\frac{2\zeta^2(1-\delta)}{4\widetilde L_{0,\delta}\lambda}
		\le\frac1{2}.
		\]
		Thus the contraction factor belongs to $[1/2,1)$.
		So we have $\dist^2(\widetilde\vz,\vx)\le\left(1-\frac{\widetilde\xi\zeta(1-\delta)}{4}\|\vx\|^2\right)\dist^2(\vz,\vx)$ when
		$\dist(\vz,\vx)\in[\frac{\varepsilon_0}2\|\vx\|,\varepsilon_1\|\vx\|]$.
	\end{proof}
	\subsection{Proof of Theorem \ref{theo:regularized_inner_invariance}}\label{subsec:proof-invariance}
	\begin{proof}
		Put
		\[
		\vh=e^{-i\theta(\vz)}\vz-\vx,\qquad
		s=\|\vh\|=\dist(\vz,\vx),\qquad
		\vz^+=\vz-\widetilde\xi\nabla_{\vz}\widetilde E(\vz).
		\]
		Next, we consider two cases.
		
		\noindent
		\textbf{Case 1: $0\le s\le\varepsilon_0\|\vx\|/2$.} Since $0\preceq A_j\preceq I_d$,
		\begin{equation}\label{eq:bd1}
			(\vh^*A_j\vh)^2
			\le s^2\vh^*A_j\vh.
		\end{equation}
		Theorem \ref{theo:concentration1} together with \eqref{eq:bd1} implies
		\[
		\frac1N\sum_{j=1}^N(\vh^*A_j\vh)^2
		\le
		s^2\vh^*
		\left(\frac1N\sum_{j=1}^NA_j\right)\vh
		\le
		(1+\delta)\frac rd\,s^4.
		\]
		Since
		\[
		s\le\frac{\varepsilon_0}{2}\|\vx\|,
		\qquad
		\varepsilon_0^2=\frac{10\zeta}{27\lambda},
		\qquad
		\lambda\le1,\qquad \frac rd\le1,
		\]
		we obtain by Part~(1) of Lemma~\ref{lem:lem_smooth_cond}
		\begin{align*}
			\|\nabla_{\vz} E(\vz)\|^2
			&\le
			L_{0,\delta}\frac{\lambda}{\zeta}\|\vx\|^2
			\left(
			\frac{\zeta(1-\delta)}8\|\vx\|^2s^2
			+\frac{1+\delta}{10}\frac rd\,s^4
			\right)\le
			L_{0,\delta}
			\left(
			\frac{1-\delta}{32}
			+\frac{1+\delta}{432}
			\right)
			\varepsilon_0^2\|\vx\|^6.
		\end{align*}
		Theorem \ref{theo:concentration1} implies
		\[
		|\rho^2-\|\vx\|^2|
		\le
		\frac{\delta\zeta\varepsilon_0}
		{8(1+\varepsilon_1)\mu}\|\vx\|^2
		\le
		\frac{\delta\varepsilon_0}{2}\|\vx\|^2,
		\]
		with probability at least $1-\exp(-\widetilde c_{1,\delta} d)$, where we used $\zeta\le\mu$. Moreover,
		\[
		\big|\|\vz\|^2-\|\vx\|^2\big|
		\le s(2\|\vx\|+s),
		\qquad
		\|\vz\|\le\|\vx\|+s.
		\]
		Since $s\le\varepsilon_0\|\vx\|/2$,
		$\varepsilon_0\le\varepsilon_1$, and $\mu\le1$, it follows that
		\[
		\mu\big|\|\vz\|^2-\rho^2\big|\,\|\vz\|
		\le
		\frac12
		\left(2+\frac{\varepsilon_1}{2}+\delta\right)
		\left(1+\frac{\varepsilon_1}{2}\right)
		\varepsilon_0\|\vx\|^3.
		\]
		Thus, from
		\[
		\nabla_{\vz}\widetilde E(\vz)
		=
		\nabla_{\vz} E(\vz)-\mu(\|\vz\|^2-\rho^2)\vz,
		\]
		we have
		\[
		\|\nabla_{\vz}\widetilde E(\vz)\|
		\le
		C_{\mathrm{in},\delta}\varepsilon_0\|\vx\|^3,
		\]
		where
		\[
		C_{\mathrm{in},\delta}
		:=
		\left[
		L_{0,\delta}
		\left(
		\frac{1-\delta}{32}
		+\frac{1+\delta}{432}
		\right)
		\right]^{1/2}
		+
		\frac12
		\left(2+\frac{\varepsilon_1}{2}+\delta\right)
		\left(1+\frac{\varepsilon_1}{2}\right).
		\]
		Since $L_{0,\delta}$ depends only on $\delta$ and
		$\varepsilon_1=\sqrt{10/27}$ is fixed,
		$C_{\mathrm{in},\delta}$ also depends only on $\delta$. Since $$L_{1,\delta}=\max\left\{\frac{16\phi}{1-\delta},540(1+\delta),4\left[
		L_{0,\delta}
		\left(
		\frac{1-\delta}{32}
		+\frac{1+\delta}{432}
		\right)
		\right]^{1/2}
		+
		2\left(2+\frac{\varepsilon_1}{2}+\delta\right)
		(1+\frac{\varepsilon_1}{2}\right\}
		\geq4C_{\mathrm{in},\delta},$$
		with $\phi=15(1+\delta)^2+(1+\varepsilon_1)^2(2+\varepsilon_1+\delta)^2$, the step-size condition and $\zeta\le\lambda$ then give
		\[
		\widetilde\xi\|\nabla_{\vz}\widetilde E(\vz)\|
		\le
		\frac{2\zeta C_{\mathrm{in},\delta}}
		{L_{1,\delta}\lambda}
		\varepsilon_0\|\vx\|
		\le
		\frac{\varepsilon_0}{2}\|\vx\|.
		\]
		Consequently,
		\[
		\dist(\vz^+,\vx)
		\le
		s+\widetilde\xi\|\nabla_{\vz}\widetilde E(\vz)\|
		\le
		\varepsilon_0\|\vx\|.
		\]
		
		\noindent
		\textbf{Case 2:
			$\varepsilon_0\|\vx\|/2\le s\le\varepsilon_0\|\vx\|$.} 
		Theorem \ref{theo:relation} implies
		\begin{align*}
			\dist^2(\vz^+,\vx)
			&\le
			\|\vh-\widetilde\xi e^{-i\theta(\vz)}\nabla_{\vz}\widetilde E(\vz)\|^2\le
			\left(
			1-\frac{\widetilde\xi\zeta(1-\delta)}4\|\vx\|^2
			\right)s^2
			\le s^2
			\le\varepsilon_0^2\|\vx\|^2
		\end{align*}
		with probability at least $1-7\exp(-c_{1,\delta} d)-Nd^{-7r}
		-\exp(-\gamma_1 d\log^2 d)$, where the last inequality follows from
		\[
		0<\widetilde\xi
		\le
		\frac{2\zeta}
		{ L_{1,\delta}\lambda\|\vx\|^2},\  L_{1,\delta}\geq\max\left\{\frac{16\phi}{1-\delta},540(1+\delta)\right\}.
		\] 
		
		\noindent
		Hence $\vz^+\in\calS(\vx,\varepsilon_0)$ in both cases, which proves
		the invariance of the regularized inner ball.
	\end{proof}
	
	\subsection{Proof of Theorem \ref{theo:good_initialization_2stage}}\label{subsec:proof_2stage_spectral_init}
	\begin{proof}
		Let $\hat{\vz}_0$ be the eigenvector of $Y$ corresponding to the largest eigenvalue $\omega$ with $\|\hat{\vz}_0\|=1$. By Lemma \ref{lem:expectations},  $\E(Y)=\mu\|\vx\|^2I_d+(\lambda-\mu)\vx\vx^*=\mu\|\vx\|^2I_d+\zeta\vx\vx^*$. Note that the largest eigenvalue of $\E(Y)$ is $\lambda\|\vx\|^2$, we have by Weyl's Inequality about perturbation
		\begin{equation}
			\label{eq:theo_init1_resampling}
			|\omega-\lambda\|\vx\|^2|\le \|Y-(\mu\|\vx\|^2I_d+\zeta\vx\vx^*)\|. 
		\end{equation}
		Note that
		\begin{equation}
			\label{eq:theo_init2_resampling}
			\begin{split}
				\|Y-\mu\|\vx\|^2I_d-\zeta\vx\vx^*\| \ \ge&|\hat{\vz}_0^*(Y-\mu\|\vx\|^2I_d-\zeta\vx\vx^*)\hat{\vz}_0|\\
				=&|\omega-\mu\|\vx\|^2-\zeta|\hat{\vz}_0 ^*\vx|^2|\ge\zeta\|\vx\|^2\left(1-\frac{|\hat{\vz}_0^*\vx|^2}{\|\vx\|^2}\right)-|\omega-\lambda\|\vx\|^2|.
			\end{split}
		\end{equation}
		We first focus on the spectral part of the initializer $\widetilde\vz_0$. Set
		\[
		\eta:=\frac{\varepsilon_1^2}{5}.
		\]
		Apply Theorem~\ref{theo:concentration1-1} with its accuracy parameter equal to $4\eta$, 
		when $N\ge C_{1,\delta} d(\lambda/\zeta)^2\log^2 d$,
		\begin{equation}
			\label{eq:theo_init3_resampling}
			\|Y-\mu\|\vx\|^2I_d-\zeta\vx\vx^*\|\le \eta\zeta\|\vx\|^2.
		\end{equation}
		Combining \eqref{eq:theo_init1_resampling}, \eqref{eq:theo_init2_resampling} and \eqref{eq:theo_init3_resampling} we have
		\[\frac{|\hat{\vz}_0^*\vx|^2}{\|\vx\|^2}\ge1-2\eta.\]
		Since $\theta(\vz)=\mathop{\arg\min}_{\theta\in[0,2\pi)}\|\vz-\vx e^{i\theta}\|$ for all $\vz$. 
		Note that $\widetilde \vz_0=\rho\hat \vz_0$. 
		By the definition of the minimizing phase,
		$\re(e^{i\theta(\hat\vz_0)}\hat\vz_0^*\vx)=|\hat\vz_0^*\vx|$.  Hence
		$|\hat\vz_0^*\vx|/\|\vx\|\ge\sqrt{1-2\eta}$. 
		Applying the scalar consequence of Theorem~\ref{theo:concentration1} with relative tolerance $\eta$ also gives
		\[
		(1-\eta)\|\vx\|^2\leq \rho^2\leq (1+\eta)\|\vx\|^2.
		\]
		Combining this norm estimate with the direction estimate yields
		$\rho|\hat{\vz}_0^*\vx|/\|\vx\|\ge\sqrt{1-2\eta}\sqrt{1-\eta}\,\|\vx\|$. 
		Then 
		\begin{equation*}
			\begin{split}
				\dist^2(\widetilde\vz_0,\vx)=&\rho^2+\|\vx\|^2-2\rho
				|\hat\vz_0^*\vx|
				\le(1+\eta)\|\vx\|^2+\|\vx\|^2-2\sqrt{1-\eta}\|\vx\|^2\sqrt{1-2\eta}
				\le5\eta\|\vx\|^2=\varepsilon_1^2\|\vx\|^2,
			\end{split}
		\end{equation*}
		when $N\ge C_{1,\delta}\frac dr$ by Theorem \ref{theo:concentration1}. Thus $\widetilde\vz_0\in\calS(\vx,\varepsilon_1)$.
		
		We next prove that the prescribed number of regularized steps is sufficient.  This part is deterministic once the preceding concentration events hold.  Put
		\[
		q:=1-\frac{\widetilde\xi\zeta(1-\delta)}4\|\vx\|^2\in[1/2,1)
		\quad\text{and}\quad
		\tau:=\inf\{b\ge0:\dist(\widetilde\vz_b,\vx)\le\varepsilon_0\|\vx\|\}.
		\] 
		For every $b<\tau$, the definition of $\tau$ gives
		$\dist(\widetilde\vz_b,\vx)>\varepsilon_0\|\vx\|$.  On the other hand, induction using Theorem~\ref{theo:relation} gives
		\[
		\dist^2(\widetilde\vz_b,\vx)
		\le q^b\dist^2(\widetilde\vz_0,\vx)
		\le q^b\varepsilon_1^2\|\vx\|^2
		\le\varepsilon_1^2\|\vx\|^2.
		\]
		Thus every iterate before $\tau$ remains in the annulus where Theorem~\ref{theo:relation} applies.
		If $\tau>B:=B(\widetilde\xi)=\left\lceil\frac{4}{\widetilde\xi\zeta(1-\delta)\|\vx\|^2}\log\!\left(\frac{\lambda}{\zeta}\right)\right\rceil$, then, using $q^B\le e^{-B\widetilde\xi\zeta(1-\delta)\|\vx\|^2/4}\le\frac{\zeta}{\lambda}$ and $\varepsilon_1^2/\varepsilon_0^2=\lambda/\zeta$, we obtain
		\[
		\dist^2(\widetilde\vz_B,\vx)\le q^B\varepsilon_1^2\|\vx\|^2\le\varepsilon_0^2\|\vx\|^2,
		\]
		contradicting $\tau>B$. Therefore $\tau\le B(\widetilde\xi)$. Theorem~\ref{theo:regularized_inner_invariance} then gives $\widetilde\vz_b\in\calS(\vx,\varepsilon_0)$ for every $b\ge\tau$, and in particular $\widetilde\vz_{B(\widetilde\xi)}\in\calS(\vx,\varepsilon_0)$.
	\end{proof}
	%########################################################################
	\section{Concentration Inequalities}
	\label{sec:concentration}
	In this section, our emphasis is on the concentration inequalities utilized in the proofs. Specifically, we will begin by presenting certain expectations associated with random orthogonal projection matrices.
	\begin{lemma}[Scalar distribution and moments]
		\label{lem:beta-d}
		Let $A$ be an orthogonal projection in $\C^d$ distributed from the Haar measure on the set of rank-$r$ orthogonal projection matrices, where $1\le r<d$. Then, for every fixed nonzero vector $\vx\in\C^d$,
		\[
		\frac{\vx^*A\vx}{\|\vx\|^2}\sim\operatorname{Beta}(r,d-r).
		\]
		Consequently, for every integer $p\ge 1$,
		\[
		\E\left[(\vx^*A\vx)^p\right]=\|\vx\|^{2p}\frac{(r)_p}{(d)_p}\leq\|\vx\|^{2p}\Gamma(p+1)\left(\frac{r}{d}\right)^p,
		\]
		where
		\[
		(r)_p=r(r+1)\cdots(r+p-1),\qquad (d)_p=d(d+1)\cdots(d+p-1).
		\]
	\end{lemma}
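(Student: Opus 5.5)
By unitary invariance of the Haar measure, $A\overset{d}{=}UA_0U^*$ for a fixed rank-$r$ projection $A_0$ and Haar unitary $U$. Equivalently, $A=P_{\mathcal V}$ where $\mathcal V$ is a uniformly random $r$-dimensional subspace. The plan is to reduce the statement to a claim about a uniformly random unit vector and then compute the moments from the Beta density.

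\textbf{Step 1: reduction to a random unit vector.} For fixed $\vx\neq\0$, we have $\vx^*A\vx=\|A_0U^*\vx\|^2$. The vector $\vu:=U^*\vx/\|\vx\|$ is uniformly distributed on the complex unit sphere of $\C^d$. Taking $A_0=\mathrm{diag}(I_r,0)$ gives
\[
\frac{\vx^*A\vx}{\|\vx\|^2}=\sum_{k=1}^r|u_k|^2 .
\]

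\textbf{Step 2: identifying the law.} Realize $\vu=\vg/\|\vg\|$ with $\vg$ a standard complex Gaussian vector. Each $|g_k|^2$ is $\mathrm{Exp}(1)$, i.e.\ $\mathrm{Gamma}(1,1)$, so $S_1=\sum_{k\le r}|g_k|^2\sim\mathrm{Gamma}(r,1)$ and $S_2=\sum_{k>r}|g_k|^2\sim\mathrm{Gamma}(d-r,1)$, independently. The standard Gamma--Beta relation then gives
\[
\frac{S_1}{S_1+S_2}\sim\operatorname{Beta}(r,d-r).
\]
This identification is the only conceptual step, and it is routine given the Gaussian representation. The one care point is that in the complex case each coordinate contributes shape $1$ rather than $1/2$.

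\textbf{Step 3: the moments.} For $X\sim\operatorname{Beta}(r,d-r)$,
\[
\E X^p=\frac{B(r+p,d-r)}{B(r,d-r)}=\frac{\Gamma(r+p)\,\Gamma(d)}{\Gamma(r)\,\Gamma(d+p)}=\frac{(r)_p}{(d)_p}.
\]
Multiplying by $\|\vx\|^{2p}$ gives the identity in the statement.

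\textbf{Step 4: the upper bound.} Write
\[
\frac{(r)_p}{(d)_p}=\prod_{i=0}^{p-1}\frac{r+i}{d+i}.
\]
For each factor, $(r+i)d\le(i+1)r\,(d+i)$, since $rd+id\le rd+ird+ird+i^2r$, which holds because $id\le 2ird$ when $r\ge1$. Hence each factor is at most $(i+1)\,r/d$. Taking the product over $i=0,\dots,p-1$ yields
\[
\frac{(r)_p}{(d)_p}\le p!\left(\frac rd\right)^p=\Gamma(p+1)\left(\frac rd\right)^p.
\]
I expect no real obstacle here beyond this elementary factor-wise inequality.
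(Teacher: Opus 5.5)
Your proposal is correct and follows essentially the same route as the paper: you reduce via $A\stackrel{d}{=}U\Sigma_rU^*$ to a uniform vector on the complex sphere, use the Gaussian/Gamma ratio representation to get $\operatorname{Beta}(r,d-r)$, compute the moments from the beta integral, and bound each factor by $\frac{r+i}{d+i}\le\frac{(i+1)r}{d}$ (the paper reads this off directly from $r+i\le r(i+1)$ and $d+i\ge d$). One small slip in Step 4: $(i+1)r(d+i)$ expands to $rd+ir+ird+i^2r$, not $rd+2ird+i^2r$, but the needed inequality $id\le ir+ird+i^2r$ still holds since $id\le ird$ for $r\ge1$.
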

	
	\begin{proof}
		Let $U$ be Haar-distributed on the unitary group $\mathcal U(d)$ and let $\Sigma_r=\operatorname{diag}(I_r,0_{d-r})$. Then
		\[
		A\stackrel{d}{=}U\Sigma_rU^*,
		\]
		where $\stackrel{d}{=}$ represents an identity in distribution. 
		Let $\vu=\vx/\|\vx\|$. Then
		\[
		\frac{\vx^*A\vx}{\|\vx\|^2}=\vu^*A\vu\stackrel{d}{=}\vu^*U\Sigma_rU^*\vu=\vb^*\Sigma_r\vb=\sum_{k=1}^r|b_k|^2,
		\]
		where $\vb=U^*\vu=[b_1,\ldots,b_d]^\top$ is uniformly distributed on the unit sphere of $\C^d$. Thus, $\vb$ has the same distribution as $\vg/\|\vg\|$, where $\vg=[g_1,\ldots,g_d]^\top$ has independent standard complex Gaussian entries. Hence
		\[
		\sum_{k=1}^r|b_k|^2\stackrel{d}{=}\frac{\sum_{k=1}^r|g_k|^2}{\sum_{k=1}^d|g_k|^2}=\frac{X_1}{X_1+X_2},
		\]
		where $X_1=\sum_{k=1}^r|g_k|^2\sim\Gamma(r,1)$ and $X_2=\sum_{k=r+1}^d|g_k|^2\sim\Gamma(d-r,1)$ are independent. Therefore,
		\[
		\frac{\vx^*A\vx}{\|\vx\|^2}=\vu^*A\vu\sim\operatorname{Beta}(r,d-r).
		\]
		
		Define $Y=\vu^*A\vu\sim\operatorname{Beta}(r,d-r)$. For every integer $p\ge1$, by the beta integral,
		\[
		\E Y^p=\frac{\Gamma(d)}{\Gamma(r)\Gamma(d-r)}\int_0^1t^{p+r-1}(1-t)^{d-r-1}\,dt=\frac{\Gamma(r+p)\Gamma(d)}{\Gamma(r)\Gamma(d+p)}=\frac{(r)_p}{(d)_p}.
		\]
		Therefore,
		\[
		\E\big[(\vx^*A\vx)^p\big]=\|\vx\|^{2p}\E\big[(\vu^*A\vu)^p\big]=\|\vx\|^{2p}\frac{(r)_p}{(d)_p}.
		\]
		Finally, since $r+j\le r(j+1)$ and $d+j\ge d$ for $j=0,\ldots,p-1$,
		\[
		\frac{(r)_p}{(d)_p}=\prod_{j=0}^{p-1}\frac{r+j}{d+j}\le\frac{r^pp!}{d^p}=\Gamma(p+1)\left(\frac{r}{d}\right)^p.
		\]
		This completes the proof.
	\end{proof}
	
	% ------------------ ===============================================
	\begin{lemma}
		\label{lem:expectation1}
		Let $A$ be a rank-$r$ orthogonal projection in $\C^d$ distributed from the Haar measure, where $1\le r<d$. Then:
		\begin{enumerate}
			\item For any $i\ne j$,
			\[
			\E(a_{ij}^2)=\E(a_{12}^2),\qquad \E(|a_{ij}|^2)=\E(|a_{12}|^2).
			\]
			\item For every unitary matrix $P\in\C^{d\times d}$ and every $\vx\in\C^d$, define
			$
			\mathcal E(\vx)=\E_A\big[(\vx^*A\vx)A\big]$. 
			If $\widetilde{\vx}=P\vx$, then
			\[
			\mathcal E(\vx)=P^*\mathcal E(\widetilde{\vx})P.
			\]
		\end{enumerate}
	\end{lemma}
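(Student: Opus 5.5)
Both parts follow from the unitary invariance of the Haar measure: for every fixed unitary $Q\in\C^{d\times d}$ we have $QAQ^*\stackrel{d}{=}A$. This holds because, as in the proof of Lemma \ref{lem:beta-d}, $A\stackrel{d}{=}U\Sigma_rU^*$ with $U$ Haar on $\mathcal U(d)$, and $QU$ is again Haar distributed by left invariance. So the plan is to pick, for each claim, a suitable unitary $Q$ and compare expectations of $A$ and of $QAQ^*$.

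\emph{Part 1.} The natural choice is permutation matrices. Fix $i\ne j$ and let $Q$ be a permutation matrix corresponding to a permutation $\pi$ with $\pi(i)=1$ and $\pi(j)=2$; such a $\pi$ exists since $i\ne j$. The $(1,2)$ entry of $QAQ^*$ is $a_{\pi^{-1}(1)\pi^{-1}(2)}=a_{ij}$. By invariance, $a_{ij}$ and $a_{12}$ are therefore equal in distribution. Hence $\E(a_{ij}^2)=\E(a_{12}^2)$ and $\E(|a_{ij}|^2)=\E(|a_{12}|^2)$. All these moments are finite because $|a_{ij}|\le\|A\|\le1$. The only care needed is to check the index convention $(QAQ^*)_{kl}=a_{\pi^{-1}(k)\pi^{-1}(l)}$, or simply to choose $\pi$ so that the convention works out.

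\emph{Part 2.} Compute $\mathcal E(\widetilde\vx)=\E_A[(\vx^*P^*AP\vx)A]$. Substitute $A=PBP^*$, where $B:=P^*AP$. By invariance with $Q=P^*$, $B$ has the same Haar distribution as $A$. This gives
\[
\mathcal E(\widetilde\vx)=\E_B\big[(\vx^*B\vx)\,PBP^*\big]=P\,\mathcal E(\vx)\,P^*.
\]
Multiplying on the left by $P^*$ and on the right by $P$ yields $\mathcal E(\vx)=P^*\mathcal E(\widetilde\vx)P$. The integrand is bounded, with norm at most $\|\vx\|^2$, so all expectations exist and linearity lets us pull $P$ and $P^*$ out of the expectation.

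The only point needing justification is the invariance $QAQ^*\stackrel{d}{=}A$. It follows either from the defining property of the Haar measure on the Grassmannian of rank-$r$ projections, or from the representation $U\Sigma_rU^*$ together with left invariance of Haar measure on $\mathcal U(d)$. Everything else is bookkeeping.
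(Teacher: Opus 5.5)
Your proposal is correct and takes essentially the same approach as the paper. Both parts rest on the conjugation invariance $QAQ^*\stackrel{d}{=}A$: a permutation matrix handles Part 1, and for Part 2 you substitute $B=P^*AP$ where the paper uses the mirror-image choice $B=PAP^*$, while your justification of the invariance through $A\stackrel{d}{=}U\Sigma_rU^*$ and your integrability remarks are harmless details that the paper leaves implicit.
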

	
	\begin{proof}
		For any unitary matrix $P$, the random matrices $A$ and $PAP^*$ have the same distribution. Given $i\ne j$, choose a permutation matrix $P$ whose corresponding permutation maps $1$ to $i$ and $2$ to $j$. Then the $(1,2)$-th entry of $P^*AP$ is $a_{ij}$, and hence
		\[
		\E(a_{ij}^2)=\E(a_{12}^2),\qquad \E(|a_{ij}|^2)=\E(|a_{12}|^2).
		\]
		
		For the second assertion, let $B=PAP^*$. Since $\widetilde{\vx}=P\vx$, we have
		\[
		\vx^*A\vx=\widetilde{\vx}^*B\widetilde{\vx},\qquad A=P^*BP.
		\]
		Therefore,
		\[
		\mathcal E(\vx)=P^*\E_B\big[(\widetilde{\vx}^*B\widetilde{\vx})B\big]P=P^*\mathcal E(\widetilde{\vx})P.
		\]
	\end{proof}
	
	%######################################################################
	
	\begin{lemma}
		\label{lem:expectations}
		Let $A$ be a rank-$r$ orthogonal projection in $\C^d$ distributed from the Haar measure, where $1\le r<d$. Then
		\[
		\E A=\frac rd I_d.
		\]
		Moreover, for every fixed $\vx\in\C^d$,
		\[
		\E\big[(\vx^*A\vx)A\big]=\mu\|\vx\|^2I_d+\zeta\vx\vx^*,\qquad\E\big[A\vx(A\vx)^*\big]=\zeta\|\vx\|^2I_d+\mu\vx\vx^*,\qquad\E\big[A\vx(A\vx)^\top\big]=\lambda\vx\vx^\top,
		\]
		where
		\begin{equation}\label{defi_para}
			\lambda=\frac{r(r+1)}{d(d+1)},\qquad
			\mu=\frac{dr^2-r}{d(d+1)(d-1)},\qquad
			\zeta=\lambda-\mu=\frac{r(d-r)}{d(d+1)(d-1)}.
		\end{equation}
	\end{lemma}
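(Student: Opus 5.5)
Throughout, write $A\stackrel{d}{=}U\Sigma_rU^*$ with $U$ Haar on $\mathcal U(d)$, and use the unitary invariance $A\stackrel{d}{=}PAP^*$ for every unitary $P$.

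\textbf{First identity.} Since $\E A=P(\E A)P^*$ for every unitary $P$, $\E A$ commutes with all unitaries and is therefore a scalar multiple $cI_d$. Taking traces gives $cd=\E\,\tr A=r$, so $\E A=\frac rd I_d$.

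\textbf{Reduction of the quadratic expectations.} By homogeneity it suffices to take $\|\vx\|=1$. By Lemma~\ref{lem:expectation1}(2), $\mathcal E(\vx)=P^*\mathcal E(P\vx)P$, so we may take $\vx=\ve_1$. Then $\mathcal E(\ve_1)=\E[a_{11}A]$.

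\textbf{Off-diagonal entries.} Conjugating by a diagonal unitary $\operatorname{diag}(1,\dots,e^{i\phi},\dots)$ fixes $a_{11}$ and multiplies $a_{1j}$ by a phase. Applied at index $j$ alone, this kills $\E[a_{11}a_{1j}]$ for $j\neq1$. Applied with phases at both $j$ and $k$, it kills $\E[a_{11}a_{jk}]$ for distinct $j,k\geq 2$. Hence $\E[a_{11}A]$ is diagonal.

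\textbf{Diagonal entries.} Permutations fixing index $1$ show that all entries $\E[a_{11}a_{jj}]$ with $j\geq2$ are equal. Lemma~\ref{lem:beta-d} with $p=2$ gives $\E a_{11}^2=\frac{r(r+1)}{d(d+1)}=\lambda$. Since $\tr A=r$, we have $\sum_j a_{11}a_{jj}=ra_{11}$, and taking expectations gives
\[
\E[a_{11}a_{jj}]=\frac{r\cdot\frac rd-\lambda}{d-1}
=\frac{r^2(d+1)-r(r+1)d}{d(d+1)(d-1)}=\frac{dr^2-r}{d(d+1)(d-1)}\cdot\frac{r(d-1)}{dr-1}.
\]
This last expression needs a careful check against $\mu$. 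Directly, $r^2(d+1)-r(r+1)d=r^2-rd=-r(d-r)$, which is negative, so the computation must be redone with $a_{11}a_{jj}$ replaced by the right quantity. The correct bookkeeping uses $\tr A=r$ in the form $\E[a_{11}\tr A]=r\E a_{11}=r^2/d$. This gives
\[
\E[a_{11}a_{jj}]=\frac{r^2/d-\lambda}{d-1}=\frac{r^2(d+1)-dr(r+1)}{d(d+1)(d-1)}=\frac{r^2-dr}{d(d+1)(d-1)}.
\]
This is still inconsistent with $\mu$, which signals that the target identity should be checked via its trace. Taking the trace of the claimed $\E[(\vx^*A\vx)A]=\mu I+\zeta\vx\vx^*$ gives $d\mu+\zeta=r\cdot\frac rd$. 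One verifies that $d\mu+\zeta=\frac{d^2r^2-dr+rd-r^2}{d(d+1)(d-1)}=\frac{r^2(d^2-1)}{d(d+1)(d-1)}=\frac{r^2}{d}$, which is consistent. So the off-diagonal diagonal entry is $\mu=\frac{r^2/d-\lambda}{d-1}$, and the arithmetic above just needs redoing: $r^2(d+1)-dr(r+1)=r^2-dr$ is wrong, since $r^2d+r^2-dr^2-dr=r^2-dr$; the slip is that $\lambda=\frac{r(r+1)}{d(d+1)}$ has to be put over the common denominator $d(d+1)$ while $r^2/d$ is put over $d(d+1)$ as $r^2(d+1)$. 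I expect this routine algebra, together with the $(d-1)$ factor, to be the main source of error, and I would resolve it by using the trace check $d\mu+\zeta=r^2/d$ as the defining relation. This gives $\mathcal E(\ve_1)=\mu I+(\lambda-\mu)\ve_1\ve_1^*$, and rotating back yields the first quadratic identity.

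\textbf{Second identity.} Since $A^2=A=A^*$, we have $A\vx(A\vx)^*=A\vx\vx^*A$, whose $(j,k)$ entry at $\vx=\ve_1$ is $a_{j1}a_{1k}$. The same phase argument shows this matrix is diagonal in expectation. Its $(1,1)$ entry is $\E|a_{11}|^2=\lambda$. The remaining diagonal entries are all equal to $\E|a_{1j}|^2$ for $j\geq2$, by Lemma~\ref{lem:expectation1}(1). From $(A^2)_{11}=a_{11}$ we get $\sum_j|a_{1j}|^2=a_{11}$, so $\E|a_{1j}|^2=\frac{r/d-\lambda}{d-1}$. A short computation identifies this value with $\zeta$, giving $\zeta I+(\lambda-\zeta)\ve_1\ve_1^*=\zeta I+\mu\ve_1\ve_1^*$.

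\textbf{Third identity.} The matrix $A\vx(A\vx)^\top$ transforms as $P(\cdot)P^\top$ under $\vx\mapsto P\vx$. Conjugating by a diagonal unitary with $\vx=\ve_1$ kills every entry except $\E[a_{11}^2]=\lambda$, since the entries $a_{j1}a_{k1}$ pick up phases $e^{i(\phi_j+\phi_k)}$. This gives $\lambda\ve_1\ve_1^\top$. Transporting back requires care, because $P\ve_1\ve_1^\top P^\top=(P\ve_1)(P\ve_1)^\top$, and this produces $\lambda\vx\vx^\top$.
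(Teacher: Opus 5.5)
Your proof is correct and follows essentially the paper's route: unitary invariance reduces to $\vx=\ve_1$ and forces diagonal form (you use diagonal phases plus permutations fixing index $1$, where the paper uses invariance under $\operatorname{diag}(1,Q)$); the $(1,1)$ entries come from the Beta moment $\E a_{11}^2=\lambda$; the remaining diagonal entries come from $\E[a_{11}\tr A]=r^2/d$ and $\sum_j|a_{1j}|^2=a_{11}$; and the transpose term comes from diagonal phases. The only thing to clean up is the algebra for $\mu$, where the slip is a spurious factor $d$ in $dr(r+1)$: directly, $\frac{r^2/d-\lambda}{d-1}=\frac{r^2(d+1)-r(r+1)}{d(d+1)(d-1)}=\frac{dr^2-r}{d(d+1)(d-1)}=\mu$, so you can delete the erroneous simplifications (your trace check is still a valid substitute, because $d\mu+\zeta=(d-1)\mu+\lambda$).
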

	
	\begin{proof}
		For every unitary matrix $P$, the random matrices $A$ and $PAP^*$ have the same distribution. Hence
		\[
		P(\E A)P^*=\E A.
		\]
		Thus $\E A$ commutes with every unitary matrix and must be a scalar multiple of the identity. Since $\tr(A)=r$,
		\[
		\E A=\frac rd I_d.
		\]
		
		We first consider $\vx=\ve_1$. Set
		\[
		M_1=\E\big[(\ve_1^*A\ve_1)A\big].
		\]
		For every unitary matrix $Q\in\C^{(d-1)\times(d-1)}$, let $P=\operatorname{diag}(1,Q)$. Since $P\ve_1=\ve_1$, Lemma~\ref{lem:expectation1} gives $M_1=P^*M_1P$. It follows that
		\[
		M_1=\begin{bmatrix}\lambda&0\\0&\mu I_{d-1}\end{bmatrix}.
		\]
		By Lemma~\ref{lem:beta-d},
		\[
		\lambda=\E(a_{11}^2)=\frac{r(r+1)}{d(d+1)}.
		\]
		Moreover,
		\[
		\tr(M_1)=\E\big[a_{11}\tr(A)\big]=r\E(a_{11})=\frac{r^2}{d}.
		\]
		Therefore,
		\[
		\lambda+(d-1)\mu=\frac{r^2}{d},
		\]
		which yields
		\[
		\mu=\frac{dr^2-r}{d(d+1)(d-1)}.
		\]
		Since $\zeta=\lambda-\mu$, we obtain
		\[
		M_1=\mu I_d+\zeta \ve_1\ve_1^*.
		\]
		
		For a general nonzero vector $\vx$, choose a unitary matrix $P$ such that $P\vx=\|\vx\|\ve_1$. By Lemma~\ref{lem:expectation1} and the homogeneity of $\mathcal E$,
		\[
		\E\big[(\vx^*A\vx)A\big]=P^*\mathcal E(\|\vx\|\ve_1)P
		=\mu\|\vx\|^2I_d+\zeta\vx\vx^*.
		\]
		The identity is trivial for $\vx=0$.
		
		Next, set
		\[
		M_2=\E\big[A\ve_1(A\ve_1)^*\big].
		\]
		The same invariance argument gives
		\[
		M_2=\begin{bmatrix}\lambda&0\\0&\zeta I_{d-1}\end{bmatrix}.
		\]
		Indeed, its $(1,1)$-th entry is $\E(a_{11}^2)=\lambda$, while
		\[
		\tr(M_2)=\E\|A\ve_1\|^2=\E(\ve_1^*A^2\ve_1)=\E(a_{11})=\frac rd.
		\]
		Thus
		\[
		\lambda+(d-1)\zeta=\frac rd,
		\]
		and consequently
		\[
		\zeta=\frac{r(d-r)}{d(d+1)(d-1)}.
		\]
		Since $\lambda-\zeta=\mu$, we have
		\[
		M_2=\zeta I_d+\mu \ve_1\ve_1^*.
		\]
		Applying unitary invariance as above yields
		\[
		\E\big[A\vx(A\vx)^*\big]=\zeta\|\vx\|^2I_d+\mu\vx\vx^*.
		\]
		
		Finally, let
		\[
		M_3=\E\big[A\ve_1(A\ve_1)^\top\big].
		\]
		For a diagonal unitary matrix $D=\operatorname{diag}(1,e^{i\theta_2},\ldots,e^{i\theta_d})$, the matrices $A$ and $DAD^*$ have the same distribution. Hence
		\[
		M_3=DM_3D^\top.
		\]
		Since the phases $\theta_2,\ldots,\theta_d$ are arbitrary, all entries of $M_3$ vanish except possibly its $(1,1)$-th entry. As
		\[
		(M_3)_{11}=\E(a_{11}^2)=\lambda,
		\]
		we obtain
		\[
		M_3=\lambda \ve_1\ve_1^\top.
		\]
		A final application of unitary invariance gives
		\[
		\E\big[A\vx(A\vx)^\top\big]=\lambda\vx\vx^\top.
		\]
	\end{proof}
	
	% ---------------------------------------------------------------------------------	
	\begin{lemma}[Bernstein-type tail bound for beta variables]
		\label{lem:beta-concentration}
		Let $Y\sim\operatorname{Beta}(r,d-r)$, where $1\le r<d$. Then for every $t>0$,
		\[
		\Prob\left(\left|Y-\frac rd\right|\ge t\right)\le 2\exp\left[-\frac{1}{8}\min\left\{\frac{d^2t^2}{r},dt\right\}\right].
		\]
	\end{lemma}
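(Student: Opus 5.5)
We use the representation from the proof of Lemma~\ref{lem:beta-d}: $Y\stackrel{d}{=}X_1/(X_1+X_2)$, where $X_1\sim\Gamma(r,1)$ and $X_2\sim\Gamma(d-r,1)$ are independent. Set $S=X_1+X_2\sim\Gamma(d,1)$. Then $Y-\frac rd=\frac{X_1-\frac rd S}{S}$, and
\[
X_1-\frac rd S=\Big(1-\frac rd\Big)(X_1-r)-\frac rd\,(X_2-(d-r)),
\]
which is a centered linear combination of independent Gamma variables. Each Gamma variable is a sum of independent standard exponentials, which are sub-exponential. The plan is to bound the numerator with a Bernstein inequality and to bound the denominator from below on a high-probability event.

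\textbf{Step 1 (numerator).} Use the Gamma moment generating function, $\log\E e^{s(X-k)}=-k\log(1-s)-ks\le \frac{ks^2}{2(1-s)}$ for $0\le s<1$, and the analogous bound for $s<0$. This gives a Bernstein bound for $W:=X_1-\frac rd S$ with variance proxy
\[
v=(1-\tfrac rd)^2 r+(\tfrac rd)^2(d-r)=\frac{r(d-r)}{d}\le r
\]
and scale parameter $b=\max\{1-\tfrac rd,\tfrac rd\}\le1$. The resulting estimate is
\[
\Prob(|W|\ge u)\le 2\exp\Big[-\tfrac12\min\{u^2/(2v),\,u/(2b)\}\Big]\le 2\exp\Big[-\tfrac14\min\{u^2/r,u\}\Big].
\]

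\textbf{Step 2 (denominator and combination).} The event $\{|Y-\frac rd|\ge t\}$ equals $\{|W|\ge tS\}$. Split according to whether $S\ge d/2$. On $\{S\ge d/2\}$ we need $|W|\ge td/2$, and with $u=td/2$ Step 1 gives a bound of order $\exp[-\frac1{16}\min\{d^2t^2/r,\,2dt\}]$. For the complement, the Gamma lower tail gives $\Prob(S<d/2)\le e^{-d/8}$, since $\log\E e^{-s(S-d)}\le ds^2/2$ for $s\ge0$.

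The main obstacle is the constants: absorbing $\Prob(S<d/2)$ and hitting exactly $\frac18$ with prefactor $2$. Two refinements should handle this. First, since $Y\in[0,1]$, we may assume $t\le1$, and then $\min\{d^2t^2/r,dt\}\le d$. Second, instead of fixing $S\ge d/2$, we avoid the denominator altogether by treating each tail separately as a linear event. The upper tail is $\{Y-\frac rd\ge t\}=\{(1-\frac rd-t)X_1-(\frac rd+t)X_2\ge0\}$, and the lower tail is handled the same way. Applying the Chernoff bound directly to these combinations, using the explicit Gamma MGFs and optimizing $s$, should yield $\exp[-\frac18\min\{d^2t^2/r,dt\}]$ for each one-sided tail. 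Summing the two tails gives the factor $2$.

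In the Chernoff computation, the variance term is $(1-\frac rd-t)^2r+(\frac rd+t)^2(d-r)\approx r+dt^2$, so the extra $dt^2$ contribution is dominated by $\max\{r,dt\}$. This is exactly what produces the $\min\{d^2t^2/r,dt\}$ shape. Routine inequalities such as $-\log(1-s)-s\le s^2/(2(1-s))$ then complete the calculation.
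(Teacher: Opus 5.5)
Your final route, applying Chernoff directly to the linear events $\{(1-\frac rd-t)X_1-(\frac rd+t)X_2\ge0\}$ and $\{(\frac rd-t)X_2-(1-\frac rd+t)X_1\ge0\}$ and summing the two tails, is exactly the paper's argument. The only difference is in the exponent: the paper evaluates it exactly as $dH_+(t)$ (a Bernoulli relative entropy) and lower-bounds it by integrating $H_+'$, whereas your cumulant bound $-\log(1-s)-s\le s^2/(2(1-s))$ gives a Bernstein exponent directly. Your remaining ``should yield'' is valid; here is the check. For the upper tail the deviation is $dt$, the variance proxy is $r(1-\frac rd)+dt^2$, and the scale is $1-\frac rd-t$. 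The exponent is then $\frac{d^2t^2}{2(1-r/d)(r+dt)}\ge\frac14\min\{d^2t^2/r,dt\}$, and the lower tail gives $\frac{d^2t^2}{2r}$. The detour through $\{S\ge d/2\}$ in Steps 1--2 is unnecessary and can be dropped.
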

	
	\begin{proof}
		Let $a=r/d$. By the gamma representation of the beta distribution, we may write
		\[
		Y=\frac{U}{U+V},
		\]
		where $U\sim\Gamma(r,1)$ and $V\sim\Gamma(d-r,1)$ are independent.
		
		We first consider the upper tail. Let $0<t<1-a$ and set $s=a+t$. For every $\theta>0$ satisfying $\theta(1-s)<1$, Chernoff's inequality gives
		\[
		\Prob(Y\ge s)=\Prob\big((1-s)U-sV\ge0\big)
		\le (1-\theta(1-s))^{-r}(1+\theta s)^{-(d-r)}.
		\]
		Choosing $\theta=t/[s(1-s)]$, we obtain
		\[
		\Prob(Y\ge a+t)\le \left(\frac{s}{a}\right)^r\left(\frac{1-s}{1-a}\right)^{d-r}
		=\exp\big(-dH_+(t)\big),
		\]
		where
		\[
		H_+(t)=-a\log\left(1+\frac ta\right)-(1-a)\log\left(1-\frac{t}{1-a}\right).
		\]
		Since
		\[
		H_+'(t)=\frac{t}{(a+t)(1-a-t)},
		\]
		and $1-a-t\le1$, we have
		\[
		H_+(t)\ge\int_0^t\frac{x}{a+x}\,dx.
		\]
		If $0<t\le a$, then
		\[
		H_+(t)\ge\int_0^t\frac{x}{2a}\,dx=\frac{t^2}{4a}.
		\]
		If $t>a$, then
		\[
		H_+(t)\ge\int_{t/2}^t\frac{x}{a+x}\,dx\ge\int_{t/2}^t\frac14\,dx=\frac t8,
		\]
		because $x\ge t/2$ and $a<t$ imply $x/(a+x)\ge1/4$. Therefore,
		\[
		H_+(t)\ge\frac18\min\left\{\frac{t^2}{a},t\right\},
		\]
		and hence
		\[
		\Prob(Y-a\ge t)\le\exp\left[-\frac18\min\left\{\frac{d^2t^2}{r},dt\right\}\right].
		\]
		For $t\ge1-a$, the upper-tail probability is zero, so the same estimate holds for every $t>0$.
		
		The lower tail is obtained similarly. Indeed, for $0<t<a$, setting $s=a-t$ and applying Chernoff's inequality to
		\[
		\{Y\le s\}=\{sV-(1-s)U\ge0\}
		\]
		with parameter $\theta=t/[s(1-s)]$ gives
		\[
		\Prob(Y-a\le-t)\le\exp\left\{d\left[(1-a)\log\left(1+\frac{t}{1-a}\right)+a\log\left(1-\frac ta\right)\right]\right\}.
		\]
		Using $\log(1+u)\le u$ for $u\ge0$ and $\log(1-v)\le-v-v^2/2$ for $0\le v<1$, we obtain
		\[
		\Prob(Y-a\le-t)\le\exp\left(-\frac{d^2t^2}{2r}\right)
		\le\exp\left[-\frac12\min\left\{\frac{d^2t^2}{r},dt\right\}\right].
		\]
		For $t\ge a$, the lower-tail probability is zero.
		In sum, we conclude the result.
	\end{proof}
	
	%########################################################################   
	Now we are ready for concentration inequalities.
	\begin{theorem}
		\label{theo:concentration1}
		Let $A_1,\ldots,A_N$ be i.i.d. rank-$r$ orthogonal projections in $\C^d$ drawn from the Haar measure, where $1\le r<d$. 
		There exist absolute constants $c_4,C_4>0$ such that, for every $0<\rho\le1$, when $N\ge C_4d/(r\rho^2)$,
		\[
		\Big\|\frac1N\sum_{j=1}^N A_j-\frac rd I_d\Big\|\le \rho\frac rd
		\]
		with probability at least $1-\exp(-c_4d)$. Let $\lambda,\mu,\zeta$ be defined in \eqref{defi_para}. Then, 
		for every $0<\delta\le1$, there exist constants $c_{5},C_{5,\delta}>0$ such that
		\begin{equation}\label{eq:N_sqrt_lambda_zeta}
			N\ge C_{5,\delta}\frac dr\left(\frac{\mu}{\zeta}\right)^2\frac{\lambda}{\zeta}
		\end{equation}
		implies
		
        \[
        \Big\|\frac1N\sum_{j=1}^N A_j-\frac rd I_d\Big\|\le
        \frac{\delta r\zeta}{8(1+\varepsilon_1)d\mu}\sqrt{\frac{10\zeta}{27\lambda}}.
        \]
		with probability at least $1-\exp(-c_{5}d)$. Moreover, for every $0<\delta\le1$, there exist constants $c_{6},C_{6,\delta}>0$ such that
		\begin{equation}\label{eq:N_constant}
			N\ge C_{6,\delta}\frac dr
		\end{equation}
		implies
		\[
		\Big\|\frac1N\sum_{j=1}^N A_j-\frac rd I_d\Big\|\le \delta\frac rd
		\]
		with probability at least $1-\exp(-c_{6}d)$.
	\end{theorem}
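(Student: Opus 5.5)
The plan is to prove the first (generic) statement by an $\varepsilon$-net argument combined with the Bernstein-type tail bound of Lemma~\ref{lem:beta-concentration}. The other two statements are then specializations obtained by choosing $\rho$.

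\textbf{Reduction to quadratic forms.} Set $S=\frac1N\sum_{j=1}^N A_j-\frac rd I_d$, which is Hermitian. Then
\[
\|S\|=\sup_{\|\vu\|=1}|\vu^*S\vu|.
\]
Let $\mathcal N$ be a $1/4$-net of the unit sphere of $\C^d\cong\mathbb R^{2d}$ with $|\mathcal N|\le 9^{2d}$. The standard argument for Hermitian matrices gives
\[
\|S\|\le 2\max_{\vu\in\mathcal N}|\vu^*S\vu|.
\]
It therefore suffices to show that, for each fixed unit $\vu$,
\[
\Prob\Big(|\vu^*S\vu|\ge \tfrac{\rho r}{2d}\Big)\le 2\exp(-c\,d\,\cdot\text{something large}).
\]

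\textbf{Single-vector bound.} For fixed unit $\vu$, the variables $Y_j=\vu^*A_j\vu$ are i.i.d.\ $\operatorname{Beta}(r,d-r)$ by Lemma~\ref{lem:beta-d}, with mean $r/d$. Lemma~\ref{lem:beta-concentration} says that $Y_j-r/d$ is sub-exponential, with variance proxy of order $r/d^2$ and scale of order $1/d$. The standard Bernstein inequality for averages of i.i.d.\ sub-exponential variables then gives, with $t=\rho r/(2d)$,
\[
\Prob\Big(\Big|\frac1N\sum_j Y_j-\frac rd\Big|\ge t\Big)\le 2\exp\Big[-cN\min\Big\{\frac{t^2d^2}{r},td\Big\}\Big]=2\exp\Big[-cN\min\{\rho^2 r/4,\rho r/2\}\Big].
\]
Since $\rho\le1$, the exponent equals $-c'N\rho^2 r$. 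Alternatively, one can bound the moment generating function directly from the moment bound $\E Y^p\le p!(r/d)^p$ of Lemma~\ref{lem:beta-d}, which gives the $\psi_1$-norm of order $r/d$ and a slightly weaker but still sufficient bound.

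\textbf{Union bound and specializations.} A union bound over the net gives a failure probability of at most
\[
2\cdot 81^d\exp(-c'N\rho^2 r).
\]
Taking $N\ge C_4 d/(r\rho^2)$ with $C_4$ large makes this at most $\exp(-c_4d)$. This proves the first claim.

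The third claim is the first claim with $\rho=\delta$, so $C_{6,\delta}=C_4/\delta^2$.

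For the second claim, set
\[
\rho=\frac{\delta\zeta}{8(1+\varepsilon_1)\mu}\sqrt{\frac{10\zeta}{27\lambda}}\le1,
\]
using $\zeta\le\mu$ and $\zeta\le\lambda$. Then
\[
\rho^{-2}=\frac{64(1+\varepsilon_1)^2\cdot 27}{10\,\delta^2}\Big(\frac{\mu}{\zeta}\Big)^2\frac{\lambda}{\zeta},
\]
so the condition $N\ge C_4d/(r\rho^2)$ is exactly \eqref{eq:N_sqrt_lambda_zeta} with $C_{5,\delta}=C_4\cdot 64\cdot27(1+\varepsilon_1)^2/(10\delta^2)$.

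\textbf{Main obstacle.} The delicate point is the single-vector Bernstein step. Lemma~\ref{lem:beta-concentration} is a tail bound for a single variable, and I must convert it into a sum bound with the correct variance scale $r/d^2$ rather than the cruder $(r/d)^2$ scale. The crude scale would lose nothing when $r$ is constant but would only give $N\gtrsim d/(r\rho^2)$ with the right dependence if the variance is tracked.

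I would first try the direct route through the moment generating function. Using Lemma~\ref{lem:beta-d}, the variance is $\operatorname{Var}Y=\frac{r(d-r)}{d^2(d+1)}\le r/d^2$. For the higher central moments, integrate the tail bound of Lemma~\ref{lem:beta-concentration} to get
\[
\E|Y-r/d|^p\le C^p\big(p^{p/2}(r/d^2)^{p/2}+p^p d^{-p}\big).
\]
This is exactly the Bernstein moment condition, and Bernstein's inequality yields the stated exponent $-cN\min\{d^2t^2/r,dt\}$.
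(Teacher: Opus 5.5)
Your proposal is correct and is essentially the paper's proof: fix $\vu$, lift the Beta tail of Lemma~\ref{lem:beta-concentration} to the average with $t=\rho r/(2d)$, union-bound over a $1/4$-net of size $9^{2d}$ using $\|S\|\le 2\max_{\vu\in\mathcal N}|\vu^*S\vu|$, and specialize with $\rho=\delta$ and $\rho=\frac{\delta\zeta}{8(1+\varepsilon_1)\mu}\sqrt{10\zeta/(27\lambda)}$, which gives the same $C_{5,\delta}$ and $C_{6,\delta}$. Your added justification of the tail-to-average step, which the paper only asserts, is sound if the integrated moments are turned into an MGF bound via the sub-gamma criterion $\E X^{2q}\le q!A^q+(2q)!B^{2q}$ with $A\sim r/d^2$ and $B\sim 1/d$, not via the classical form $\E|X|^p\le\frac{p!}{2}vb^{p-2}$ with $b\sim 1/d$, which fails for large $r$; also, your cruder $\psi_1$ route through $\E Y^p\le p!(r/d)^p$ is not ``still sufficient'', since it only yields $N\gtrsim d/\rho^2$.
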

	\begin{proof}
		For a fixed unit vector $\vu\in\C^d$, define $Y_j=\vu^*A_j\vu$. By Lemma~\ref{lem:beta-d},
		\[
		Y_j\sim\operatorname{Beta}(r,d-r),\qquad \E Y_j=\frac rd.
		\]
		Set $H=\frac1N\sum_{j=1}^N A_j-\frac rd I_d$. 
		Lemma~\ref{lem:beta-concentration}  yields an absolute constant $c_{\mathrm B}>0$ such that, for every $t>0$,
		\[
		\Prob\left(|\vu^*H\vu|\ge t\right)
		\le 2\exp\left[-c_{\mathrm B}N\min\left\{\frac{d^2t^2}{r},dt\right\}\right].
		\]
		Taking $t=\rho r/(2d)$ and using $0<\rho\le1$, we obtain
		\[
		\Prob\left(|\vu^*H\vu|\ge\frac{\rho r}{2d}\right)\le 2\exp(-c_{\mathrm B}Nr\rho^2).
		\]
		Let $\mathcal N$ be a $1/4$-net of the unit sphere in $\C^d$ with $|\mathcal N|\le9^{2d}$. By the union bound,
		\[
		\Prob\left(\sup_{\vu\in\mathcal N}|\vu^*H\vu|\ge\frac{\rho r}{2d}\right)
		\le 2\exp\big(2d\log9-c_{\mathrm B}Nr\rho^2\big).
		\]
		Hence, if $N\ge C_4d/(r\rho^2)$ for a sufficiently large absolute constant $C_4$, then
		\[
		\sup_{\vu\in\mathcal N}|\vu^*H\vu|\le\frac{\rho r}{2d}
		\]
		with probability at least $1-\exp(-c_4d)$ for some absolute constant $c_4>0$. Since $H$ is Hermitian, the standard net argument gives
		\[
		\|H\|\le2\sup_{\vu\in\mathcal N}|\vu^*H\vu|\le\rho\frac rd.
		\]
		This proves the first assertion.
		
		For the second assertion, set
		$\rho=\frac\delta{8(1+\varepsilon_1)}\frac{\zeta}{\mu}\sqrt{\frac{10\zeta}{27\lambda}}$. 
		Since $\zeta\le\mu$ and $\zeta\le\lambda$, we have $0<\rho\le1$. Moreover, $
		\rho\frac rd=\frac\delta{8(1+\varepsilon_1)} \frac{r\zeta}{d\mu}\sqrt{\frac{10\zeta}{27\lambda}}$, 
		while the sample size condition in the first assertion becomes $
		N\ge {\frac{64\times 27}{10}(1+\varepsilon_1)^2}C_4\delta^{-2}\frac dr\left(\frac{\mu}{\zeta}\right)^2\frac{\lambda}{\zeta}$. 
		The claimed result follows by taking 
		$C_{5,\delta}={\frac{64\times 27}{10}(1+\varepsilon_1)^2}C_4\delta^{-2},\ c_{5}=c_4$. Finally, taking $\rho=\delta$ in the first assertion gives the last claim with $C_{6,\delta}=C_4\delta^{-2},\ c_{6}=c_4$.
	\end{proof}
	
	%########################################################################
	\begin{theorem}
		\label{theo:concentration1-1}
		Let $A_1,\ldots,A_N$ be i.i.d. rank-$r$ orthogonal projections in $\C^d$ drawn from the Haar measure, where $1\le r<d$, and let
		$\lambda$ and $\zeta$ be defined in \eqref{defi_para}. For every fixed
		$\vx\in\C^d$ and every $0<\delta<1$, there exist constants
		$c_{7,\delta},C_{7,\delta}>0$ such that
		\begin{equation}\label{N}
			N\ge C_{7,\delta} d
			\left(\frac{\lambda}{\zeta}\right)^2\log^2d
		\end{equation}
		implies that the following estimates hold simultaneously:
		\begin{align}
			\left\|\frac1N\sum_{j=1}^N(\vx^*A_j\vx)A_j
			-\E\big[(\vx^*A_j\vx)A_j\big]\right\|
			&\le\frac{\delta\zeta}{4}\|\vx\|_2^2,
			\label{eq:eq1}\\
			\left\|\frac1N\sum_{j=1}^N(A_j\vx)(A_j\vx)^*
			-\E\big[(A_j\vx)(A_j\vx)^*\big]\right\|
			&\le\frac{\delta\zeta}{4}\|\vx\|_2^2,
			\label{eq:eq2}\\
			\left\|\frac1N\sum_{j=1}^NA_j\vx(A_j\vx)^\top
			-\E\big[A_j\vx(A_j\vx)^\top\big]\right\|
			&\le\frac{\delta\zeta}{4}\|\vx\|_2^2.
			\label{eq:eq3}
		\end{align}
		The probability is at least
		\begin{equation}\label{prob}
			1-6\exp(-c_{7,\delta} d)-Nd^{-7r}.
		\end{equation}
	\end{theorem}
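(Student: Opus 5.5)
The plan is to reduce all three estimates to scalar sums over a net, after a truncation that accounts for the $Nd^{-7r}$ term in \eqref{prob}. Normalize $\|\vx\|=1$ by homogeneity.

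\textbf{Truncation.} Put $K:=C_0\frac{r\log d}{d}$ for a suitable absolute constant $C_0$. Lemma~\ref{lem:beta-d} and Lemma~\ref{lem:beta-concentration}, applied with $t\asymp r\log d/d$, give $\Prob(\vx^*A_j\vx>K)\le d^{-7r}$ once $C_0$ is large; the tail is in its linear regime there, since $dt/8\ge 7r\log d$. A union bound over $j$ shows that the event $\mathcal G=\{\vx^*A_j\vx\le K\ \forall j\}$ has probability at least $1-Nd^{-7r}$. On $\mathcal G$ each empirical sum equals the sum of the truncated summands, with indicator $\mathbbm 1_j:=\mathbbm 1\{\vx^*A_j\vx\le K\}$.

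The truncation bias is negligible. For instance, for unit $\vu$, Cauchy--Schwarz and the moment bound in Lemma~\ref{lem:beta-d} give
\[
\big|\E[(\vx^*A\vx)(\vu^*A\vu)(1-\mathbbm 1)]\big|\le \sqrt{\E(\vx^*A\vx)^4}\,d^{-7r/2}\lesssim (r/d)^2d^{-7r/2},
\]
which is far below $\delta\zeta/8\asymp \delta r(d-r)/d^3$. The same holds for the other two forms. It therefore suffices to prove each of \eqref{eq:eq1}--\eqref{eq:eq3} for the truncated sums, with tolerance $\delta\zeta/8$.

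\textbf{Scalar concentration.} For \eqref{eq:eq1} and \eqref{eq:eq2}, the matrices are Hermitian, so it suffices to control $\vu^*(\cdot)\vu$ over a $1/4$-net $\calN$ of the sphere with $|\calN|\le 9^{2d}$, exactly as in the proof of Theorem~\ref{theo:concentration1}. For fixed $\vu$ I use the summands:
\begin{itemize}
\item for \eqref{eq:eq1}, $X_j=(\vx^*A_j\vx)(\vu^*A_j\vu)\mathbbm 1_j$;
\item for \eqref{eq:eq2}, $X_j=|\vu^*A_j\vx|^2\mathbbm 1_j\le(\vx^*A_j\vx)(\vu^*A_j\vu)\mathbbm 1_j$.
\end{itemize}
In both cases $0\le X_j\le K\,(\vu^*A_j\vu)$. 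By Lemma~\ref{lem:beta-concentration}, $\vu^*A_j\vu$ has mean $r/d$ and sub-exponential fluctuations of scale $\lesssim r/d$ in the relevant regime. Hence $\|X_j-\E X_j\|_{\psi_1}\lesssim K\,r/d$.

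The Bernstein inequality for sub-exponential variables then gives
\[
\Prob\Big(\big|\tfrac1N\textstyle\sum_j (X_j-\E X_j)\big|\ge t\Big)\le 2\exp\Big[-cN\min\Big\{\tfrac{t^2}{(Kr/d)^2},\tfrac{t}{Kr/d}\Big\}\Big].
\]
Take $t=\delta\zeta/16$. Because $\zeta\gtrsim r/d^2$ and $\lambda\asymp r^2/d^2$, the ratio satisfies
\[
\frac{Kr/d}{t}\lesssim \delta^{-1}\frac{r^2\log d/d^2}{\zeta}\asymp \delta^{-1}\frac{\lambda}{\zeta}\log d .
\]
So the exponent exceeds $5d$ as soon as $N\ge C_{7,\delta}d(\lambda/\zeta)^2\log^2d$, which is exactly \eqref{N}. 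The union bound over $\calN$ costs $e^{2d\log 9}$ and leaves probability $1-e^{-c_{7,\delta}d}$ for each estimate.

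\textbf{The transpose form.} For \eqref{eq:eq3} the matrix $A\vx(A\vx)^\top$ is complex symmetric, not Hermitian. I would control its operator norm through the bilinear forms $\vu^\top M\vv$ over pairs $\vu,\vv\in\calN$ (a net of size $9^{4d}$), with the standard factor-$2$ loss, or equivalently by splitting into real and imaginary parts. The summand is $(\vu^\top A_j\vx)(\vv^\top A_j\vx)\mathbbm 1_j$. By Cauchy--Schwarz it is bounded in modulus by $K\sqrt{(\bar\vu^*A_j\bar\vu)(\bar\vv^*A_j\bar\vv)}$, which is again sub-exponential with the same scale, so the same Bernstein argument applies. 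Each of the three estimates may use up to two exponential failure terms (real and imaginary parts, or the two tails), which accounts for the constant $6$ in \eqref{prob}.

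\textbf{Main obstacle.} The delicate step is getting the $\psi_1$-scale of $X_j$ right. If one truncates only $\vx^*A_j\vx$ and then bounds $\vu^*A_j\vu\le1$, the resulting estimate forces $N\gtrsim d^2\log d$, which is far too many samples. The argument must keep the sub-exponential (Beta) tail of $\vu^*A_j\vu$, at scale $r/d$, inside Bernstein's inequality. It cannot also truncate $\vu^*A_j\vu$, since that event would have to hold uniformly over the net. I would verify carefully that the product of the bounded truncated factor and a $\operatorname{Beta}(r,d-r)$ variable has $\psi_1$-norm $\lesssim K\max\{r,1\}/d$, using Lemma~\ref{lem:beta-concentration}. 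This is what yields the factor $(\lambda/\zeta)^2\log^2 d$.
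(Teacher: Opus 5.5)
Your proposal is essentially the paper's proof. Both arguments truncate on $\{\vx^*A_j\vx\lesssim (r/d)\log d\,\|\vx\|^2\}$, which produces the $Nd^{-7r}$ term. Both bound the centered $\psi_1$-norm of the truncated summand by order $\lambda\log d\,\|\vx\|^2$ by dominating it with $K\,\vu^*A_j\vu$; the paper phrases this through the moments $\E(\vu^*A_j\vu)^p\le p!(r/d)^p$ of Lemma~\ref{lem:beta-d}. Both then apply scalar Bernstein with tolerance $\delta\zeta/16$ on a $1/4$-net, use $|\vu^*A_j\vx|^2\le(\vx^*A_j\vx)(\vu^*A_j\vu)$ for \eqref{eq:eq2}, and split into real and imaginary parts for \eqref{eq:eq3}.

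One point needs correcting. With $C_0$ an absolute constant, the truncation bias cannot be made $\le\delta\zeta\|\vx\|^2/16$ for every $\delta\in(0,1)$. For any fixed $d$ with $K<1$, the bias $\E[(\vx^*A\vx)(\vu^*A\vu)(1-\mathbbm 1)]$ is a fixed positive number, while $\delta\zeta/16\to0$ as $\delta\to0$. So your claim that the bias is ``far below'' the tolerance only holds once $d$ is large relative to a negative power of $\delta$.

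The fix is to let the truncation level depend on $\delta$, as the paper does. It chooses $K_\delta$ with $2^{-K_\delta/16}\le\delta/128$, which gives $\Prob(\Omega_j^c)\le\delta d^{-7}$. Since the summand is bounded by $\|\vx\|^2$, the bias is then at most $\delta d^{-7}\|\vx\|^2\le\delta\zeta\|\vx\|^2/16$. This only adds a $\delta$-dependent factor to the $\psi_1$ scale, which is absorbed into $C_{7,\delta}$.

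A smaller remark: your aside $\zeta\gtrsim r/d^2$ is false when $d-r$ is small; for example, $r=d-1$ gives $\zeta=1/(d(d+1))$. Your ratio computation does not depend on it, since it only uses $Kr/d\asymp\lambda\log d$.
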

	
	\begin{proof}
		The claim is immediate if $\vx=0$, so assume that $\vx\ne0$. Choose
		$K_\delta>0$ sufficiently large such that
		\begin{equation}\label{eq:choice-K-delta}
			2^{-K_\delta/16}\le\frac{\delta}{128},
		\end{equation}
		and define
		\[
		\Omega_j=
		\left\{\vx^*A_j\vx
		\le K_\delta\frac rd\log d\,\|\vx\|_2^2\right\},
		\qquad
		\Omega=\bigcap_{j=1}^N\Omega_j.
		\]
		By Lemma~\ref{lem:beta-d}, the random variable
		$Y_j=\vx^*A_j\vx/\|\vx\|_2^2$ has distribution
		$\operatorname{Beta}(r,d-r)$. Write $a=r/d$. Lemma~\ref{lem:beta-concentration} gives, for
		$u\ge 2$,
		\[
		\Prob(Y_j\ge ua)
		\le\Prob(|Y_j-a|\ge (u-1)a)\le\exp(-d(u-1)a/8) \le\exp(-dua/16)\le\exp(-ur/16),%u^r\left(\frac{1-ua}{1-a}\right)^{d-r}.
		\]
		here we use $2(u-1)\ge u$ in the second last inequality. Choose $u=K_\delta\log d$ and use \eqref{eq:choice-K-delta}, we have
		\begin{equation}\label{eq:truncation-tail}
			\Prob(\Omega_j^c)\le d^{-K_\delta r/16}.
		\end{equation}
		Because $r\ge1$ and $d\ge2$,
		\begin{align*}
			d^{-K_\delta/16}=d^{-7}d^{-(K_\delta/16-7)}
			\le d^{-7}2^{-(K_\delta/16-7)}
			\le\delta d^{-7},
		\end{align*}
		The union bound and \eqref{eq:truncation-tail} also give
		\begin{equation}\label{eq:truncation-event}
			\Prob(\Omega^c)
			\le Nd^{-K_\delta r/16}
			\le \delta Nd^{-7r}\le Nd^{-7r}.
		\end{equation}
		
		Let $\mathcal N$ be a $1/4$-net of the unit sphere in $\C^d$ with
		$|\mathcal N|\le9^{2d}$. We first prove \eqref{eq:eq1}. For
		$\vz\in\mathcal N$, set
		\[
		R_j(\vz)=(\vx^*A_j\vx)(\vz^*A_j\vz),
		\qquad
		\widetilde R_j(\vz)=R_j(\vz)\mathbf1_{\Omega_j}.
		\]
		Since $0\preceq A_j\preceq I_d$, we have
		$0\le R_j(\vz)\le\|\vx\|_2^2$. Hence, 
		\begin{equation}\label{eq:truncation-bias}
			\begin{aligned}
				0\le&\E R_j(\vz)-\E\widetilde R_j(\vz)
				=\E\bigl[R_j(\vz)\mathbf1_{\Omega_j^c}\bigr]
				\le\|\vx\|^2\Prob(\Omega_j^c)\le\|\vx\|^2d^{-K_\delta r/16}\\
				\le&\|\vx\|^2d^{-K_\delta /16}\le\delta d^{-7}\|\vx\|^2\le\frac{\delta\|\vx\|^2}{16d(d+1)}
				\le\frac{\delta\zeta}{16}\|\vx\|^2.
			\end{aligned}
		\end{equation}
		For every integer $p\ge1$, Lemma~\ref{lem:beta-d} gives $\E(\vz^*A_j\vz)^p
		\le p!\left(\frac rd\right)^p$, which yields
		\begin{equation*}\label{eq:truncated-moment}
			\E|\widetilde R_j(\vz)|^p
			\le p!\left(
			K_\delta\frac{r^2}{d^2}\log d\,\|\vx\|^2
			\right)^p.
		\end{equation*}
		Since $r^2/d^2\le\lambda$, it follows that
		\begin{equation}\label{eq:psi1-bound}
			\left\|\widetilde R_j(\vz)
			-\E\widetilde R_j(\vz)\right\|_{\psi_1}
			\le C K_\delta\lambda\log d\,\|\vx\|^2
		\end{equation}
		for an absolute constant $C>0$. Bernstein's inequality now gives
		\begin{align*}
			\Prob\left(
			\left|\frac1N\sum_{j=1}^N
			\bigl(\widetilde R_j(\vz)-\E\widetilde R_j(\vz)\bigr)\right|
			>\frac{\delta\zeta}{16}\|\vx\|^2
			\right)
			\le
			2\exp(-\hat c_{7,\delta} d),
			\label{eq:scalar-bernstein}
		\end{align*}
		under $N\ge C_{7,\delta} d
		\left(\frac{\lambda}{\zeta}\right)^2\log^2d$. 
		Taking a union
		bound over $\mathcal N$, we obtain, except on an event of probability
		at most $2\exp(-c_{7,\delta}d)$,
		\begin{equation}\label{eq:uniform-truncated-X}
			\max_{\vz\in\mathcal N}
			\left|\frac1N\sum_{j=1}^N
			\bigl(\widetilde R_j(\vz)-\E\widetilde R_j(\vz)\bigr)\right|
			\le\frac{\delta\zeta}{16}\|\vx\|^2
		\end{equation}
		for some $c_{7,\delta}>0$. 
		By \eqref{eq:truncation-event}, we have
		$R_j(\vz)=\widetilde R_j(\vz)$ for every $j$ with probability at least $1-Nd^{-7r}$.
		Hence, for every
		$\vz\in\mathcal N$,  \eqref{eq:truncation-bias}
		and \eqref{eq:uniform-truncated-X} imply	    	\begin{align*}
			&\max_{\vz\in\mathcal N}\Big|\frac1N\sum_{j=1}^NR_j(\vz)-\E R_j(\vz)\Big|=\max_{\vz\in\mathcal N}\Big|\frac1N\sum_{j=1}^N\widetilde R_j(\vz)-\E R_j(\vz)\Big|\\
			\le&
			\max_{\vz\in\mathcal N}\Big|\frac1N\sum_{j=1}^N
			\bigl(\widetilde R_j(\vz)-\E\widetilde R_j(\vz)\bigr)\Big|+
			\max_{\vz\in\mathcal N}\left|\E\widetilde R_j(\vz)-\E R_j(\vz)\right|
			\le\frac{\delta\zeta}{8}\|\vx\|^2,
		\end{align*}
		with probability at least $1-Nd^{-7r}-2\exp(-c_{7,\delta}d)$.  
		The matrix in \eqref{eq:eq1} is Hermitian, so the standard net
		estimate gives by 
		Lemma \ref{appendix_lem:lem5.4}
		\[
		\Big\|\frac1N\sum_{j=1}^N(\vx^*A_j\vx)A_j
		-\E\big[(\vx^*A_j\vx)A_j\big]\Big\|
		\le2\max_{\vz\in\mathcal N}
		\Big|\frac1N\sum_{j=1}^NR_j(\vz)-\E [R_j(\vz)]\Big|\le\frac{\delta\zeta}{4}\|\vx\|^2,
		\]
		which proves \eqref{eq:eq1}.
		
		For \eqref{eq:eq2}, set
		\[
		G_j(\vz)=|\vz^*A_j\vx|^2,
		\qquad
		\widetilde G_j(\vz)=G_j(\vz)\mathbf1_{\Omega_j}. 
		\]
		Since $A_j$ is an orthogonal projection,
		\[
		G_j(\vz)
		\le(\vz^*A_j\vz)(\vx^*A_j\vx)=R_j(\vz).
		\]
		Thus $\widetilde G_j(\vz)$ satisfies the same moment and centered
		$\psi_1$ bounds as $\widetilde R_j(\vz)$. Moreover,
		$0\le G_j(\vz)\le\|\vx\|^2$ and 
		\[
		0\le\E G_j(\vz)-\E\widetilde G_j(\vz)
		\le\frac{\delta\zeta}{16}\|\vx\|^2.
		\]
		The preceding Bernstein and Hermitian net argument proves
		\eqref{eq:eq2}, apart from the probability of the failure of $\Omega$ at most $Nd^{-7r}$.
		
		Finally, for any $\vu,\vv$, set
		\[
		Q_j(\vu,\vv)=(\vu^*A_j\vx)(\vv^*A_j\vx),
		\qquad
		\widetilde Q_j(\vu,\vv)
		=Q_j(\vu,\vv)\mathbf1_{\Omega_j}.
		\]
		Cauchy--Schwarz inequality gives
		\[
		|Q_j(\vu,\vv)|
		\le\frac12
		(\vu^*A_j\vu\vx^*A_j\vx+\vv^*A_j\vv\vx^*A_j\vx).
		\]
		Since $((a+b)/2)^p\le(a^p+b^p)/2$ for $a,b\ge0$, we again have
		\[
		\E|\widetilde Q_j(\vu,\vv)|^p
		\le p!\left(
		K_\delta\frac{r^2}{d^2}\log d\,\|\vx\|^2
		\right)^p.
		\]
		The corresponding centered $\psi_1$ bound is therefore the same as \eqref{eq:psi1-bound}.
		The proof is very similar to that of \eqref{eq:eq1}, except that we apply Bernstein's inequality separately to the real and imaginary parts of $\widetilde Q_j(\vu,\vv)-\mathbb E\widetilde Q_j(\vu,\vv)$.
		
		The event $\Omega$ is common to all three estimates and is therefore
		counted only once and 
		the total failure probability is at most $
		6\exp(-c_{7,\delta} d)+Nd^{-7r}$, 
		which proves \eqref{prob}.
	\end{proof}
	
	%########################################################################
	
	%########################################################################
	%########################################################################
	%########################################################################
	\begin{lemma} \label{lem:concentration1}
		Let $A_1,A_2,\ldots,A_N$ be i.i.d. rank-$r$ orthogonal projections in $\C^d$ drawn from the Haar measure. Define the following matrix
		$$
		H(\vx):=\frac{1}{N}\sum^N_{j=1}\begin{bmatrix}
			(A_j\vx) (A_j\vx)^* & A_j\vx( A_j\vx)^\top \\
			\overline{A_j\vx}( A_j\vx)^* & \overline{A_j\vx}( A_j\vx)^\top 
		\end{bmatrix}.
		$$ Then 
		$$
		\E\big(H(\vx)\big)
		=\begin{bmatrix}
			\zeta\|\vx\|^2I+(\lambda-\zeta)\vx\vx^* & \lambda\vx\vx^\top  \\
			\lambda\overline{\vx}\vx^* & \zeta\|\vx\|^2I+(\lambda-\zeta)\overline{\vx}\vx^\top 
		\end{bmatrix}.
		$$
		For any $\delta>0$, when $N$ satisfies \eqref{N} we have with the smallest probability shown in \eqref{prob}
		$$
		\|H(\vx)-\E( H(\vx))\|\le\delta\zeta \|\vx\|^2.
		$$
	\end{lemma}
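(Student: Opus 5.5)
The plan is to reduce both assertions to Lemma \ref{lem:expectations} and Theorem \ref{theo:concentration1-1}, treating $H(\vx)$ as a $2\times2$ block matrix. For the expectation, I compute each block by linearity. The $(1,1)$ block is $\E[(A\vx)(A\vx)^*]=\zeta\|\vx\|^2I+\mu\vx\vx^*$, and $\mu=\lambda-\zeta$. The $(1,2)$ block is $\E[A\vx(A\vx)^\top]=\lambda\vx\vx^\top$. The $(2,1)$ block is $\overline{A\vx}(A\vx)^*=\big(A\vx(A\vx)^\top\big)^*$, whose expectation is $\lambda\overline{\vx}\vx^*$. The $(2,2)$ block is the entrywise complex conjugate of the $(1,1)$ block; since $\zeta$ and $\lambda-\zeta$ are real, its expectation is $\zeta\|\vx\|^2I+(\lambda-\zeta)\overline{\vx}\vx^\top$. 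This gives the displayed formula.

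For the deviation bound, I write $H(\vx)-\E H(\vx)=\begin{bmatrix}\Delta_1&\Delta_2\\ \Delta_2^*&\overline{\Delta_1}\end{bmatrix}$, where
\[
\Delta_1=\frac1N\sum_{j}(A_j\vx)(A_j\vx)^*-\E\big[(A_j\vx)(A_j\vx)^*\big],\qquad
\Delta_2=\frac1N\sum_j A_j\vx(A_j\vx)^\top-\E\big[A_j\vx(A_j\vx)^\top\big].
\]
Here I use that the $(2,1)$ block is the conjugate transpose of the $(1,2)$ block, and the $(2,2)$ block is the entrywise conjugate of the $(1,1)$ block. The operator norm of a block matrix is at most the norm of the $2\times2$ matrix of block norms. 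Since $\|\overline{\Delta_1}\|=\|\Delta_1\|$ and $\|\Delta_2^*\|=\|\Delta_2\|$, this gives
\[
\|H(\vx)-\E H(\vx)\|\le\|\Delta_1\|+\|\Delta_2\|.
\]
Alternatively, I would write the diagonal part plus the off-diagonal part and bound each separately.

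Theorem \ref{theo:concentration1-1}, estimates \eqref{eq:eq2} and \eqref{eq:eq3}, gives $\|\Delta_1\|,\|\Delta_2\|\le\frac{\delta\zeta}{4}\|\vx\|^2$ simultaneously when $N$ satisfies \eqref{N}, with probability at least \eqref{prob}. The total is then $\frac{\delta\zeta}{2}\|\vx\|^2\le\delta\zeta\|\vx\|^2$.

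There is no serious obstacle. The only points needing care are the conjugation bookkeeping in the lower blocks and checking that the block-norm inequality applies. I would verify the latter directly by testing against a unit vector $(\vu;\vv)$ and using $\|\vu\|^2+\|\vv\|^2=1$ together with Cauchy--Schwarz.
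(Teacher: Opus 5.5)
Your proposal is correct and takes the same approach as the paper, whose proof simply cites Lemma \ref{lem:expectations} for the block expectations and Theorem \ref{theo:concentration1-1} (estimates \eqref{eq:eq2} and \eqref{eq:eq3}) for the deviation. Your block-norm step $\|H(\vx)-\E H(\vx)\|\le\|\Delta_1\|+\|\Delta_2\|$ supplies the detail the paper leaves implicit, and it even gives the sharper constant $\delta\zeta/2$.
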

	\begin{proof} 
		The expectation is directly from Lemma \ref{lem:expectations} and the concentration inequality is from Theorem 
		\ref{theo:concentration1-1}.
	\end{proof}
	
	%########################################################################
	\begin{lemma}\label{lem:concentration1-1}
		Under the assumptions of Lemma~\ref{lem:concentration1}, on the same event, every $\vh\in\C^d$ satisfies
		\[
		\left|\frac1N\sum_{j=1}^N\operatorname{Re}^2(\vh^*A_j\vx)-\E\operatorname{Re}^2(\vh^*A\vx)\right|\le\frac{\delta\zeta}{2}\|\vx\|^2\|\vh\|^2.
		\]
		Moreover,
		\[
		\E\operatorname{Re}^2(\vh^*A\vx)=\frac\zeta2\|\vx\|^2\|\vh\|^2+\frac\mu2|\vx^*\vh|^2+\frac\lambda2\operatorname{Re}((\vx^*\vh)^2).
		\]
		If $\operatorname{Im}(\vx^*\vh)=0$, this reduces to
		\[
		\E\operatorname{Re}^2(\vh^*A\vx)=\frac\zeta2\|\vx\|^2\|\vh\|^2+\left(\lambda-\frac\zeta2\right)\operatorname{Re}^2(\vx^*\vh).
		\]
	\end{lemma}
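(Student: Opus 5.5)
The plan is to write $\operatorname{Re}^2$ as a quadratic form in the augmented vector $[\vh;\overline{\vh}]$, so that the uniform bound follows directly from the operator-norm estimate in Lemma~\ref{lem:concentration1}. For any scalar $w$ we have $\operatorname{Re}^2 w=\frac14(w+\overline w)^2=\frac14\bigl(2|w|^2+w^2+\overline w^{\,2}\bigr)$. Take $w_j=\vh^*A_j\vx$ and set $\vv=[\vh;\overline{\vh}]\in\C^{2d}$. A direct computation gives
\[
\vv^*\begin{bmatrix}(A_j\vx)(A_j\vx)^* & A_j\vx(A_j\vx)^\top\\ \overline{A_j\vx}(A_j\vx)^* & \overline{A_j\vx}(A_j\vx)^\top\end{bmatrix}\vv
=|w_j|^2+\vh^*A_j\vx\,(A_j\vx)^\top\overline{\vh}+\vh^\top\overline{A_j\vx}(A_j\vx)^*\vh+|w_j|^2 .
\]
Since $(A_j\vx)^\top\overline{\vh}=\overline{\vh^*A_j\vx}$, and this is the conjugate of $w_j$ (not $w_j$), I need to track conjugations carefully. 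The cross terms are $\vh^*A_j\vx\cdot\overline{\vh^*\overline{A_j\vx}}$ type expressions. If they do not combine to $w_j^2+\overline{w_j}^{\,2}$ directly, I will instead use $\vv=[\vh;\,\overline{\vh}]$ versus $[\overline{\vh};\,\vh]$ and choose whichever ordering makes the off-diagonal blocks produce $w_j^2$ and $\overline{w_j}^{\,2}$. The off-diagonal block $A_j\vx(A_j\vx)^\top$ is symmetric, so the pairing is with $\overline{\vh}$ on the right, which gives $(\vh^*A_j\vx)^2=w_j^2$ after taking $\vv=[\vh;\overline{\vh}]$ in the appropriate slot. Either way the identity to establish is $\operatorname{Re}^2(w_j)=\frac14\vv^*M_j\vv$, where $M_j$ is the $j$-th summand of $H(\vx)$ and $\|\vv\|^2=2\|\vh\|^2$.

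Averaging over $j$ gives $\frac1N\sum_j\operatorname{Re}^2(\vh^*A_j\vx)=\frac14\vv^*H(\vx)\vv$, and taking expectations gives $\E\operatorname{Re}^2(\vh^*A\vx)=\frac14\vv^*\E H(\vx)\vv$. The difference is therefore at most $\frac14\|H(\vx)-\E H(\vx)\|\cdot 2\|\vh\|^2\le\frac{\delta\zeta}{2}\|\vx\|^2\|\vh\|^2$ by Lemma~\ref{lem:concentration1}. This bound holds on the same event and simultaneously for all $\vh$, because the operator-norm bound does not depend on $\vh$.

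For the expectation formula, I will plug the blocks of $\E H(\vx)$ from Lemma~\ref{lem:expectations} into the quadratic form. Each diagonal block contributes $\zeta\|\vx\|^2\|\vh\|^2+\mu|\vx^*\vh|^2$, using $\lambda-\zeta=\mu$. The two off-diagonal blocks contribute $\lambda(\vh^*\vx)^2+\lambda\overline{(\vh^*\vx)}^{\,2}=2\lambda\operatorname{Re}((\vx^*\vh)^2)$. Dividing the total by $4$ yields $\frac\zeta2\|\vx\|^2\|\vh\|^2+\frac\mu2|\vx^*\vh|^2+\frac\lambda2\operatorname{Re}((\vx^*\vh)^2)$. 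When $\operatorname{Im}(\vx^*\vh)=0$ we have $|\vx^*\vh|^2=\operatorname{Re}((\vx^*\vh)^2)=\operatorname{Re}^2(\vx^*\vh)$, so the last two terms combine into $\frac{\mu+\lambda}2=\lambda-\frac\zeta2$ times $\operatorname{Re}^2(\vx^*\vh)$, which is the reduced form.

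The only real obstacle is the bookkeeping of conjugates and transposes in the off-diagonal blocks. I will check it on a rank-one example, $A=\vu\vu^*$ with scalars, to confirm that the cross terms are $w^2+\overline w^{\,2}$ rather than $2|w|^2$.
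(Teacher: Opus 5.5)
Your proposal is correct and is essentially the paper's proof: the paper also sets $\mathbf w=[\vh;\overline{\vh}]$, writes $\frac1N\sum_j\re^2(\vh^*A_j\vx)=\frac14\mathbf w^*H(\vx)\mathbf w$, and bounds the deviation by $\frac14\|H(\vx)-\E H(\vx)\|\cdot 2\|\vh\|^2$. It then gets the expectation from Lemma~\ref{lem:expectations} via $\re^2Z=\frac12(|Z|^2+\re Z^2)$, which is the same computation as your block-by-block evaluation of $\E H(\vx)$.

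Your conjugation worry resolves with no reordering needed, but you made a slip: $(A_j\vx)^\top\overline{\vh}=\vh^*A_j\vx=w_j$, not $\overline{w_j}$. So the $j$-th summand is $\vb_j\vb_j^*$ with $\vb_j=[A_j\vx;\overline{A_j\vx}]$, $\vb_j^*\vv=2\re(\vx^*A_j\vh)$, and the cross terms are exactly $w_j^2+\overline{w_j}^{\,2}$.
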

	\begin{proof}
		Set $\mathbf w=[\vh^\top\ \overline\vh^\top]^\top$. Then $N^{-1}\sum_j\operatorname{Re}^2(\vh^*A_j\vx)=\mathbf w^*H(\vx)\mathbf w/4$, and hence
		\[
		\left|\frac1N\sum_j\operatorname{Re}^2(\vh^*A_j\vx)-\E\operatorname{Re}^2(\vh^*A\vx)\right|\le\frac14\|H(\vx)-\E H(\vx)\|\|\mathbf w\|^2,
		\]
		which gives the stated deviation bound. If $Z=\vh^*A\vx$, then $\operatorname{Re}^2Z=(|Z|^2+\operatorname{Re}Z^2)/2$. Lemma~\ref{lem:expectations} gives
		\[
		\E|Z|^2=\vh^*\E[A\vx(A\vx)^*]\vh=\zeta\|\vx\|^2\|\vh\|^2+\mu|\vx^*\vh|^2,\qquad \E Z^2=\vh^*\E[A\vx(A\vx)^\top]\overline{\vh}=\lambda(\vx^*\vh)^2,
		\]
		which proves the expectation formulas.
	\end{proof}
	%%----------------------------------------------------------------------------	
	\begin{lemma}\label{lem:initial_bound}
		Let $A$ be a rank-$r$ orthogonal projection in $\C^d$ distributed from the Haar measure, where $1\le r<d$. Let $\vx,\vh\in\C^d$ satisfy $
		\operatorname{Im}(\vx^*\vh)=0,\ \|\vh\|=1,
		$
		and define
		$
		\varepsilon_0=\sqrt{\frac{10\zeta}{27\lambda}},\ 
		\varepsilon_1=\sqrt{\frac{10}{27}}.
		$
		Then, for every $0\le s\le\varepsilon_0\|\vx\|$,
		\[
		2\E\!\left[\operatorname{Re}^2(\vh^*A\vx)\right]
		+3s\E\!\left[\operatorname{Re}(\vh^*A\vx)\,\vh^*A\vh\right]
		+\frac{9}{10}s^2\E\!\left[(\vh^*A\vh)^2\right]
		\ge\frac{\zeta}{2}\|\vx\|^2.
		\]
		Moreover, for every $0\le s\le\varepsilon_1\|\vx\|$,
		\begin{align*}
			2\E\!\left[\operatorname{Re}^2(\vh^*A\vx)\right]
			+3s\E\!\left[\operatorname{Re}(\vh^*A\vx)\,\vh^*A\vh\right]
			+&\left(1-\frac{\zeta}{10\lambda}\right)s^2\E\!\left[(\vh^*A\vh)^2\right]\\
            -&\mu\left(s^2+3s\operatorname{Re}(\vx^*\vh)
			+2\operatorname{Re}^2(\vx^*\vh)\right)
			\ge\frac{\zeta}{2}\|\vx\|^2.
		\end{align*}
	\end{lemma}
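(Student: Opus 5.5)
The plan is to reduce both inequalities to an elementary quadratic in the single real parameter $t:=\re(\vx^*\vh)=\vx^*\vh$. This parameter is real by the hypothesis $\im(\vx^*\vh)=0$. I would first compute the three expectations exactly using the moment identities already established, and then minimize over $t$.

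\textbf{Step 1 (second moments).} By Lemma~\ref{lem:concentration1-1} in the case $\im(\vx^*\vh)=0$, with $\|\vh\|=1$,
\[
\E\,\re^2(\vh^*A\vx)=\frac\zeta2\|\vx\|^2+\Big(\lambda-\frac\zeta2\Big)t^2 .
\]
By Lemma~\ref{lem:beta-d} with $p=2$, we have $\E(\vh^*A\vh)^2=\frac{r(r+1)}{d(d+1)}=\lambda$.

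\textbf{Step 2 (the mixed third-order term).} By Lemma~\ref{lem:expectations} applied with the vector $\vh$, $\E[(\vh^*A\vh)A]=\mu I_d+\zeta\vh\vh^*$. Hence
\[
\E\big[(\vh^*A\vh)\,\vh^*A\vx\big]=\vh^*(\mu I_d+\zeta\vh\vh^*)\vx=(\mu+\zeta)t=\lambda t .
\]
Since $\vh^*A\vh$ is real, taking real parts gives $\E[\re(\vh^*A\vx)\,\vh^*A\vh]=\lambda t$. I expect this to be the only step needing care. Everything else is algebra, and this step works cleanly because the cubic moment collapses to the already computed matrix $\E[(\vh^*A\vh)A]$.

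\textbf{Step 3 (first inequality).} Substituting the expectations, the left-hand side becomes
\[
\zeta\|\vx\|^2+(2\lambda-\zeta)t^2+3\lambda st+\tfrac{9}{10}\lambda s^2 .
\]
This is a convex quadratic in $t$, since $2\lambda-\zeta\ge\lambda>0$. Its minimum over $t\in\mathbb R$ is
\[
\zeta\|\vx\|^2+\tfrac{9}{10}\lambda s^2-\frac{9\lambda^2s^2}{4(2\lambda-\zeta)}\ \ge\ \zeta\|\vx\|^2-\tfrac{27}{20}\lambda s^2 .
\]
Using $s^2\le\varepsilon_0^2\|\vx\|^2=\frac{10\zeta}{27\lambda}\|\vx\|^2$, this is at least $\zeta\|\vx\|^2-\frac12\zeta\|\vx\|^2=\frac\zeta2\|\vx\|^2$.

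\textbf{Step 4 (second inequality).} Including the $\mu$-correction and using $\lambda=\mu+\zeta$, the coefficients simplify as follows:
\begin{itemize}
\item the $t^2$ coefficient is $2\lambda-\zeta-2\mu=\zeta$;
\item the $st$ coefficient is $3(\lambda-\mu)=3\zeta$;
\item the $s^2$ coefficient is $(1-\frac{\zeta}{10\lambda})\lambda-\mu=\frac{9}{10}\zeta$.
\end{itemize}
The whole expression therefore equals $\zeta\big(\|\vx\|^2+t^2+3st+\frac9{10}s^2\big)$. Minimizing over $t$ gives $\zeta(\|\vx\|^2-\frac{27}{20}s^2)$. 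This is at least $\frac\zeta2\|\vx\|^2$ whenever $s^2\le\varepsilon_1^2\|\vx\|^2=\frac{10}{27}\|\vx\|^2$, which completes the proof.
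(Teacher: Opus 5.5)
Your proposal is correct and follows essentially the same route as the paper. Both compute the three moments exactly; your derivation of the mixed term via $\E[(\vh^*A\vh)A]=\mu I_d+\zeta\vh\vh^*$ just spells out what the paper cites from Lemma~\ref{lem:expectations}. Both then reduce each inequality to a quadratic in the real scalar $\re(\vx^*\vh)$, bounded below by $\zeta\|\vx\|^2-\frac{27}{20}\lambda s^2$ and $\zeta(\|\vx\|^2-\frac{27}{20}s^2)$ respectively; the only cosmetic difference is that you minimize at the vertex, while the paper completes the square in $|a|$.
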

	
	\begin{proof}
		Set $a=\operatorname{Re}(\vx^*\vh)$. Since $\operatorname{Im}(\vx^*\vh)=0$, Lemma~\ref{lem:expectations} and \ref{lem:concentration1-1} give
		\[
		\E\!\left[\operatorname{Re}^2(\vh^*A\vx)\right]
		=\frac{\zeta}{2}\|\vx\|^2+\left(\lambda-\frac{\zeta}{2}\right)a^2, \qquad\E\!\left[\operatorname{Re}(\vh^*A\vx)\,\vh^*A\vh\right]=\lambda a,
		\qquad
		\E\!\left[(\vh^*A\vh)^2\right]=\lambda.
		\]
		Therefore,
		\begin{align*}
			&2\E\!\left[\operatorname{Re}^2(\vh^*A\vx)\right]
			+3s\E\!\left[\operatorname{Re}(\vh^*A\vx)\,\vh^*A\vh\right]
			+\frac{9}{10}s^2\E\!\left[(\vh^*A\vh)^2\right]\\
			=&\zeta\|\vx\|^2+(2\lambda-\zeta)a^2+3s\lambda a+\frac{9}{10}s^2\lambda\ge\zeta\|\vx\|^2+\lambda\left(a^2-3s|a|+\frac{9}{10}s^2\right)\\
			=&\zeta\|\vx\|^2+\lambda\left(|a|-\frac{3s}{2}\right)^2-\frac{27}{20}\lambda s^2\ge\zeta\|\vx\|^2-\frac{27}{20}\lambda s^2.
		\end{align*}
		Here we used $2\lambda-\zeta\ge\lambda$. If $s\le\varepsilon_0\|\vx\|$, then
		\[
		\frac{27}{20}\lambda s^2\le\frac{\zeta}{2}\|\vx\|^2,
		\]
		which proves the first estimate.
		
		For the second estimate, using $\zeta=\lambda-\mu$, we obtain
		\begin{align*}
			&2\E\!\left[\operatorname{Re}^2(\vh^*A\vx)\right]
			+3s\E\!\left[\operatorname{Re}(\vh^*A\vx)\,\vh^*A\vh\right]
			+\left(1-\frac{\zeta}{10\lambda}\right)s^2\E\!\left[(\vh^*A\vh)^2\right]-\mu\left(s^2+3sa+2a^2\right)\\
			=&\zeta\|\vx\|^2+\zeta a^2+3s\zeta a+\frac{9}{10}s^2\zeta\ge\zeta\|\vx\|^2+\zeta\left(a^2-3s|a|+\frac{9}{10}s^2\right)\\
			=&\zeta\|\vx\|^2+\zeta\left(|a|-\frac{3s}{2}\right)^2-\frac{27}{20}\zeta s^2\ge\zeta\|\vx\|^2-\frac{27}{20}\zeta s^2.
		\end{align*}
		If $s\le\varepsilon_1\|\vx\|$, then
		\[
		\frac{27}{20}\zeta s^2\le\frac{\zeta}{2}\|\vx\|^2,
		\]
		and the second estimate follows.
	\end{proof}
	%%-------------------------------------------------------------------
	\begin{lemma}\label{lem:estimate_diff}
		Let $A_1,\ldots,A_N$ be rank-$r$ orthogonal projections in $\C^d$, and define 
		\[
		p(\vu,s)=\frac1N\sum_{j=1}^N
		\left(\sqrt{\frac{9}{4\iota}}\operatorname{Re}(\vu^*A_j\vx)
		+\sqrt{\iota}s\,\vu^*A_j\vu\right)^2,\ \forall\iota\in(0,1].
		\]
		Let $s\ge0$ and let $\vu,\vv\in\C^d$ be unit vectors. If $
		\left\|\frac1N\sum_{j=1}^NA_j\right\|
		\le (1+\delta)\frac rd$ 
		for some $\delta>0$, then
		\[
		|p(\vu,s)-p(\vv,s)|
		\le
		2\left(\sqrt{\frac{9}{4\iota}}\|\vx\|+\sqrt{\iota}s\right)
		\left(\sqrt{\frac{9}{4\iota}}\|\vx\|+2\sqrt{\iota}s\right)
		(1+\delta)\frac rd\,\|\vu-\vv\|.
		\]
	\end{lemma}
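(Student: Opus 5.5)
We prove the bound with a difference-of-squares factorization, followed by a Cauchy--Schwarz estimate in which the only operator input is the assumed bound on $\|N^{-1}\sum_j A_j\|$.

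Write $\alpha=\sqrt{9/(4\iota)}$, $\beta=\sqrt{\iota}$, and set
\[
f_j(\vu)=\alpha\operatorname{Re}(\vu^*A_j\vx)+\beta s\,\vu^*A_j\vu .
\]
Then $p(\vu,s)=N^{-1}\sum_j f_j(\vu)^2$, and
\[
p(\vu,s)-p(\vv,s)=\frac1N\sum_{j}\bigl(f_j(\vu)-f_j(\vv)\bigr)\bigl(f_j(\vu)+f_j(\vv)\bigr).
\]
By Cauchy--Schwarz over $j$,
\[
|p(\vu,s)-p(\vv,s)|\le\Bigl(\frac1N\sum_j|f_j(\vu)-f_j(\vv)|^2\Bigr)^{1/2}\Bigl(\frac1N\sum_j|f_j(\vu)+f_j(\vv)|^2\Bigr)^{1/2}.
\]
Each of the two root-mean-square factors will be bounded through the quadratic form of $M:=N^{-1}\sum_jA_j$, using $\|M\|\le(1+\delta)r/d$.

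\textbf{Difference factor.} Put $\vw=\vu-\vv$. Then
\[
f_j(\vu)-f_j(\vv)=\alpha\operatorname{Re}(\vw^*A_j\vx)+\beta s\bigl(\vw^*A_j\vu+\vv^*A_j\vw\bigr),
\]
since $\vu^*A_j\vu-\vv^*A_j\vv=\vw^*A_j\vu+\vv^*A_j\vw$. Using $A_j=A_j^2$, each term has the form $\langle A_j\vw,A_j\vy\rangle$ with $\vy\in\{\vx,\vu,\vv\}$, so it is bounded by $\|A_j\vw\|\,\|A_j\vy\|$. Since $\|A_j\vy\|\le\|\vy\|$ and $\|\vu\|=\|\vv\|=1$,
\[
|f_j(\vu)-f_j(\vv)|\le\|A_j\vw\|\,(\alpha\|\vx\|+2\beta s).
\]
Averaging over $j$ gives
\[
\frac1N\sum_j|f_j(\vu)-f_j(\vv)|^2\le(\alpha\|\vx\|+2\beta s)^2\,\vw^*M\vw\le(\alpha\|\vx\|+2\beta s)^2(1+\delta)\frac rd\,\|\vw\|^2 .
\]

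\textbf{Sum factor.} Here we do not factor out $\|A_j\vx\|$, because $N^{-1}\sum_j\|A_j\vx\|^2=\vx^*M\vx$ is also controlled by $\|M\|$. Instead we bound $|f_j(\vu)|\le\|A_j\vu\|\,(\alpha\|\vx\|+\beta s)$, using $|\vu^*A_j\vx|\le\|A_j\vu\|\|\vx\|$ and $\vu^*A_j\vu=\|A_j\vu\|^2\le\|A_j\vu\|$. The same bound holds for $\vv$. Then
\[
\frac1N\sum_j|f_j(\vu)+f_j(\vv)|^2\le2(\alpha\|\vx\|+\beta s)^2\bigl(\vu^*M\vu+\vv^*M\vv\bigr)\le4(\alpha\|\vx\|+\beta s)^2(1+\delta)\frac rd .
\]

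\textbf{Conclusion.} Multiplying the square roots of the two factor bounds gives
\[
|p(\vu,s)-p(\vv,s)|\le2(\alpha\|\vx\|+\beta s)(\alpha\|\vx\|+2\beta s)(1+\delta)\frac rd\,\|\vu-\vv\|,
\]
which is exactly the stated bound, constant $2$ included.

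The one point needing care is the asymmetry between the two factors: the difference factor carries the coefficient $2\beta s$ (from expanding $\vu^*A_j\vu-\vv^*A_j\vv$ into two terms), while the sum factor carries only $\beta s$. Both factors must end up with the common $(1+\delta)r/d$ scaling, and this works only because every term is written as $\|A_j\vy\|$ times a norm bound, with the average of $\|A_j\vy\|^2$ then controlled by $\vy^*M\vy\le\|M\|\|\vy\|^2$.
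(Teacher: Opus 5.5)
Your proof is correct and yields exactly the stated constant. It follows essentially the paper's route: you factor $p(\vu,s)-p(\vv,s)$ as a difference of squares, bound each term through products like $\|A_j(\vu-\vv)\|\,\|A_j\vx\|$ or $\|A_j\vu\|\,\|\vx\|$, and control the averaged squared norms by the quadratic form of $N^{-1}\sum_jA_j$. The only difference is bookkeeping. The paper bounds the sum factor pointwise by $2(\sqrt{9/(4\iota)}\|\vx\|+\sqrt{\iota}s)$ and applies Cauchy--Schwarz inside the average of $|a_j-b_j|$. You instead apply Cauchy--Schwarz across $j$ to both factors, spending one $\|N^{-1}\sum_jA_j\|^{1/2}$ on each.
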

	
	\begin{proof}
		Set
		\[
		a_j=\sqrt{\frac{9}{4\iota}}\operatorname{Re}(\vu^*A_j\vx)
		+\sqrt{\iota}s\,\vu^*A_j\vu,\ 
		b_j=\sqrt{\frac{9}{4\iota}}\operatorname{Re}(\vv^*A_j\vx)
		+\sqrt{\iota}s\,\vv^*A_j\vv.
		\]
		Since $A_j$ is an orthogonal projection and $\vu,\vv$ are unit vectors,
		\[
		|a_j|,|b_j|
		\le \sqrt{\frac{9}{4\iota}}\|\vx\|+\sqrt{\iota}s.
		\]
		Therefore,
		\[
		|p(\vu,s)-p(\vv,s)|
		\le\frac{2}{N}\left(\sqrt{\frac{9}{4\iota}}\|\vx\|+\sqrt{\iota}s\right)
		\sum_{j=1}^N|a_j-b_j|.
		\]
		Moreover,
		\[
		|a_j-b_j|
		\le \sqrt{\frac{9}{4\iota}}\left|(\vu-\vv)^*A_j\vx\right|
		+\sqrt{\iota}s\left|\vu^*A_j\vu-\vv^*A_j\vv\right|.
		\]
		
		For the first term, the Cauchy--Schwarz inequality gives
		\begin{align*}
			\frac1N\sum_{j=1}^N\left|(\vu-\vv)^*A_j\vx\right|
			&\le
			\left(\frac1N\sum_{j=1}^N(\vu-\vv)^*A_j(\vu-\vv)\right)^{1/2}
			\left(\frac1N\sum_{j=1}^N\vx^*A_j\vx\right)^{1/2}\le
			\left\|\frac1N\sum_{j=1}^NA_j\right\|
			\|\vu-\vv\|\|\vx\|.
		\end{align*}
		Also,
		\[
		\vu^*A_j\vu-\vv^*A_j\vv
		=(\vu-\vv)^*A_j\vu+\vv^*A_j(\vu-\vv).
		\]
		Hence, applying the same argument twice,
		\[
		\frac1N\sum_{j=1}^N
		\left|\vu^*A_j\vu-\vv^*A_j\vv\right|
		\le
		2\left\|\frac1N\sum_{j=1}^NA_j\right\|\|\vu-\vv\|.
		\]
		Combining the above estimates yields
		\begin{align*}
			|p(\vu,s)-p(\vv,s)|
			&\le
			2\left(\sqrt{\frac{9}{4\iota}}\|\vx\|+\sqrt{\iota}s\right)
			\left(\sqrt{\frac{9}{4\iota}}\|\vx\|+2\sqrt{\iota}s\right)\cdot
			\left\|\frac1N\sum_{j=1}^NA_j\right\|\|\vu-\vv\|.
		\end{align*}
		The stated result follows from the assumed bound on $\|N^{-1}\sum_{j=1}^NA_j\|$.
	\end{proof}
	\begin{lemma}\label{lem:expectation_u}
		Let $A_1,\ldots,A_N$ be i.i.d rank-$r$ orthogonal projections in $\C^d$ drawn from the Haar measure, where $1\le r<d$.
		Fix $\vx\ne0$. For $\iota\in[9/10,1]$, $0\le s\le\varepsilon_1\|\vx\|$, and a fixed $\vh\in\C^d$ with $\|\vh\|=1$, define
		\begin{align*}
			U_j : &= \frac{9}{4\iota}\big(\re(\vh^*A_j\vx)\big)^2
			+3s\re(\vh^*A_j\vx)(\vh^*A_j\vh)+\iota s^2|\vh^*A_j\vh|^2,j=1,\dots,N .
		\end{align*} 
		We have when $N\geq Cd\left(\frac{\lambda}{\zeta}\right)^2\log^2d$
		\[
		\Prob\bigg(N\E U_j - \sum_{j=1}^{N}U_j\geq\frac{\zeta}{8}N\|\vx\|^2 \bigg)\leq e^{-3\gamma d\log^2d},
		\]
		where $C,\gamma>0$ are constants independent of $s$, $\iota$, and $\vh$ (and may be chosen uniformly on the displayed ranges).
	\end{lemma}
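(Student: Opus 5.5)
The plan is a one-sided (lower-tail) Bernstein bound for the i.i.d.\ sum $\sum_j U_j$, using only moment bounds on $U_j$. The key point is that we only need to control the lower deviation $N\E U_j-\sum_j U_j$. Each $U_j$ is a quadratic form in the two real numbers $a_j=\re(\vh^*A_j\vx)$ and $b_j=s\,\vh^*A_j\vh$:
\[
U_j=\tfrac{9}{4\iota}a_j^2+3a_jb_j+\iota b_j^2 .
\]
This form has discriminant $9-9=0$, so $U_j=\big(\sqrt{9/(4\iota)}\,a_j+\sqrt{\iota}\,b_j\big)^2\ge 0$. Hence $U_j$ is a nonnegative random variable. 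For nonnegative variables, the lower tail of the empirical mean is sub-Gaussian with variance proxy $\E U_j^2$; this is the standard one-sided Bernstein/Chernoff bound, using $e^{-\theta u}\le 1-\theta u+\theta^2u^2/2$ for $u\ge0$. Concretely,
\[
\Prob\Big(N\E U_j-\sum_j U_j\ge tN\Big)\le \exp\!\Big(-\frac{Nt^2}{2\E U_j^2}\Big).
\]
The advantage is that no truncation is needed, so the $Nd^{-7r}$ term does not appear. This matches the pure $e^{-3\gamma d\log^2 d}$ probability in the statement.

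The main step is to bound $\E U_j^2$ by a constant times $\lambda^2\|\vx\|^4$, uniformly in $\iota\in[9/10,1]$, $s\le\varepsilon_1\|\vx\|$ and unit $\vh$. By $(\alpha+\beta)^2\le 2\alpha^2+2\beta^2$,
\[
U_j\le \tfrac{9}{2\iota}a_j^2+2\iota b_j^2\le 5a_j^2+2b_j^2 .
\]
For the first term, Cauchy--Schwarz for the projection gives $a_j^2\le|\vh^*A_j\vx|^2\le(\vh^*A_j\vh)(\vx^*A_j\vx)$. Therefore
\[
\E a_j^4\le \E\big[(\vh^*A_j\vh)^2(\vx^*A_j\vx)^2\big]\le \big(\E(\vh^*A_j\vh)^4\,\E(\vx^*A_j\vx)^4\big)^{1/2}.
\]
By Lemma~\ref{lem:beta-d}, this is at most $\frac{(r)_4}{(d)_4}\|\vx\|^4\le C\lambda^2\|\vx\|^4$. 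Here I use $(r)_4/(d)_4\le \frac{r(r+1)}{d(d+1)}\cdot\frac{(r+2)(r+3)}{(d+2)(d+3)}\le C\lambda^2$, since $(r+2)(r+3)\le 6r(r+1)$. For the second term, $\E b_j^4=s^4(r)_4/(d)_4\le \varepsilon_1^4\|\vx\|^4C\lambda^2$. So $\E U_j^2\le C'\lambda^2\|\vx\|^4$ with $C'$ absolute.

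Plugging in $t=\zeta\|\vx\|^2/8$ gives the exponent $-cN\zeta^2/\lambda^2$. Under $N\ge Cd(\lambda/\zeta)^2\log^2 d$ this is at most $-cC\,d\log^2 d$. Choosing $C$ large relative to $\gamma$ yields $e^{-3\gamma d\log^2 d}$.

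The potential obstacle is justifying the one-sided inequality cleanly. For $\theta>0$,
\[
\E e^{-\theta(U-\E U)}\le e^{\theta\E U}\big(1-\theta\E U+\theta^2\E U^2/2\big)\le e^{\theta^2\E U^2/2}.
\]
Optimizing $\theta$ gives the bound above. Uniformity holds because every bound used depends only on $\|\vh\|=1$, $\iota\ge 9/10$ and $s\le\varepsilon_1\|\vx\|$. If one preferred a two-sided $\psi_1$ Bernstein route instead, the higher moments $\E U_j^p\le (Cp)^{2p}\lambda^p\ldots$ would only give sub-Weibull tails and force truncation. This is why the nonnegativity route is the one I would commit to.
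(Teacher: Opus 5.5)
Your proof is correct and follows the same route as the paper. Nonnegativity of $U_j$ makes $\E U_j-U_j$ bounded above, and the Beta moments of Lemma~\ref{lem:beta-d} give $\E U_j^2\lesssim\lambda^2\|\vx\|^4$ uniformly in $(s,\iota,\vh)$, so a one-sided sub-Gaussian lower-tail bound yields $\exp(-cN\zeta^2/\lambda^2)$. The differences are cosmetic: the paper invokes Bentkus' inequality (Lemma~\ref{lem:bentkus_one_sided}), which also needs an upper bound on $\E U_j$ to serve as the one-sided bound $b$, whereas your self-contained Chernoff step via $e^{-x}\le 1-x+x^2/2$ needs only $\E U_j^2$; and your bound $U_j\le 5a_j^2+2b_j^2$ replaces the paper's term-by-term Cauchy--Schwarz expansion of $\E U_j^2$.
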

	\begin{proof}

		Since $ A_j, j=1,\ldots,N $ are i.i.d., we have $\E U_1=\ldots=\E U_N$. Then we have for a fixed unit vector $ \vh $
		\begin{align*}
			u := \E U_j
			\leq& \frac{9}{4\iota}\sqrt{\E\big(|\vh^*A_j\vh|^2\big)\E\big(|\vx^*A_j\vx|^2\big)}
			+3s\sqrt{\E|\vh^*A_j\vx|^2\,\E|\vh^*A_j\vh|^2}
			+\iota s^2\E(|\vh^*A_j\vh|^2)\\
			\le&\left(\frac{9}{4\iota}\|\vh\|^2\|\vx\|^2+3s\|\vh\|^3\|\vx\|+\iota s^2\|\vh\|^4\right)\cdot \lambda:=b(t,\iota)\lambda\|\vx\|^2,
		\end{align*}
		where $t=s/\|\vx\|$ and $b(t,\iota)=\frac{9}{4\iota}\|\vh\|^2+3t\|\vh\|^3+\iota t^2\|\vh\|^4$. Now define $ X_j=u - U_j,j=1,\ldots,N $, we bound $ \E X_j ^2,j=1,\ldots,N$ using Holder's inequality. Since $ U_j\geq 0 $, we have $ X_j\leq u\leq b(t,\iota)\lambda\|\vx\|^2 $ and 
		\begin{align*}
			\E X_j^2\leq \E U_j^2=&\frac{81}{16\iota^2}\E\big(\re^4(\vh^*A_j\vx)\big)+\iota^2s^4\E\big(|\vh^*A_j\vh|^4\big)+\frac{27}{2}s^2\E\Big(\re^2(\vh^*A_j\vx)|\vh^*A_j\vh|^2\Big)\\
			&+\frac{27}{2\iota}s\E\Big(\re^3(\vh^*A_j\vx)|\vh^*A_j\vh|\Big)+6\iota s^3\E\Big(\re(\vh^*A_j\vx)|\vh^*A_j\vh|^3\Big)\\
			\le&\frac{81}{16\iota^2}\sqrt{\E\big(|\vh^*A_j\vh|^4\big)\E\big(|\vx^*A_j\vx|^4\big)}+\iota^2s^4\E\big(|\vh^*A_j^*\vh|^4\big)+\frac{27s^2}{2}\sqrt{\E\big(|\vh^*A_j^*\vh|^{6}\big)\E\big(|\vx^*A_j^*\vx|^2\big)}\\
			&+\frac{27}{2\iota}s\sqrt{\E\big(|\vh^*A_j\vh|^{5}\big)\E\big(|\vx^*A_j\vx|^3\big)}+6\iota s^3\sqrt{\E\big(|\vh^*A_j\vh|^{7}\big)\E\big(|\vx^*A_j\vx|\big)}  \\
			\leq& \bigg(\frac{81}{16\iota ^2}\cdot \Gamma(5)\|\vx\|^4+\iota ^2s^4\cdot \Gamma(5) +\frac{27s^2}{2}\cdot \sqrt{\Gamma(7)\cdot \Gamma(3)}\|\vx\|^2\\
			&+\frac{27}{2\iota }s\cdot \sqrt{ \Gamma(6)\cdot \Gamma(4)}\|\vx\|^3+6\iota  s^3\cdot  \sqrt{ \Gamma(8)\cdot \Gamma(2)}\|\vx\|\bigg) \cdot \left(\frac{r}{d}\right)^4\le C(t,\iota)\lambda^2\|\vx\|^4,
		\end{align*} 
		where the last inequality is from Lemma \ref{lem:beta-d} and $C(t,\iota)=\frac{81}{16\iota ^2}\cdot \Gamma(5)+\iota ^2t^4\cdot \Gamma(5) +\frac{27t^2}{2}\cdot \sqrt{\Gamma(7)\cdot \Gamma(3)}+\frac{27}{2\iota }t\cdot \sqrt{ \Gamma(6)\cdot \Gamma(4)}+6\iota  t^3\cdot  \sqrt{ \Gamma(8)\cdot \Gamma(2)}$ with $t=s/\|\vx\|$ and $(t,\iota)\in[0,\varepsilon_1]\times[9/10,1]$. 
		Set
		\[
		D_*:=\sup_{0\le t\le\varepsilon_1,\,9/10\le\iota\le1}\max\{b(t,\iota)^2,C(t,\iota)\}<\infty.
		\]
		Then
		\[
		X_j\le u\le b(t,\iota)\lambda\|\vx\|^2\le \sqrt{D_*}\lambda\|\vx\|^2.
		\qquad
		\E X_j^2\le D_*\|\vx\|^4\lambda^2.
		\]
		Choose $y=N\zeta\|\vx\|^2/8$ and put $\sigma^2\le ND_*\lambda^2\|\vx\|^4$.  Since $\zeta\le\lambda$, Lemma~\ref{lem:bentkus_one_sided} gives the explicit uniform estimate
		\[
		\Prob\!\left(Nu-\sum_{j=1}^NU_j\ge\frac{\zeta}{8}N\|\vx\|^2\right)
		\le\exp\!\left[-\frac{N}{128D_*
		}\left(\frac{\zeta}{\lambda}\right)^2\right].
		\]
		Thus, when $N\geq Cd\left(\frac{\lambda}{\zeta}\right)^2\log^2d$,
		\begin{align*}
			\Prob\bigg(Nu - \sum_{j=1}^{N}U_j\geq\frac{\zeta}{8}N\|\vx\|^2 \bigg)\leq e^{-3\gamma d\log^2d},
		\end{align*}
		where one may take any fixed $0<\gamma\le(384D_*
		)^{-1}$. 
	\end{proof}
	%#################################
	\begin{lemma}\label{lem:square_estiate}
		Let $A_1,\ldots,A_N$ be i.i.d rank-$r$ orthogonal projections in $\C^d$ drawn from the Haar measure, where $1\le r< d$. Let $\vx\ne0$, and assume $N\ge C d(\lambda/\zeta)^2\log^2d$  with an appropriate positive constant $C>0$. Then there exists a constant $\gamma>0$, with probability at least $1-\exp(-\gamma d\log^2d)$, the following inequalities hold 
		
		\begin{align}
			&\frac{1}{N}\sum^N_{j=1}\Bigg( \sqrt{\frac{5}{2}}\re(\vh^*A_j\vx)+\sqrt{\frac{9}{10}}|\vh^*A_j\vh|\Bigg)^2\ge\frac{\zeta}{4}\|\vx\|^2\|\vh\|^2+\frac{ \E\big(\re^2(\vh^*A_j\vx)\big)}{2},
			\ \forall \vh\in\calS_1,\label{curvature_condition1}\\
			&\frac{1}{N}\sum_{j=1}^{N}\Bigg( \frac{3}{2\sqrt{1-\frac{\zeta}{10\lambda}}}\re(\vh^*A_j\vx)+\sqrt{1-\frac{\zeta}{10\lambda}}|\vh^*A_j\vh|\Bigg)^2-\mu(\|\vh\|^4+3\re(\vx^*\vh)\|\vh\|^2+2(\re(\vx^*\vh))^2)\nonumber\\
			\ge&\left(\frac{9}{4\left(1-\frac{\zeta}{10\lambda}\right)}-2\right)\E\big(\re^2(\vh^*A_j\vx)\big)+\frac{\zeta}{4}\|\vx\|^2\|\vh\|^2,\ \forall \vh\in\calS_2,\label{curvature_condition2}
		\end{align}
		where $\calS_1=\{\vh:0<\|\vh\|\le\varepsilon_0\|\vx\|,\ \im(\vh^*\vx)=0\}$ and $\calS_2=\{\vh:\varepsilon_0\|\vx\|/2\le\|\vh\|\le\varepsilon_1\|\vx\|,\ \im(\vh^*\vx)=0\}$.
	\end{lemma}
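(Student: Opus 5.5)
Both inequalities are lower bounds on empirical averages of nonnegative quadratic forms, uniformly over a set of directions and radii. The plan is to combine the one-sided concentration of Lemma~\ref{lem:expectation_u} at fixed $(\vh,s,\iota)$ with the expectation bounds of Lemma~\ref{lem:initial_bound}, and then pass to a uniform statement via a net and the Lipschitz estimate of Lemma~\ref{lem:estimate_diff}.

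\textbf{Step 1: normalization.} Write $\vh=s\vu$ with $\|\vu\|=1$ and $s=\|\vh\|$; note $\im(\vu^*\vx)=0$. Since $\vh^*A_j\vh\ge0$, the absolute values can be dropped. Dividing both sides of \eqref{curvature_condition1} by $s^2$ gives
\[
\frac1N\sum_j\Big(\sqrt{\tfrac52}\re(\vu^*A_j\vx)+\sqrt{\tfrac9{10}}\,s\,\vu^*A_j\vu\Big)^2 .
\]
This is exactly $p(\vu,s)$ of Lemma~\ref{lem:estimate_diff} with $\iota=9/10$, since $\frac9{4\iota}=\frac52$. Expanding the square gives $\frac1N\sum_j U_j$, with $U_j$ as in Lemma~\ref{lem:expectation_u}. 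For \eqref{curvature_condition2}, take $\iota=1-\zeta/(10\lambda)\in[9/10,1]$; the $\mu$-term divided by $s^2$ is the deterministic quantity $\mu(s^2+3s\re(\vx^*\vu)+2\re^2(\vx^*\vu))$.

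\textbf{Step 2: expectation lower bound.} For the first inequality, decompose
\[
\E U_j=\Big(\tfrac52-2\Big)\E\re^2+\big(2\E\re^2+3s\E[\cdots]+\tfrac9{10}s^2\E[\cdots]\big).
\]
By Lemma~\ref{lem:initial_bound} the bracket is at least $\frac\zeta2\|\vx\|^2$, so $\E U_j\ge \frac12\E\re^2(\vu^*A\vx)+\frac\zeta2\|\vx\|^2$. The same decomposition with $\iota=1-\zeta/(10\lambda)$, combined with the second part of Lemma~\ref{lem:initial_bound} (which already absorbs the $\mu$-term), gives $\E U_j-\mu(\cdots)\ge(\frac9{4\iota}-2)\E\re^2+\frac\zeta2\|\vx\|^2$. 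In both cases there is slack $\frac\zeta4\|\vx\|^2$ relative to the targets.

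\textbf{Step 3: fixed-point concentration.} Lemma~\ref{lem:expectation_u} shows that at a fixed $(\vu,s,\iota)$ the average falls below its mean by at most $\frac\zeta8\|\vx\|^2$, except with probability $e^{-3\gamma d\log^2 d}$. On this event the fixed-point version of each inequality holds with a remaining margin of $\frac\zeta8\|\vx\|^2$.

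\textbf{Step 4: uniformity, the main obstacle.} Take an $\epsilon$-net of unit vectors $\vu$ with $\epsilon=c\,\zeta/\lambda$, restricted to the real-constraint set (or using a nearby point and tolerating a small imaginary part; the expectation formulas are continuous in $\vu$). Also take a grid of $s$ values with spacing $c\,\zeta\|\vx\|/\lambda$. The cardinality is at most $(C\lambda/\zeta)^{2d+1}\le e^{Cd\log d}$, because $\lambda/\zeta\le d$. This is dominated by $e^{3\gamma d\log^2 d}$, and a union bound leaves probability at least $1-e^{-\gamma d\log^2 d}$.

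To move off the net in $\vu$, use Lemma~\ref{lem:estimate_diff} on the event of Theorem~\ref{theo:concentration1} (with $\delta=1$). This gives $|p(\vu,s)-p(\vv,s)|\lesssim\|\vx\|^2\frac rd\,\epsilon$. Since $\frac rd\cdot\frac\zeta\lambda\lesssim\zeta$, choosing $c$ small makes this at most $\frac\zeta{32}\|\vx\|^2$. The deterministic terms $\E\re^2(\vu^*A\vx)$ and the $\mu$-term are Lipschitz in $\vu$ and $s$ with constants $\lambda\|\vx\|^2$ and $\mu\|\vx\|^2$. They move by at most $\frac\zeta{32}\|\vx\|^2$ by the same choice of mesh.

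Monotonicity and continuity of $p$ in $s$, bounded by $\frac rd$ times $\|\vx\|$ times the $s$-spacing, handle the grid in $s$. Collecting these errors consumes less than the $\frac\zeta8\|\vx\|^2$ margin. Multiplying back by $s^2=\|\vh\|^2$ then gives \eqref{curvature_condition1} and \eqref{curvature_condition2}.

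The delicate point is that the Lipschitz constants scale with $\lambda$ while the margin scales with $\zeta$. This mismatch forces a net of size $(\lambda/\zeta)^{O(d)}$, and that size is affordable only because the tail in Lemma~\ref{lem:expectation_u} is $e^{-d\log^2 d}$. This is exactly where the $\log^2 d$ in the sample size is needed.
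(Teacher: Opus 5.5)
Your outline follows the paper's proof step for step. You normalize $\vh=s\vu$, lower-bound the mean via Lemma~\ref{lem:initial_bound} with slack $\frac\zeta4\|\vx\|^2$, take the fixed-point lower tail from Lemma~\ref{lem:expectation_u}, and then use a net in $\vu$ transferred by Lemma~\ref{lem:estimate_diff} plus a grid in $s$. However, Step~4 does not close as written, because the mesh is calibrated to the wrong Lipschitz constant.

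Lemma~\ref{lem:estimate_diff} controls $p(\cdot,s)$ through $\|\frac1N\sum_jA_j\|\lesssim\frac rd$. Its Lipschitz constant is therefore of order $\frac rd\|\vx\|^2$, not $\lambda\|\vx\|^2$. Your claim $\frac rd\cdot\frac\zeta\lambda\lesssim\zeta$ is false, since $\frac{r}{d\lambda}=\frac{d+1}{r+1}$. With mesh $\epsilon=c\zeta/\lambda$, the off-net error is of order $c\,\zeta\|\vx\|^2\frac{d+1}{r+1}$. In the main regime $r=O(1)$ this is about $d$ times the remaining margin $\frac\zeta8\|\vx\|^2$, so the transfer step fails. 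Your $s$-grid has the same defect, because $|\partial_s p(\vu,s)|\lesssim\frac rd\|\vx\|$.

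The repair is the paper's scale. Take mesh $\asymp\zeta d/r$ in $\vu$ (the paper's $\eta=\zeta d\|\vx\|^2/(32r\chi)$) and spacing $\asymp\zeta d\|\vx\|/r$ in $s$. Then:
\begin{itemize}
\item The random part moves by $O(c\,\zeta\|\vx\|^2)$.
\item The deterministic terms ($\E\re^2(\vu^*A\vx)$ and the $\mu$-term) also move by only $O(c\,\zeta\|\vx\|^2)$, because $\lambda\frac dr=\frac{r+1}{d+1}\le1$ and $\mu\frac dr=\frac{dr-1}{d^2-1}<1$.
\item The net plus grid still has size $e^{O(d\log d)}$, because $\frac{r}{\zeta d}=\frac{d^2-1}{d-r}\le d^2$. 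This is dominated by the $e^{-3\gamma d\log^2 d}$ tail.
\end{itemize}
Your closing remark therefore misidentifies the bottleneck. The binding constant is $\frac rd$ coming from $\|\frac1N\sum_jA_j\|$, which exceeds $\lambda$ by the factor $\frac{d+1}{r+1}$.

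There are two smaller slips. First, $\lambda/\zeta=\frac{(r+1)(d-1)}{d-r}$ equals $d(d-1)$ at $r=d-1$, so your bound $\lambda/\zeta\le d$ is false. However, $\lambda/\zeta\le d^2$ holds, and that is all the counting needs. Second, Theorem~\ref{theo:concentration1} as stated gives the operator-norm event used in Lemma~\ref{lem:estimate_diff} only with probability $1-e^{-cd}$. That is weaker than the $1-e^{-\gamma d\log^2 d}$ claimed in the lemma. To match the claim, observe that its proof actually gives failure probability at most $2\exp(2d\log 9-c_{\mathrm B}Nr)$, which is $e^{-\Omega(d\log^2 d)}$ under the assumed $N$. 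The paper is equally silent on this point.
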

	
	\begin{proof}[Proof of Inequality \eqref{curvature_condition1}] 
		Set $s=\|\vh\|$. Equivalently, we only need to prove that for all  $ \vh $ satisfying $\im(\vh^*\vx)=0 $, $ \|\vh\|=1 $ and for all $ s $ with $ 0< s\le\varepsilon_0\|\vx\| $, we have 
		\begin{equation}\label{curvature_condition1-1}
			\frac{1}{N}\sum^N_{j=1}\left(\frac{5}{2}\big(\re(\vh^*A_j\vx)\big)^2
			+3s\re(\vh^*A_j\vx)(\vh^*A_j\vh)+\frac{9s^2}{10}|\vh^*A_j\vh|^2\right)\ge\frac{\zeta}{4}\|\vx\|^2+\frac{ \E\big(\re^2(\vh^*A_j\vx)\big)}{2}.
		\end{equation}  
		We first fix $0<s<\varepsilon_0\|\vx\|$ and $\vh\in\C^d$ satisfying $\im(\vh^*\vx)=0 $, $ \|\vh\|=1 $. Set 
		\begin{align*}
			U_j : =\Bigg( \sqrt{\frac{5}{2}}\re(\vh^*A_j\vx)+\sqrt{\frac{9}{10}}s|\vh^*A_j\vh|\Bigg)^2= \frac{5}{2}\big(\re(\vh^*A_j\vx)\big)^2
			+3s\re(\vh^*A_j\vx)(\vh^*A_j\vh)+\frac{9s^2}{10}|\vh^*A_j\vh|^2.
		\end{align*}                                                                                  
		Since $ A_j, j=1,\ldots,N $ are i.i.d., we have $\E U_1=\ldots=\E U_N:=u$. By Lemma \ref{lem:expectation_u} with $\iota=\frac{9}{10}$, we have 
		\[
		\Prob\bigg(Nu - \sum_{j=1}^{N}U_j\geq\frac{\zeta}{8}N\|\vx\|^2 \bigg)\leq e^{-3\gamma d\log^2d},
		\]
		Here $\gamma>0$ is the uniform constant from Lemma~\ref{lem:expectation_u}. 
		Therefore, with probability at least $ 1-\exp(-3\gamma d\log^2d) $, we have
		\begin{equation}
			\label{curvature_cond_fixh}
			\begin{split}
				&\frac{1}{N}\sum_{j=1}^{N}U_j\geq u- \frac{\zeta}{8}\|\vx\|^2\\
				=&\frac{1}{2}\E\big(\re^2(\vh^*A_j\vx)\big)+2\E(\re^2(\vh^*A_j\vx)) + 3s\E(\re(\vh^*A_j\vx)|\vh^*A_j\vh|)+\frac{9}{10}s^2\E(|\vh^*A_j\vh|^2)-\frac{\zeta}{8}\|\vx\|^2\\
				\geq& \frac{1}{2}\E\big(\re^2(\vh^*A_j\vx)\big)+\frac{\zeta}{2}\|\vx\|^2-\frac{\zeta}{8}\|\vx\|^2
				\geq	 \frac{1}{2}\E\big(\re^2(\vh^*A_j\vx)\big)+\frac{3\zeta}{8}\|\vx\|^2.
			\end{split}
		\end{equation}
		Here the second inequality comes from Lemma \ref{lem:initial_bound}, which requires $0<s\le\varepsilon_0\|\vx\|$. 
		
		\noindent
		Next, we consider a fixed $0< s\leq\varepsilon_0\|\vx\| $ and all unit vectors $\vh$ with $\im(\vh^*\vx)=0$. 
		Define $$p(\vu,s)=\frac{1}{N}\sum_{j=1}^{N}\Bigg( \sqrt{\frac{5}{2}}\re(\vu^*A_j\vx)+\sqrt{\frac{9}{10}}s|\vu^*A_j\vu|\Bigg)^2.$$ For any unit vectors $\vu,\vv$,  we use Lemma 
		\ref{lem:estimate_diff} with $\iota=\frac{9}{10}$ and get 
		\begin{align*}
			|p(\vu,s)-p(\vv,s)|
			\le& 2\left(\sqrt{\frac{5}{2}}\|\vx\|+\sqrt{\frac{9}{10}}s\right)\left( \sqrt{\frac{5}{2}}\|\vx\|+2\sqrt{\frac{9}{10}}s\right)(1+\delta)\frac{r}{d}\|\vu-\vv\|:=\chi\frac{r}{d}\|\vu-\vv\|.
		\end{align*}
		Thus for $\forall \vu,\vv\in\bbS^{d-1}$ with $ \|\vu-\vv\|\leq\eta:=\frac{\zeta d\|\vx\|^2}{32r\chi} $, we have
		\begin{equation}\label{covering_cond1}
			p(\vv,s)\ge p(\vu,s)-\frac{\zeta}{32}\|\vx\|^2.
		\end{equation}
		Set $\mathbb T_\vx:=\{\vu\in\C^d:\|\vu\|=1,\ \im(\vu^*\vx)=0\}\subseteq\bbS^{d-1}$ and choose an $\eta$-net $\calN_\eta\subset\mathbb T_\vx$ with $|\calN_\eta|\le(1+2/\eta)^{2d}$. %}
	For a fixed $ s\leq\varepsilon_0\|\vx\| $, we apply \eqref{curvature_cond_fixh} to all $ \vu\in\calN_\eta $ and get 
	\begin{equation}\label{covering_cond2}
		\begin{split}
			\Prob\,\Bigg(p(\vu,s)\geq  \frac{1}{2}\E\big(\re^2(\vu^*A_j\vx)\big)+\frac{3\zeta}{8}\|\vx\|^2\Bigg)\ge 1-|\calN_\eta|\exp(-3\gamma d\log^2d)
			\geq 1-\exp(-2\gamma d\log^2d),
		\end{split}
	\end{equation}
	where the last inequality holds when $N$ satisfies \eqref{N}. In addition, we have by Lemma \ref{lem:concentration1-1} and $\|\vu\|=\|\vh\|=1$
	\begin{align}\label{eq:diff-expectation}
		\frac{\E\big(\re^2(\vu^*A\vx)\big)-\E\big(\re^2(\vh^*A\vx)\big)}{2}
		=&\frac{\zeta}{4}\|\vx\|^2(\|\vu\|^2-\|\vh\|^2)+\frac{1}{2}\left(\lambda-\frac{\zeta}{2}\right)\re(\vx^*(\vu+\vh))\re(\vx^*(\vu-\vh))\nonumber\\
		\ge&-\left(\lambda-\frac{\zeta}{2}\right)\|\vx\|^2\|\vu-\vh\|\ge-\left(\lambda-\frac{\zeta}{2}\right)\|\vx\|^2\eta\ge-\frac{\zeta\|\vx\|^2}{32}.
	\end{align}
	For a fixed $s$, let $\vh\in\mathbb T_\vx$ and choose $\vu\in\calN_\eta$ with $\|\vh-\vu\|\le\eta$.  Equations~\eqref{covering_cond1}, \eqref{covering_cond2}, and \eqref{eq:diff-expectation} give
	\begin{align}\label{eq:all-h}
		p(\vh,s)
		&\ge p(\vu,s)-\frac{\zeta}{32}\|\vx\|^2\ge \frac12\E\re^2(\vu^*A\vx)+\frac{11\zeta}{32}\|\vx\|^2\ge \frac12\E\re^2(\vh^*A\vx)+\frac{10\zeta}{32}\|\vx\|^2
	\end{align}
	with probability at least $ 1-\exp(-2\gamma d\log^2d) $. 
	Next, we consider all $\vu\in\mathbb T_\vx$ and all $0\le s\le\varepsilon_0\|\vx\|$. Applying a similar covering number argument over $ 0< s\leq \varepsilon_0\|\vx\| $ and combining \eqref{eq:all-h}, we can further get for all $ 0< s\leq\varepsilon_0\|\vx\| $ and all $ \vh\in\mathbb T_\vx $  
	$$ 
	p(\vh,s)\geq  \frac{\E\big(\re^2(\vh^*A\vx)\big)}{2}+\frac{\zeta}{4} \|\vx\|^2
	$$ 
	holds with probability at least $ 1-\exp(-\gamma d\log^2d) $. 
\end{proof}
\begin{proof}[Proof of Inequality of \eqref{curvature_condition2}]
	The proof is very similar as that of the proof for Inequality \eqref{curvature_condition1}. Set $s=\|\vh\|$. Equivalently, we only need to prove that for all  $ \vh $ satisfying $\im(\vh^*\vx)=0 $, $ \|\vh\|=1 $ and for all $ s $ with $ \frac{\varepsilon_0}2\|\vx\|\le s\le\varepsilon_1\|\vx\| $, we have
	\begin{equation}\label{eq:curvature_condition1-1}
		\begin{split}
			&\frac{1}{N}\sum^N_{j=1}\left(\frac{9}{4\left(1-\frac{\zeta}{10\lambda}\right)}\big(\re(\vh^*A_j\vx)\big)^2
			+3s\re(\vh^*A_j\vx)(\vh^*A_j\vh)+\left(1-\frac{\zeta}{10\lambda}\right)s^2|\vh^*A_j\vh|^2\right)\\
			&-\mu(s^2\|\vh\|^2+3s\re(\vx^*\vh)\|\vh\|+2(\re(\vx^*\vh))^2)\ge\frac{\zeta}{4}\|\vx\|^2+\left(\frac{9}{4\left(1-\frac{\zeta}{10\lambda}\right)}-2\right) \E\big(\re^2(\vh^*A_j\vx)\big).
		\end{split}
	\end{equation}
	We first fix $\vh\in\C^d$ and  $\frac{\varepsilon_0}2\|\vx\|\le  s\le\varepsilon_1\|\vx\|$. Set 
	\begin{align*}
		U_j : &= \frac{9}{4\left(1-\frac{\zeta}{10\lambda}\right)}\big(\re(\vh^*A_j\vx)\big)^2
		+3s\re(\vh^*A_j\vx)(\vh^*A_j\vh)+\left(1-\frac{\zeta}{10\lambda}\right)s^2|\vh^*A_j\vh|^2.
	\end{align*}                                                                                
	Since $ A_j, j=1,\ldots,N $ are i.i.d., we have $\E U_1=\ldots=\E U_N:=u$.  By Lemma \ref{lem:expectation_u} with $\iota=1-\frac{\zeta}{10\lambda}$, we have 
	\begin{align*}
		\Prob\bigg(Nu - \sum_{j=1}^{N}U_j\geq\frac{\zeta}{8}N\|\vx\|^2 \bigg)\leq e^{-3\gamma d\log^2d}.
	\end{align*}
	Here $\gamma>0$ is the uniform constant provided by Lemma~\ref{lem:expectation_u}. 
	Therefore, with probability at least $ 1-\exp(-3\gamma d\log^2d) $, we have
	\begin{align}
		&\frac{1}{N}\sum_{j=1}^{N}\big(U_j-\mu(s^2\|\vh\|^2+3s\re(\vx^*\vh)+2(\re(\vx^*\vh))^2)\big)\nonumber\\
		\geq& (u-\mu(s^2\|\vh\|^2+3s\re(\vx^*\vh)+2(\re(\vx^*\vh))^2))- \frac{\zeta}{8}\|\vx\|^2\nonumber\\
		=&\left(\frac{9}{4\left(1-\frac{\zeta}{10\lambda}\right)}-2\right)\E\big(\re^2(\vh^*A_j\vx)\big)+2\E(\re^2(\vh^*A_j\vx)) + 3s\E(\re(\vh^*A_j\vx)|\vh^*A_j\vh|)\nonumber\\
		&\hspace{1.5cm}+\left(1-\frac{\zeta}{10\lambda}\right)s^2\E(|\vh^*A_j\vh|^2)-\mu(s^2\|\vh\|^2+3s\re(\vx^*\vh)+2(\re(\vx^*\vh))^2)-\frac{\zeta}{8}\|\vx\|^2\nonumber\\
		\geq& \left(\frac{9}{4\left(1-\frac{\zeta}{10\lambda}\right)}-2\right)\E\big(\re^2(\vh^*A_j\vx)\big)+\frac{\zeta}{2}\|\vx\|^2-\frac{\zeta}{8}\|\vx\|^2
		\geq	 \left(\frac{9}{4\left(1-\frac{\zeta}{10\lambda}\right)}-2\right)\E\big(\re^2(\vh^*A_j\vx)\big)+\frac{3\zeta}{8}\|\vx\|^2.\label{eq:fix-all}
	\end{align}
	Here the second inequality comes from Lemma \ref{lem:initial_bound}, which requires $0<s\le\varepsilon_1\|\vx\|$. 
	
	\noindent
	Next, we consider a fixed $s\in[\frac{\varepsilon_0}2\|\vx\|, \varepsilon_1\|\vx\|] $ and all unit vectors $\vh$ with $\im(\vh^*\vx)=0$. 
	Define 
	\begin{align*}
		p(\vu,s)=&\frac{1}{N}\sum_{j=1}^{N}\Bigg( \sqrt{\frac{9}{4\left(1-\frac{\zeta}{10\lambda}\right)}}\re(\vu^*A_j\vx)+\sqrt{1-\frac{\zeta}{10\lambda}}s|\vu^*A_j\vu|\Bigg)^2,\\ q(\vu,s)=&p(\vu,s)-\mu(s^2\|\vu\|^2+3s\re(\vx^*\vu)+2(\re(\vx^*\vu))^2).
	\end{align*} For any unit vectors $\vu,\vv$,  we use Lemma 
	\ref{lem:estimate_diff}  with $\iota=1-\frac{\zeta}{10\lambda}$ and get $
	|p(\vu,s)-p(\vv,s)|
	\le 2\kappa(1+\delta)\frac{r}{d}\|\vu-\vv\|,$
	where $\kappa=\left(\sqrt{\frac{9}{4\left(1-\frac{\zeta}{10\lambda}\right)}}\|\vx\|+\sqrt{1-\frac{\zeta}{10\lambda}}s\right)\left( \sqrt{\frac{9}{4\left(1-\frac{\zeta}{10\lambda}\right)}}\|\vx\|+2\sqrt{1-\frac{\zeta}{10\lambda}}s\right)$. Then
	\begin{align*}
		|q(\vu,s)-q(\vv,s)|
		\le& |p(\vu,s)-p(\vv,s)|+3s\mu|\re(\vx^*\vu)-\re(\vx^*\vv)|+2\mu|(\re(\vx^*\vu))^2-(\re(\vx^*\vv))^2|\\
		\le&
		\left[2\kappa(1+\delta)\frac{r}{d}+3s\mu\|\vx\|+4\|\vx\|^2\mu\right]\|\vu-\vv\|\\
		\le&
		\left[2\kappa(1+\delta)+3s\|\vx\|+4\|\vx\|^2\right]\frac{r}{d}\|\vu-\vv\|
		:=\chi\frac{r}{d}\|\vu-\vv\|.
	\end{align*}
	Thus for $\forall \vu,\vv\in\bbS^{d-1}$ with $ \|\vu-\vv\|\leq\eta:=\frac{\zeta d\|\vx\|^2}{32r\chi}$, we have
	\begin{equation}\label{covering_cond1_tilde}
		q(\vv,s)\ge q(\vu,s)-\frac{\zeta}{32}\|\vx\|^2.
	\end{equation}
	Set $\calN_\eta$ to be an $\eta$-net of $\mathbb T_\vx:=\{\vu\in\C^d:\|\vu\|=1,\ \im(\vu^*\vx)=0\}$. 
	For a fixed $\frac{\varepsilon_0}{2}\|\vx\|\le s\leq\varepsilon_1\|\vx\| $, we apply \eqref{eq:fix-all} to all $ \vu\in\calN_\eta $ and get 
	\begin{equation}\label{covering_cond2_tilde}
		\begin{split}
			\Prob\,\Bigg(q(\vu,s)\geq  \left(\frac{9}{4\left(1-\frac{\zeta}{10\lambda}\right)}-2\right)\E\big(\re^2(\vu^*A_j\vx)\big)+\frac{3\zeta}{8}\|\vx\|^2\Bigg)\ge& 1-|\calN_\eta|\exp(-3\gamma d\log^2d)\\
			\geq& 1-\exp(-2\gamma d\log^2d),
		\end{split}
	\end{equation}
	where the last inequality holds when $N$ satisfies \eqref{N}. In addition, we have by Lemma \ref{lem:concentration1-1}
	\begin{align}\label{eq:diff-expectation1}
		&\left(\frac{9}{4\left(1-\frac{\zeta}{10\lambda}\right)}-2\right)\left[\E\big(\re^2(\vu^*A\vx)\big)-\E\big(\re^2(\vh^*A\vx)\big)\right]\nonumber\\
		=&\left(\frac{9}{4\left(1-\frac{\zeta}{10\lambda}\right)}-2\right)\frac{\zeta}{2}\|\vx\|^2(\|\vu\|^2-\|\vh\|^2)+\left(\frac{9}{4\left(1-\frac{\zeta}{10\lambda}\right)}-2\right)\left(\lambda-\frac{\zeta}{2}\right)\re(\vx^*(\vu+\vh))\re(\vx^*(\vu-\vh))\nonumber\\
		\ge&-\left(\lambda-\frac{\zeta}{2}\right)\|\vx\|^2\|\vu-\vh\|\ge-\left(\lambda-\frac{\zeta}{2}\right)\|\vx\|^2\eta\ge-\frac{\zeta\|\vx\|^2}{32},
	\end{align}
	since $\|\vu\|=\|\vh\|=1$ and $0<\left(\frac{9}{4\left(1-\frac{\zeta}{10\lambda}\right)}-2\right)\le\frac{1}{2}$. 
	Set
	\[
	c_*:=\frac{9}{4(1-\zeta/(10\lambda))}-2.
	\]
	For a fixed $s$, choose $\vu\in\calN_\eta$ with $\|\vh-\vu\|\le\eta$.  Equations~\eqref{covering_cond1_tilde}, \eqref{covering_cond2_tilde}, and \eqref{eq:diff-expectation1} yield
	\begin{align}\label{eq:all-h-q}
		q(\vh,s)
		&\ge q(\vu,s)-\frac{\zeta}{32}\|\vx\|^2\ge c_*\E\re^2(\vu^*A\vx)+\frac{11\zeta}{32}\|\vx\|^2\ge c_*\E\re^2(\vh^*A\vx)+\frac{10\zeta}{32}\|\vx\|^2
	\end{align}
	with probability at least $1-\exp(-2\gamma d\log^2d)$. 
	
	\noindent
	Finally, we consider all $s\in[\frac{\varepsilon_0}{2}\|\vx\|,\varepsilon_1\|\vx\|]$ and all $\vh\in\mathbb T_{\vx}$. Applying a similar covering number argument over $ \frac{\varepsilon_0}{2}\|\vx\|\le s\leq \varepsilon_1\|\vx\| $ and combining \eqref{eq:all-h-q}, we can further get for all $ \frac{1}{2}\varepsilon_0\|\vx\|\le s\leq\varepsilon_1\|\vx\| $ and all $ \vh\in\mathbb T_{\vx} $, 
	$$ 
	q(\vh,s)\geq  \left(\frac{9}{4\left(1-\frac{\zeta}{10\lambda}\right)}-2\right)\E\big(\re^2(\vh^*A\vx)\big)+\frac{\zeta}{4} \|\vx\|^2
	$$ 
	holds with probability at least $ 1-\exp(-\gamma d\log^2d) $.
\end{proof}

%#########################################

\appendix
\section{Definitions and Lemmas}
\label{sec:appendix}
\begin{definition}[Nets and covering numbers,  {\cite[Definition~5.1]{vershynin2010introduction}}]
	Let $(X,d)$ be a metric space and $\eta>0$. A subset $\calN_\eta$ of $X$ is called an $\eta$-net of $X$ if every point $\vv\in X$ can be approximated to within $\eta$ by some point $\vu\in\calN_\eta$, i.e. so that $\|\vu-\vv\|\le\eta$. The minimal cardinality of an $\eta$-net of $X$, if finite, is denoted $|\calN_\eta|=\calN(X,\eta)$ and is called the covering number of $X$ at scale $\eta$.
\end{definition}
\begin{lemma}[Covering number of the complex sphere, {\cite[Lemma~14]{kopel2020random}}]
	\label{appendix_lem:lem5.2}
	For any $\eta>0$, the unit sphere of $\C^d$ contains an $\eta$-net $\calN_\eta$ whose cardinality satisfies
	\[|\calN_\eta|\le\left(1+\frac{2}{\eta}\right)^{2d}.\]
\end{lemma}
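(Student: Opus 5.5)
We aim to bound the covering number of the unit sphere of $\C^d$ by identifying it with the real unit sphere $\mathbb S^{2d-1}\subset\mathbb R^{2d}$ and then running the standard volumetric argument. The map $\vz\mapsto(\re\vz,\im\vz)$ is a real-linear isometry from $\C^d$ (Euclidean norm) onto $\mathbb R^{2d}$, since $\|\vz\|^2=\|\re\vz\|^2+\|\im\vz\|^2$. It carries the complex unit sphere bijectively onto $\mathbb S^{2d-1}$ and preserves distances. So an $\eta$-net of $\mathbb S^{2d-1}$ pulls back to an $\eta$-net of the complex sphere of the same cardinality, and it suffices to show $\calN(\mathbb S^{n-1},\eta)\le(1+2/\eta)^{n}$ with $n=2d$.

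For the real bound we use a maximal $\eta$-separated set. First, if $\eta\ge2$, any single point is already an $\eta$-net, and $1\le(1+2/\eta)^n$, so assume $\eta<2$ if convenient; the argument works regardless. Choose $\calN\subset\mathbb S^{n-1}$ maximal with respect to inclusion among subsets whose points are pairwise at distance $>\eta$. Such a set is finite, which follows from the volume bound below (or from compactness), so a maximal one exists, for instance by greedy selection. Maximality gives that every point of the sphere lies within distance $\eta$ of some point of $\calN$, hence $\calN$ is an $\eta$-net.

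It remains to count. The open balls $B(\vu,\eta/2)$, $\vu\in\calN$, are pairwise disjoint and are all contained in the ball $B(0,1+\eta/2)$. Comparing Lebesgue volumes in $\mathbb R^n$ gives
\[
|\calN|\,(\eta/2)^n\le(1+\eta/2)^n,
\]
that is, $|\calN|\le(1+2/\eta)^n$. Taking $n=2d$ and transferring back through the isometry yields the claim.

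There is no real obstacle here. The only points needing care are that the identification $\C^d\cong\mathbb R^{2d}$ is an isometry, so the exponent is $2d$ rather than $d$, and the finiteness needed for maximality, which the disjoint-balls volume bound itself supplies: any $\eta$-separated set obeys the same count.
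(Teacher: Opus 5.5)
Your proof is correct and is the standard argument. The paper gives no proof of its own and only cites the result, which is the volumetric bound $\calN(\mathbb S^{n-1},\eta)\le(1+2/\eta)^n$ (maximal $\eta$-separated set, disjoint balls of radius $\eta/2$ inside the ball of radius $1+\eta/2$) applied with $n=2d$ through the isometry $\C^d\cong\mathbb R^{2d}$, exactly as you do.

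A small bonus of your maximal-separated-set formulation: run on $\mathbb T_{\vx}=\{\vu:\|\vu\|=1,\ \im(\vu^*\vx)=0\}$, it produces a net lying inside $\mathbb T_{\vx}$ with the same cardinality bound. That is what the paper actually uses in the proof of Lemma~\ref{lem:square_estiate}.
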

\begin{lemma}[Computing the spectral norm on a net, {\cite[ Lemma~5.4]{vershynin2010introduction}}]
	\label{appendix_lem:lem5.4}
	Let $M\in\C^{d\times d}$ be Hermitian, and let $\calN_\eta$ be an $\eta$-net of the unit sphere of $\C^d$ for some $\eta\in[0,1/2)$.
	Then
	\[\|M\|=\sup_{\|\vx\|=1}|\vx^*M\vx|\le(1-2\eta)^{-1}\sup_{\vx\in\calN_\eta}|\vx^*M\vx|.\]
\end{lemma}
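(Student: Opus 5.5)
The statement is the standard net lemma for Hermitian matrices, and I would prove it by a direct perturbation argument. The equality $\|M\|=\sup_{\|\vx\|=1}|\vx^*M\vx|$ comes first. Since $M$ is Hermitian, it is unitarily diagonalizable with real eigenvalues $\lambda_1,\ldots,\lambda_d$. Hence $\|M\|=\max_i|\lambda_i|$. Writing $\vx$ in the eigenbasis, $\vx^*M\vx=\sum_i\lambda_i|c_i|^2$ with $\sum_i|c_i|^2=1$, so $|\vx^*M\vx|\le\max_i|\lambda_i|$, and equality is attained at a unit eigenvector of the eigenvalue of largest modulus.

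For the inequality, I would fix a unit vector $\vx$ attaining $|\vx^*M\vx|=\|M\|$ (a unit eigenvector as above) and pick $\vu\in\calN_\eta$ with $\|\vx-\vu\|\le\eta$. Then I would write
\[
\vx^*M\vx-\vu^*M\vu=(\vx-\vu)^*M\vx+\vu^*M(\vx-\vu).
\]
Each term on the right is bounded by $\|M\|\,\|\vx-\vu\|\le\eta\|M\|$, using $\|\vx\|=\|\vu\|=1$. This gives
\[
|\vx^*M\vx-\vu^*M\vu|\le 2\eta\|M\|.
\]
Therefore
\[
\sup_{\vv\in\calN_\eta}|\vv^*M\vv|\ge|\vu^*M\vu|\ge\|M\|-2\eta\|M\|=(1-2\eta)\|M\|.
\]
Because $\eta<1/2$, the factor $1-2\eta$ is positive, so dividing by it yields the claim. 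The case $\eta=0$ is included, since then $\vu=\vx$.

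No step is a real obstacle; the lemma is elementary. The only point needing care is that the net consists of unit vectors, so that $\|\vu\|=1$ is available in the second term of the splitting. This holds because $\calN_\eta$ is a subset of the sphere by definition. If the net were allowed to contain vectors off the sphere, one would instead use $\|\vu\|\le1+\eta$ and obtain a slightly worse constant.
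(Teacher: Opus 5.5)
Your proof is correct and is essentially the standard argument from the cited source (Vershynin, Lemma~5.4), which the paper invokes without reproducing: pick a unit vector attaining $\|M\|=|\vx^*M\vx|$, approximate it by a net point $\vu$, and use the splitting $\vx^*M\vx-\vu^*M\vu=(\vx-\vu)^*M\vx+\vu^*M(\vx-\vu)$ to get $|\vu^*M\vu|\ge(1-2\eta)\|M\|$. Your remark that $\|\vu\|=1$ requires the net to lie on the sphere is also consistent with the paper's definition of an $\eta$-net as a subset of $X$.
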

%########################################################################
%########################################################################
%########################################################################
\begin{lemma}[Bentkus' one-sided tail bound
	{\cite[Theorem~1]{bentkus2003inequality}}]
	\label{lem:bentkus_one_sided}
	Let $X_1,\ldots,X_N$ be independent and identically
	distributed real-valued random variables satisfying
	\[
	\E X_j=0,\qquad X_j\le b\quad\text{almost surely},\qquad
	\E X_j^2=v^2,
	\]
	where $b\ge0$. Define
	\[	\sigma^2:=N\max\{b^2,v^2\}.\]
	Let $\Phi$ denote the distribution function of a standard normal random variable, and set
	$ c_0:=	\frac{1}{1-\Phi(\sqrt3)}<25.$
	Then, for every $y\ge0$,
	\[
	\Prob\left(\sum_{j=1}^N X_j\ge y\right)\le\min\left\{
	\exp\left(-\frac{y^2}{2\sigma^2}\right),\;	c_0\left[
	1-\Phi\left(\frac{y}{\sigma}\right)	\right]\right\}.
	\]
\end{lemma}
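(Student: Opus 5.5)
The plan is to reduce each summand to an extremal two-point variable, and then read off the two bounds separately: a Chernoff estimate for the exponential term, and a moment-comparison argument for the Gaussian-tail term.

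\textbf{Reduction to the case $b\ge v$.} If $b<v$, then $X_j\le b\le v$ almost surely. So we may replace $b$ by $\max\{b,v\}$ without changing $\sigma^2$. Hence we assume $b\ge v$ and write $s:=\max\{b,v\}=b$.

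\textbf{Extremal comparison.} For a mean-zero $X\le b$ with $\E X^2=v^2$, I will use the classical Hoeffding--Bennett extremal fact. For every convex increasing $f$ of the form $f(x)=e^{hx}$ ($h>0$) or $f(x)=(x-t)_+^2$, one has $\E f(X)\le \E f(\varepsilon)$. Here $\varepsilon$ is the two-point variable with values $b$ and $-v^2/b$ and probabilities $v^2/(b^2+v^2)$ and $b^2/(b^2+v^2)$, so that $\E\varepsilon=0$ and $\E\varepsilon^2=v^2$.

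The proof of this fact goes through the quadratic interpolation of $f$ through the points $-v^2/b$ and $b$. Take the parabola $q(x)=\alpha+\beta x+\gamma x^2$ touching $f$ at $-v^2/b$ and agreeing with $f$ at $b$. Convexity of $f'$ gives $f\le q$ on $(-\infty,b]$, and $\E q(X)=\E q(\varepsilon)$ because the two variables share their first two moments.

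\textbf{Exponential bound.} Since $\varepsilon$ lives on the interval $[-v^2/b,\,b]$, whose length is at most $2b$ (using $v\le b$), Hoeffding's lemma gives
\[
\E e^{h\varepsilon}\le e^{h^2b^2/2}.
\]
By independence, $\E e^{h\sum_j X_j}\le e^{h^2\sigma^2/2}$. Optimizing the Chernoff bound at $h=y/\sigma^2$ yields $\Prob(\sum_j X_j\ge y)\le \exp(-y^2/(2\sigma^2))$.

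\textbf{Gaussian-tail bound.} This is the main obstacle and is the content of Bentkus' Theorem~1. I would follow his route in two steps.

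First, by the comparison for $(x-t)_+^2$ applied coordinatewise (conditioning on the other summands), obtain
\[
\E\Bigl(\sum_j X_j-t\Bigr)_+^2\le \E\Bigl(\sum_j\varepsilon_j-t\Bigr)_+^2 .
\]
Next, dominate $\sum_j\varepsilon_j$ in the same $(\cdot-t)_+^2$ order by $\sigma Z$ with $Z$ standard normal. For this I would compare each $\varepsilon_j$ with a symmetric $\pm s$ variable, and then use the fact that Rademacher sums are dominated by Gaussians in this order, which follows from a Lindeberg-type replacement that uses convexity of $f''$ for $f=(x-t)_+^3$.

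Second, convert the moment domination into a tail bound via
\[
\Prob(S\ge y)\le \inf_{t<y}\frac{\E(S-t)_+^2}{(y-t)^2}.
\]
Bentkus' computation of the right-hand side for $S=\sigma Z$ gives at most $c_0\bigl(1-\Phi(y/\sigma)\bigr)$, with the constant $c_0=1/(1-\Phi(\sqrt3))$ arising from the optimal choice of $t$ in the regime $y/\sigma\ge\sqrt3$; for $y/\sigma<\sqrt3$ the bound exceeds $1$ and is trivial.

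Since this last part is a delicate calculus estimate, in the paper I would simply cite \cite{bentkus2003inequality} for it, and prove in full only the exponential part, which is the part actually used in Lemma~\ref{lem:expectation_u}.
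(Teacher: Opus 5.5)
The paper gives no argument for this lemma. It simply quotes Bentkus' Theorem~1, and downstream (Lemma~\ref{lem:expectation_u}) it uses only the exponential half $\exp(-y^2/(2\sigma^2))$. Your self-contained proof of that half is correct, and it is a genuinely more elementary route than the citation:
\begin{itemize}
\item The reduction to $b\ge v$ is harmless.
\item The tangent-parabola argument does give $f\le q$ on $(-\infty,b]$ whenever $f'$ is convex. The difference $q-f$ has a concave derivative vanishing at $-v^2/b$ and at a Rolle point in $(-v^2/b,b)$, so $q-f$ decreases, then increases, then decreases back to $0$ at $b$.
\item Applying Hoeffding's lemma to the exact two-point variable, whose support has length $b+v^2/b\le 2b$, is exactly what is needed. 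Bennett's closed form $\exp\bigl(\frac{v^2}{b^2}(e^{hb}-1-hb)\bigr)$ is not dominated by $e^{h^2b^2/2}$ for large $h$, so it would not suffice.
\end{itemize}
Your route makes the paper self-contained for everything it actually uses. The citation, in exchange, supplies the Gaussian-tail term and the martingale generality at no cost.

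Your sketch of the Gaussian half, however, has a step that fails as written. $\sum_j\varepsilon_j$ is \emph{not} dominated by $\sigma Z$ in the $(\cdot-t)_+^2$ order, and neither are Rademacher sums. Take $N=1$ and $v=b=1$, so that $\varepsilon_1$ is a random sign $\eta$ and $\sigma=1$. At $t=-\frac12$ you get $\E(\eta+\frac12)_+^2=\frac98$, while $\E(Z+\frac12)_+^2=\frac54\Phi(\frac12)+\frac{e^{-1/8}}{2\sqrt{2\pi}}\approx 1.04$. Structurally, because $(x-t)_+^2+(t-x)_+^2=(x-t)^2$, domination for every $t$ between two symmetric laws with equal variance would force the reverse inequality too, hence equal laws.

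The justification you invoke, convexity of $f''$, is a property of $(x-t)_+^3$, not of $(x-t)_+^2$, whose second derivative jumps. The fix is to run the whole chain in the cubic class $f_t(x)=(x-t)_+^3$:
\begin{itemize}
\item $f_t'$ is convex, so your parabola comparison replaces each $X_j$ by $\varepsilon_j$ (coordinatewise, by independence).
\item $f_t$ is convex and $\varepsilon_j$ is mean-zero on $[-b,b]$, so $\varepsilon_j$ can be replaced by $b\eta_j$.
\item $f_t''$ is convex, so the Lindeberg step replaces $b\eta_j$ by $bZ_j$.
\item Finally, $\Prob(S\ge y)\le\inf_{t<y}\E(\sigma Z-t)_+^3/(y-t)^3$.
\end{itemize}
The remaining Gaussian calculus is then the only piece to cite. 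To my recollection, Pinelis' comparison theorem shows this route yields the constant $2e^3/9$, already smaller than $c_0$. Since you cite Bentkus for this half anyway, the error does not affect what your proposal actually proves, but the outline should be stated in the cubic order.
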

%########################################################################

%======================================
{
	\bibliographystyle{plainnat}
	\bibliography{fusion}
}
%======================================
\end{document}